\newcommand{\tname}[1]{\textsc{#1}\xspace}
\newcommand{\ectalibrary}{\tname{ecta}}
\newcommand{\hplus}{\tname{Hoogle+}}
\newcommand{\tool}{\tname{Hectare}}
\newcommand{\hpp}{\tool}
\newcommand{\stackoverflow}{\tname{StackOverflow}}
\newcommand{\vs}{\emph{vs.}\xspace}
\newcommand{\ie}{\emph{i.e.}\xspace}
\newcommand{\eg}{\emph{e.g.}\xspace}
\newcommand{\circled}[1]{\raisebox{.5pt}{\textcircled{\raisebox{-.9pt} {#1}}}}
\newcommand{\staten}[1]{{\sffamily\itshape  #1}}
\newcommand{\emphbf}[1]{\emph{\textbf{#1}\xspace}}
\newcommand{\mypara}[1]{\smallskip\noindent\emphbf{#1.}\xspace}
\newcommand{\secref}[1]{\S \ref{#1}}
\newcommand{\appref}[1]{App. \ref{#1}}
\newcommand{\figref}[1]{Fig. \ref{#1}}
\newtheorem{remark}{Remark}
\newenvironment{trivproof}{\comment}{\endcomment}
\newcommand{\nat}{\mathbb{N}}
\newcommand{\many}[1]{\overline{#1}}
\newcommand{\bigmid}{\mathrel{\Big|}}
\newcommand{\restrict}[2]{#1 \big|_{#2}}
\newcommand{\s}[1]{\mathsf{#1}}
\newcommand{\twodots}{\mathinner {\ldotp \ldotp}}
\newcommand{\set}[1]{\ensuremath{\textsf{#1}}}
\newcommand{\partialfn}{\rightharpoonup}
\newcommand{\denotation}[1]{\llbracket #1 \rrbracket}
\newcommand{\dennode}[1]{\denotation{#1}^N}
\newcommand{\denedge}[1]{\denotation{#1}^E}
\newcommand{\denstate}[1]{\denotation{#1}^S}
\newcommand{\denpt}[1]{\denotation{#1}^{PT}}
\newcommand{\hastype}{\mathrel{:}}
\newcommand{\subst}[3]{[#1 / #2]#3}
\newcommand{\bigsubst}[3]{\left[#1 \middle/ #2\right]#3}
\renewcommand{\P}[1]{\mathbb{P}(#1)}
\newcommand{\closedtermsof}[1]{\mathcal{T}(#1)}
\newcommand{\termsofunode}[1]{\mathbb{T}(#1)}
\newcommand{\at}[2]{#1 |_{#2}}
\newcommand{\arity}{\ensuremath{\mathsf{arity}}}
\newcommand{\dom}{\ensuremath{\mathsf{dom}}}
\newcommand{\closure}{\ensuremath{\mathsf{cl}}}
\newcommand{\PEC}{\set{PEC}}
\newcommand{\pecsat}[2]{#2 \models #1}
\newcommand{\nodesymbol}{\boldsymbol{\mathtt{U}}}
\newcommand{\edgesymbol}{\boldsymbol{\Pi}}
\newcommand{\node}[1]{\nodesymbol(#1)}
\newcommand{\edge}[3]{\edgesymbol(#1, #2, #3)}
\newcommand{\edgeuc}[2]{\edgesymbol(#1, #2)}
\newcommand{\edgeucnull}[1]{\edgesymbol(#1)}
\newcommand{\edgebot}{\ensuremath{\edgesymbol_\bot}}
\newcommand{\nodebot}{\ensuremath{\nodesymbol_\bot}}
\newcommand{\recnode}[2]{\mu #1 . #2}
\newcommand{\unfold}[1]{\mathsf{unfold}(#1)}
\newcommand{\skeleton}[1]{\ensuremath{\mathsf{sk}(#1)}}
\newcommand{\nodesat}{\mathsf{nodes}}
\newcommand{\reduce}{\mathsf{reduce}}
\newcommand{\union}{\sqcup}
\newcommand{\bigunion}{\bigsqcup}
\newcommand{\intersect}{\sqcap}
\newcommand{\bigintersect}{\bigsqcap}
\newcommand{\recintersect}[3]{#1 \mathrel{\underset{#3}{\intersect}} #2}
\newcommand{\recintersectinner}[3]{#1 \mathrel{\underset{#3}{\tilde{\intersect}}} #2}
\newcommand{\ctx}{\mathcal{C}}
\newcommand{\fragment}[2]{\langle #1 = #2 \rangle}
\newcommand{\proj}{\mathsf{project}}
\newcommand{\finalstate}{\ensuremath{\sigma_{\bullet}}\xspace}
\newcommand{\varOf}{\mathsf{var}}
\newcommand{\unorderedpairs}[1]{\binom{#1}{2}}
\newcommand{\intersectatpath}[3]{#1 |_{#2}^{\intersect #3}}
\newcommand{\unenum}[2]{\ensuremath{\square(#1,#2)}}
\newcommand{\unenumuc}[1]{\ensuremath{\square(#1)}}
\newcommand{\stepsymb}{\longrightarrow}
\newcommand{\steps}[2]{\ensuremath{#1 \stepsymb #2}}
\newcommand{\reduces}[2]{\ensuremath{#1 \stepsymb^* #2}}
\newcommand{\gramdef}{\ensuremath{\mathrel{\mathord{:}\mathord{:=}}}}
\newcommand{\hplusAll}{\tname{HplusAll}}
\newcommand{\hppNone}{\tool-\tname{Na\"ive}}
\newcommand{\hppStatic}{\tool-\tname{StaticOnly}}
\newcommand{\hppDyn}{\tool-\tname{DynamicOnly}}
\newcommand{\nHplus}{45\xspace}
\newcommand{\nStackoverflow}{19\xspace}
\newcommand{\nTimeout}{300 seconds\xspace}
\newcommand{\nTimeoutL}{600 seconds\xspace}
\newcommand{\nComps}{291\xspace}
\newcommand{\nRepeat}{three\xspace}
\newcommand{\successRate}{88\%\xspace}
\newcommand{\successRateHplus}{64\%\xspace}
\newcommand{\nHplusHO}{44\%\xspace}
\newcommand{\polypercent}{40\%\xspace}
\newcommand{\nEctaSolved}{43\xspace}
\newcommand{\nHplusSolved}{39\xspace}
\newcommand{\nEctaSolvedSo}{13\xspace}
\newcommand{\nHplusSolvedSo}{3\xspace}
\newcommand{\speedup}{$8\times$\xspace}
\newcommand{\hplusSpeedup}{$7\times$\xspace}
\newcommand{\soSpeedup}{$40\times$\xspace}
\newcommand{\nStaticSolved}{34\xspace}
\newcommand{\nDynSolved}{36\xspace}
\begin{document}

\title{Searching Entangled Program Spaces}

\author{James Koppel}
\affiliation{
  \institution{MIT}
  \city{Cambridge}
  \state{MA}
 \country{USA}
}
\email{jkoppel@mit.edu}

\author{Zheng Guo}
\affiliation{
  \institution{UC San Diego}
  \city{La Jolla}
  \state{CA}
 \country{USA}
}
\email{zhg069@eng.ucsd.edu}

\author{Edsko de Vries}
\affiliation{
  \institution{Well-Typed LLP}
 \country{United Kingdom}
}
\email{edsko@well-typed.com}

\author{Armando Solar-Lezama}
\affiliation{
  \institution{MIT}
  \city{Cambridge}
  \state{MA}
 \country{USA}
}
\email{asolar@csail.mit.edu}

\author{Nadia Polikarpova}
\affiliation{
  \institution{UC San Diego}
  \city{La Jolla}
  \state{CA}
 \country{USA}
}
\email{npolikarpova@eng.ucsd.edu}

\begin{abstract}
Many problem domains, including program synthesis and rewrite-based optimization,
require searching astronomically large spaces of programs.
Existing approaches often rely on building specialized data structures---%
version-space algebras, finite tree automata, or e-graphs---%
to compactly represent such spaces.
%
At their core, all these data structures exploit independence of subterms;
as a result, they cannot efficiently represent more complex program spaces,
where the choices of subterms are entangled.

We introduce \emph{equality-constrained tree automata} (ECTAs),
a new data structure, designed to compactly represent large spaces of programs with entangled subterms. 
We present efficient algorithms for extracting programs from ECTAs, 
implemented in a performant Haskell library, \ectalibrary.
Using the \ectalibrary library, we construct \tool, 
a type-driven program synthesizer for Haskell.
\tool significantly outperforms a state-of-the-art synthesizer \hplus---%
providing an average speedup of \speedup---%
despite its implementation being an order of magnitude smaller.

\end{abstract}



\keywords{}  

\maketitle

\section{Introduction}
\label{sec:intro}

From program synthesis to theorem proving and compiler optimizations,
a range of problem domains make use of data structures 
that compactly represent large spaces of terms.
In program synthesis, the most well-known example is \emph{version space algebras} (VSAs)~\cite{lau2003programming,polozov2015flashmeta},
the data structure behind the successful spreadsheet-by-example tool \tname{FlashFill}~\cite{gulwani2011automating}.
%
Although there may be over $10^{100}$ programs matching an input/output example, 
\tname{FlashFill} is able to represent all of them as a tiny VSA, 
efficiently run functions over every program in the space, 
and then extract the best concrete solution.

To illustrate the idea behind VSAs, 
consider the space of nine terms $\mathcal{T}=\{\s{f}(t_1) + \s{f}(t_2)\}$ where $t_1,t_2 \in \{\s{a},\s{b},\s{c}\}$.
\autoref{fig:ex1-vsa} shows a VSA that represents this space.
In a VSA a \emph{union node}, marked with $\cup$, represents a union of all its children,
while a \emph{join node}, marked with $\bowtie$, 
applies a function symbol to every combination of terms represented by its children.
You can see how, by exploiting the shared top-level structure of the terms in $\mathcal{T}$,
this VSA is able to compactly represent nine terms, each of size five, using only six nodes.

Another data structure that exploits sharing in a similar way is \emph{e-graphs},
which enjoy a wide range of applications, including theorem proving~\cite{demoura2007ematching}, 
rewrite-based optimization~\cite{tate2009equality},
domain-specific synthesis~\cite{nandi2020synthesizing,ruler},
and semantic code search~\cite{premtoon2020semantic}.
Both VSAs and e-graphs are now known~\cite{pollock_e-graphs_nodate,koppel2021version} to be equivalent 
to special cases of \emph{finite tree automata} (FTAs), 
which have independently experienced a surge of interest in recent years~\cite{adams2017restricting,wang2017synthesis,DBLP:journals/pacmpl/WangDS18}.
\autoref{fig:ex1-fta} shows an FTA that represents the same term space as the VSA in \autoref{fig:ex1-vsa}.
An FTA consists of \emph{states} (circles) and \emph{transitions} (rectangles),
with each transition connecting zero or more states to a single state.
Intuitively, FTA transitions correspond to VSA's join nodes,
and FTA states correspond to VSA's union nodes 
(although in a VSA, union nodes with a single child are omitted).
Importantly, all three data structures%
\footnote{We omit e-graphs from \autoref{fig:ex1} for space reasons, 
but also because e-graphs are typically used to represent congruence relations rather than arbitrary sets of terms, 
which makes them less relevant to our setting, as we discuss in \secref{sec:related-work}.}
thrive on spaces where terms share some top-level structure,
while their divergent sub-terms can be chosen \emph{independently} of each other.
%

\begin{figure}
  \centering
  \begin{subfigure}[t]{0.25\textwidth}
  \centering
  \includegraphics[scale=0.45]{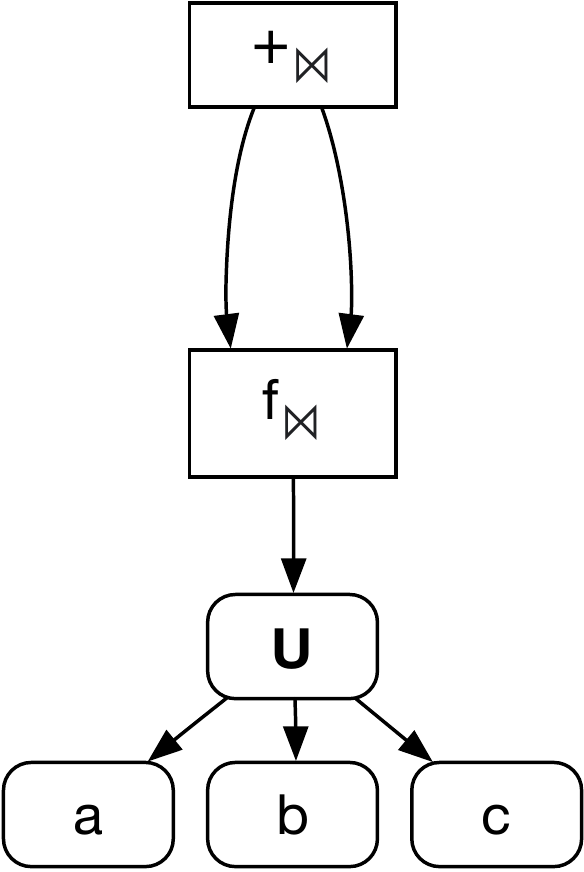}
  \subcaption{VSA for $\mathcal{T}$}
  \label{fig:ex1-vsa}
  \end{subfigure}
  ~ 
  \begin{subfigure}[t]{0.25\textwidth}
  \centering
  \includegraphics[scale=0.45]{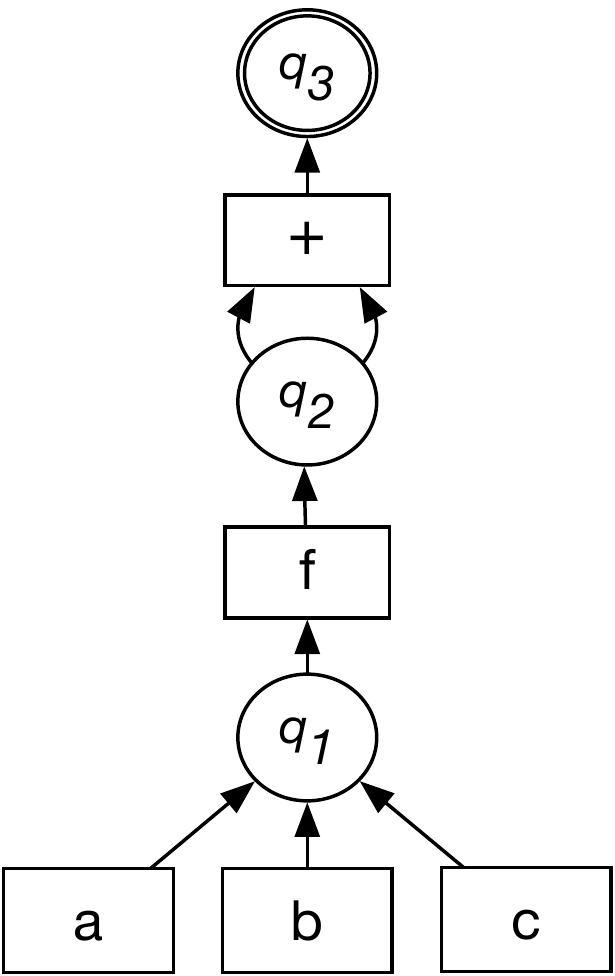}
  \subcaption{FTA for $\mathcal{T}$}
  \label{fig:ex1-fta}
  \end{subfigure}
  ~
  \begin{subfigure}[t]{0.25\textwidth}
  \centering
  \includegraphics[scale=0.45]{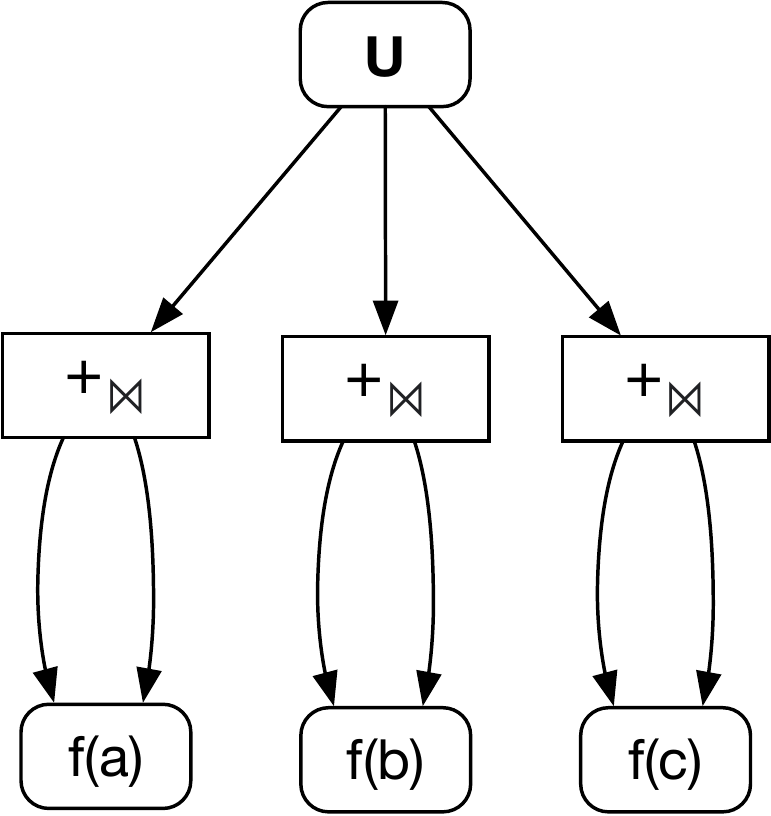}
  \subcaption{VSA for $\mathcal{U}$}
  \label{fig:ex2-vsa}
  \end{subfigure}
  ~
  \begin{subfigure}[t]{0.25\textwidth}
  \centering
  \includegraphics[scale=0.45]{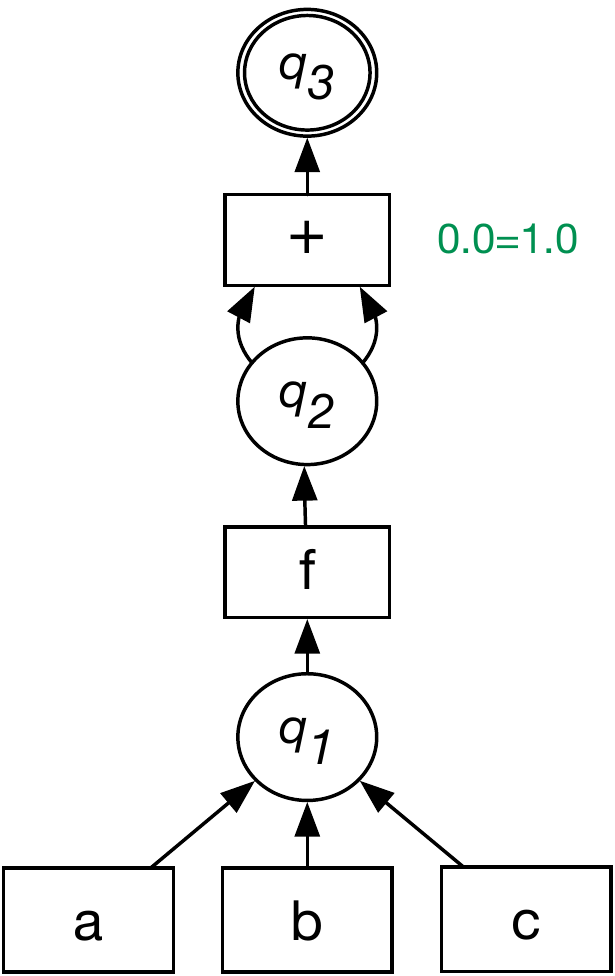}
  \subcaption{ECTA for $\mathcal{U}$}
  \label{fig:ex2-ecta}
  \end{subfigure}
  \caption{Representations of $\mathcal{T}=\{\s{f}(t_1) + \s{f}(t_2)\}$ and $\mathcal{U} = \{\s{f}(t) + \s{f}(t)\}$,
  where $t, t_1,t_2 \in \{\s{a},\s{b},\s{c}\}$.}
  \label{fig:ex1}
\end{figure}

\mypara{Challenge: Dependent Joins}
Consider now the term space $\mathcal{U} = \{\s{f}(t) + \s{f}(t)\}$, where $t \in \{\s{a},\s{b},\s{c}\}$,
that is, a sub-space of $\mathcal{T}$ where both arguments to $\s{f}$ \emph{must be the same term}.
Such ``entangled'' term spaces arise naturally in many domains. 
For example, in term rewriting or logic programming, 
we might want to represent the subset of $\mathcal{T}$ that matches the non-linear pattern $X + X$.
Similarly, in type-driven API search~\cite{Hoogle,Gissurarson2018}, 
we might want to represent the space of all \emph{types} of library functions
that unify with a given query type, such as $\T{List}\ \alpha \to \T{List}\ \alpha$.

Existing data structures are incapable of fully exploiting shared structure in such entangled spaces.
\autoref{fig:ex2-vsa} shows a VSA representing $\mathcal{U}$:
here, the node $+_{\bowtie}$ cannot be reused 
because VSA joins are \emph{independent},
whereas our example requires a dependency between the two children of $+$.
This limitation is well-known:
for example, the seminal work on VSAs~\cite{lau2003programming} notes that
``efficient representation of non-independent joins remains an item for future work.''

\mypara{Solution: ECTA}
To address this limitation,
we propose a new data structure we dub \emph{equality-constrained tree automata} (ECTAs).
ECTAs are tree automata whose transitions can be annotated with \emph{equality constraints}.%
\footnote{This might remind some readers of Dauchet's \emph{reduction automata};
we postpone a detailed comparison to related work (\secref{sec:related-work}).}
For example, \autoref{fig:ex2-ecta} shows an ECTA that represents the term space $\mathcal{U}$.
It is identical to the FTA in \autoref{fig:ex1-fta} save for the constraint $0.0 = 1.0$ on its $+$ transition.
This constraint restricts the set of terms accepted by the automaton
to those where the sub-term at path $0.0$ (the first child of the first child of $+$)
equals the sub-term at path $1.0$ (the first child of the second child of $+$).
The constraint enables this ECTA to represent a dependent join
while still fully exploiting shared structure, unlike the VSA in \autoref{fig:ex2-vsa}.

\mypara{Challenge: Enumeration}
Being able to represent a term space is not particularly useful
unless we also can efficiently \emph{extract} a concrete inhabitant of this space---%
or, more generally, \emph{enumerate} some number of its inhabitants.
Unsurprisingly, equality constraints make enumeration harder,
since the terms must now comply with those constraints
(in fact, as we demonstrate in \secref{sec:applications:sat},
extracting a term for an ECTA is at least as hard as SAT solving).
A na\"ive fix is to filter out spurious (constraint-violating) terms after the fact,
but such ``rejection sampling'' can be extremely inefficient.

\mypara{Solution: Dynamic and Static Reduction}
Our first insight for how to speed up enumeration is inspired by constraint-based type inference.
Instead of making an \emph{eager} choice at a constrained state, such as $q_1$ in \autoref{fig:ex2-ecta},
our enumeration technique \emph{postpones} this choice, 
instead introducing a ``unification variable'' $V_1$ to stand for the chosen term.
This variable gets reused the second time $q_1$ is visited.
At the end, $V_1$ is reified into a concrete term,
thereby making a simultaneous choice at the two constrained states,
which is guaranteed by construction to satisfy all equality constraints.
We dub this mechanism \emph{dynamic reduction},
where ``dynamic'' refers to operating during the enumeration process.
As we illustrate in \secref{sec:overview},
dynamic reduction becomes more involved 
when equality constraints relate different states:
in that case the term space associated with a unification variable gets refined during enumeration.

Our second insight is that enumeration can often be made even more efficient
by transforming the ECTA \emph{statically}---that is, before the enumeration starts---%
so that some of its constraints are ``folded'' into the structure of the underlying FTA. 
We will present examples in \secref{sec:overview} of using static reduction to ``prune'' away states 
that cannot be part of any term that satisfies the constraints.

%
%
%

\mypara{Contributions}
In summary, this paper makes the following contributions:
\begin{enumerate}
\item We introduce the \emph{ECTA data structure} (\secref{sec:acyclic}), 
which supports compact representation of program spaces with dependent joins,
as well as efficient enumeration (\secref{sec:enumeration}) via \emph{static} and \emph{dynamic reduction}.
We first formalize the simpler acyclic ECTAs, 
and then show how to add cycles in order to support infinite term spaces (\secref{sec:cyclic}).
\item We develop \emph{ECTA encodings} for two diverse domains:
Boolean satisfiability and type-driven program synthesis (\secref{sec:applications}).
These encodings illustrate that ECTAs are expressive and versatile,
and that ECTA enumeration can effectively be used as a general-purpose constraint solver.
\item We implement the data structure and its operations in a performant Haskell library, \ectalibrary.
\end{enumerate}

We evaluate the \ectalibrary library
on the domain of type-driven program synthesis (\secref{sec:eval}).
The experiments show that our ECTA-based synthesizer \tool significantly outperforms
its state-of-the-art competitor \hplus~\cite{guo2020tygar},
despite our implementation being \emph{only a tenth of the size}.
Specifically, \tool is able to solve \successRate of synthesis problems 
in the combined benchmark suite compared to only \successRateHplus by \hplus,
and on commonly solved benchmarks \tool is \emph{\speedup faster} on average. 
Further, our evaluation demonstrates that static and dynamic reduction are critical for performance:
ablating either of those mechanisms reduces the number of benchmarks solved,
while a na\"ive baseline that uses ``rejection sampling'' enumeration
is unable to solve \emph{any} benchmarks.

\section{ECTA by Example}\label{sec:overview}

In this section we illustrate the ECTA data structure and its two major features---%
static and dynamic reduction---%
using the problem of type-driven program synthesis as a motivating example. 
We give a simple encoding of the space of well-typed small programs into ECTAs, 
and then show how the \emph{general-purpose ECTA operations} are used to efficiently enumerate the well-typed terms.
We will present the full encoding, which also handles arbitrary polymorphism and higher-order functions, in \secref{sec:applications},
along an with ECTA encoding of another problem domain.

\subsection{Representing Spaces of Well-typed Terms}

\begin{figure}[t]
  \centering
  \begin{subfigure}[t]{0.5\textwidth}
  \centering
  \includegraphics[scale=0.45]{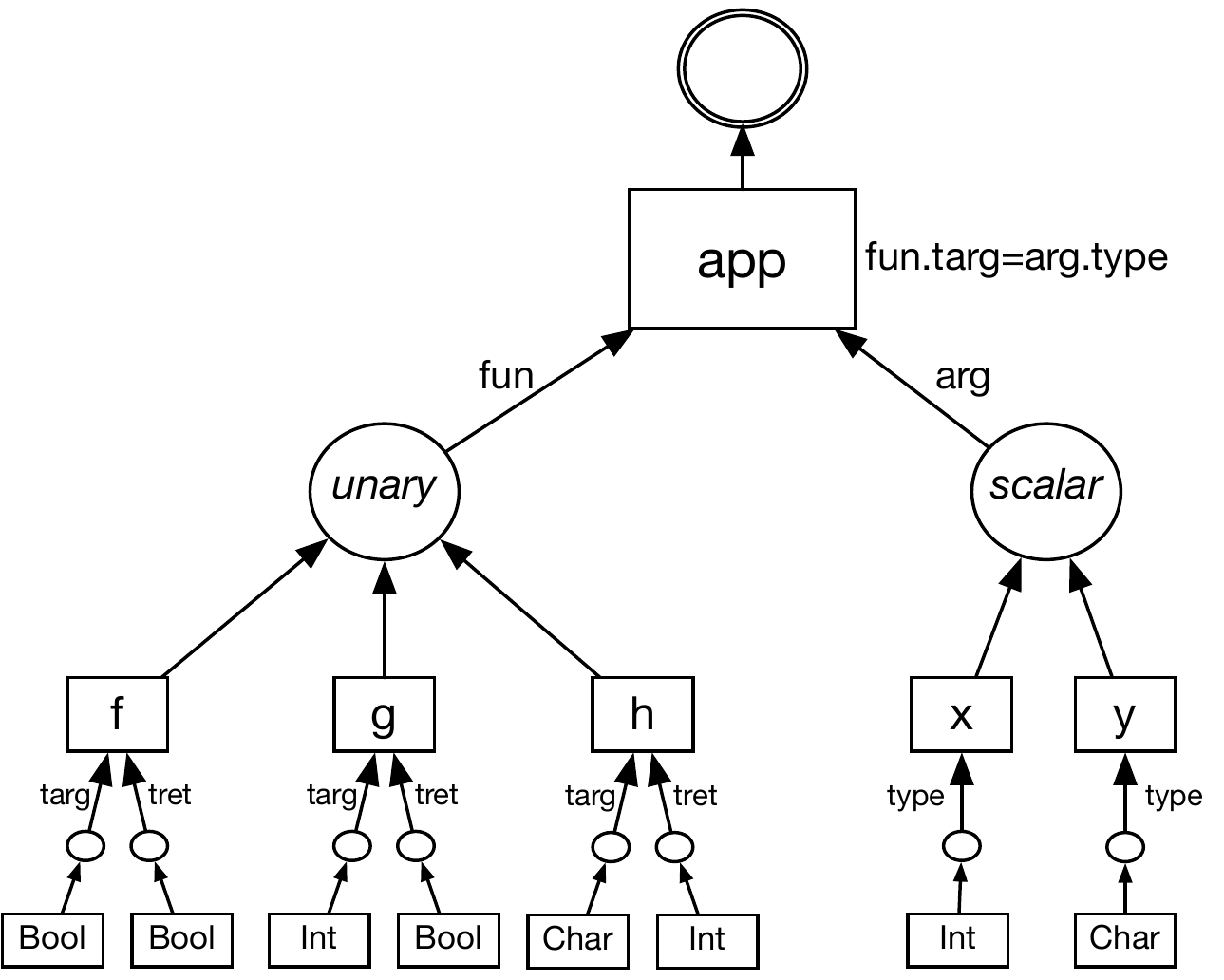}
  \subcaption{Initial ECTA.}
  \label{fig:overview-ex1-orig}
\end{subfigure}
~ 
\begin{subfigure}[t]{0.5\textwidth}
  \centering
  \includegraphics[scale=0.45]{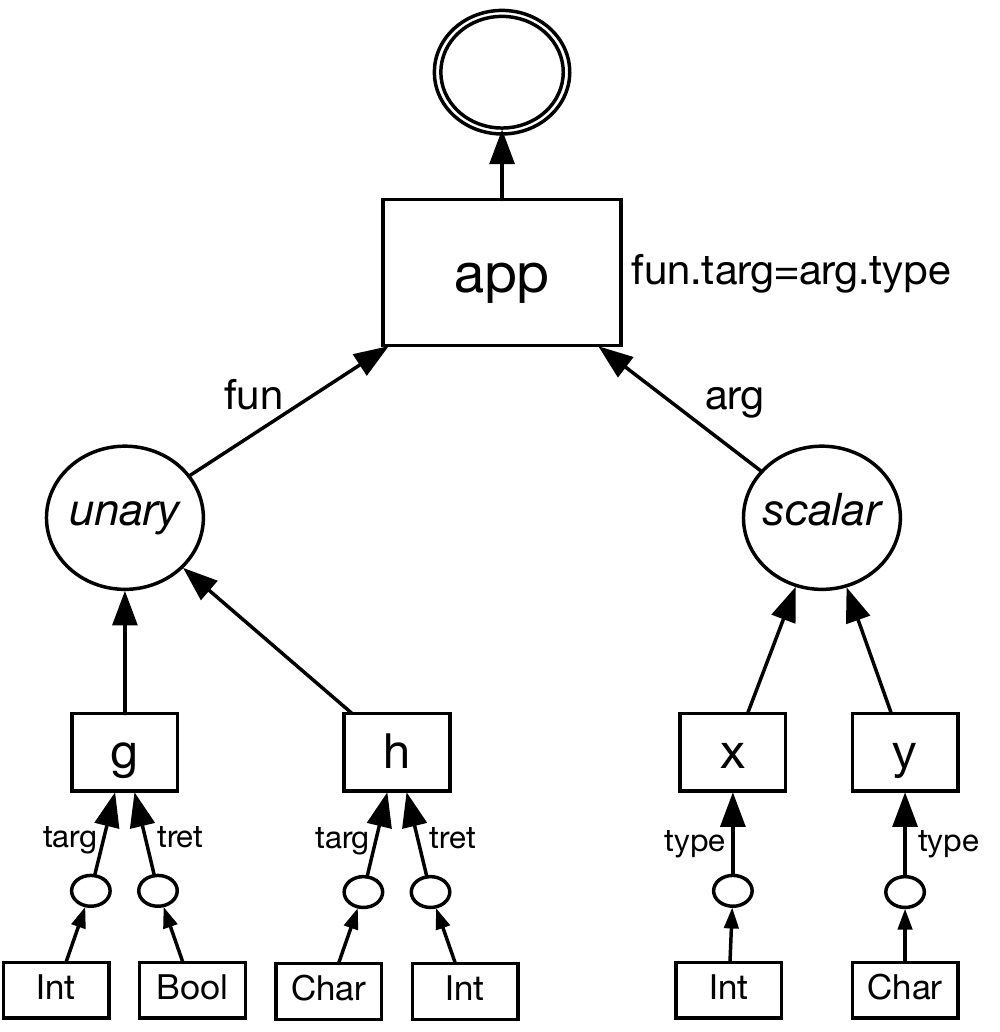}
  \caption{Reduced ECTA.}
  \label{fig:overview-ex1-reduced}
  \end{subfigure}
\caption{ECTAs representing all well-typed size-two terms in the environment $\Gamma_1$.}  
\label{fig:overview-ex1}
\end{figure}

Consider a typing environment 
$\Gamma_1 = \{x: \T{Int}, y: \T{Char}, f: \T{Bool}\to\T{Bool},  g: \T{Int}\to\T{Bool},  h: \T{Char}\to\T{Int}\}$.
Suppose we are interested in enumerating all application terms that are well-typed in $\Gamma_1$;
for now let us restrict our attention to terms of size two---%
that is, applications of variables to variables.
The space of all such terms can be compactly represented with an ECTA, 
as shown in \autoref{fig:overview-ex1-orig}.

This ECTA has a \emph{transition} for each variable in $\Gamma_1$;
scalar variables ($x$ and $y$) are annotated with their \T{type},
while functions ($f$, $g$, and $h$) are annotated with an argument type \T{targ} and a return type \T{tret}.
The \emph{node} (state) \staten{unary} represents the space of all unary variables,
while the node \staten{scalar} represents the space of all scalars.
The accepting node has a single incoming transition \T{app},
which represents an application of a unary \T{fun} to a scalar \T{arg}, fulfilling the restriction to size-two terms.%
\footnote{In our full encoding in \secref{sec:applications} we remove the distinction between the terms of different arity
in order to support higher-order and partial applications.}

While the underlying tree automaton of this ECTA (its \emph{skeleton})
accepts all terms of the form $A\ B$ where $A \in \{f,g,h\}$ and $B \in \{x,y\}$;
most of these terms, such as $f\ x$ are ill-typed.
In order to restrict the set of represented terms to only well-typed ones,
there is an \emph{equality constraint} \T{fun.targ = arg.type} attached to the \T{app} transition,
which demands that the types of the formal and the actual arguments coincide.
Thanks to this constraint, the full ECTA accepts only the two well-typed terms, $g\ x$ and $h\ y$.
(Note that in this presentation, we give names to the incoming edges of each transition
to make the constraints more readable;
in the formalization, we instead use indices to refer to the edges.)

\subsection{Static Reduction}

How would one go about enumerating the terms represented by the ECTA in \autoref{fig:overview-ex1-orig}?
%
A na\"ive approach is to 
\begin{enumerate*}
\item enumerate all terms represented by its skeleton and
\item filter out those terms that violate the constraint.
\end{enumerate*}
Step 1 is easily accomplished via depth-first search,
starting from the root (the accepting node) and picking a single incoming transition for every node.
This approach is, however, inefficient:
it ends up constructing six terms, only to filter out four of them.
In ECTA terminology, the skeleton admits six \emph{runs},
four of which are \emph{spurious} (violate the constraints).

Our \textbf{first key insight}
is that enumeration can often be made more efficient
by transforming the ECTA's skeleton so as to reduce the number of spurious runs.
We refer to this transformation as \emph{static reduction}
(because it happens once, \emph{before} the enumeration starts).
The reduced ECTA for our example is given in \autoref{fig:overview-ex1-reduced}.
Intuitively, we were able to eliminate the $f$ transition entirely
because there are no scalar variables that match its formal argument type \T{Bool};
as a result the reduced ECTA contains only two spurious runs instead of four.

More formally, static reduction works via automata \emph{intersection}.
For the ECTA in \autoref{fig:overview-ex1-orig}, 
reducing the constraint \T{fun.targ = arg.type} involves constructing an automaton 
that accepts all terms reachable via the path \T{arg.type}---%
namely \T{Int} and \T{Char}---%
and intersecting it with each node at the path \T{fun.targ}. 
Since the child node of \T{f} labeled \T{targ} represents only \T{Bool}, 
the intersection for that child is empty, 
meaning the \T{f} transition can never be used to satisfy the constraint, 
and hence can be eliminated.
The reduction algorithm performs a similar intersection for \T{g} and \T{h}, 
as well as (in the other direction) \T{x} and \T{y}, 
but finds that each of these other choices could be part of a satisfying run, 
and eliminates no further transitions.

\subsection{Type-Driven Program Synthesis with ECTAs}

\begin{figure}[t]
  \centering
  \begin{minipage}{0.5\textwidth}
  \includegraphics[scale=0.4]{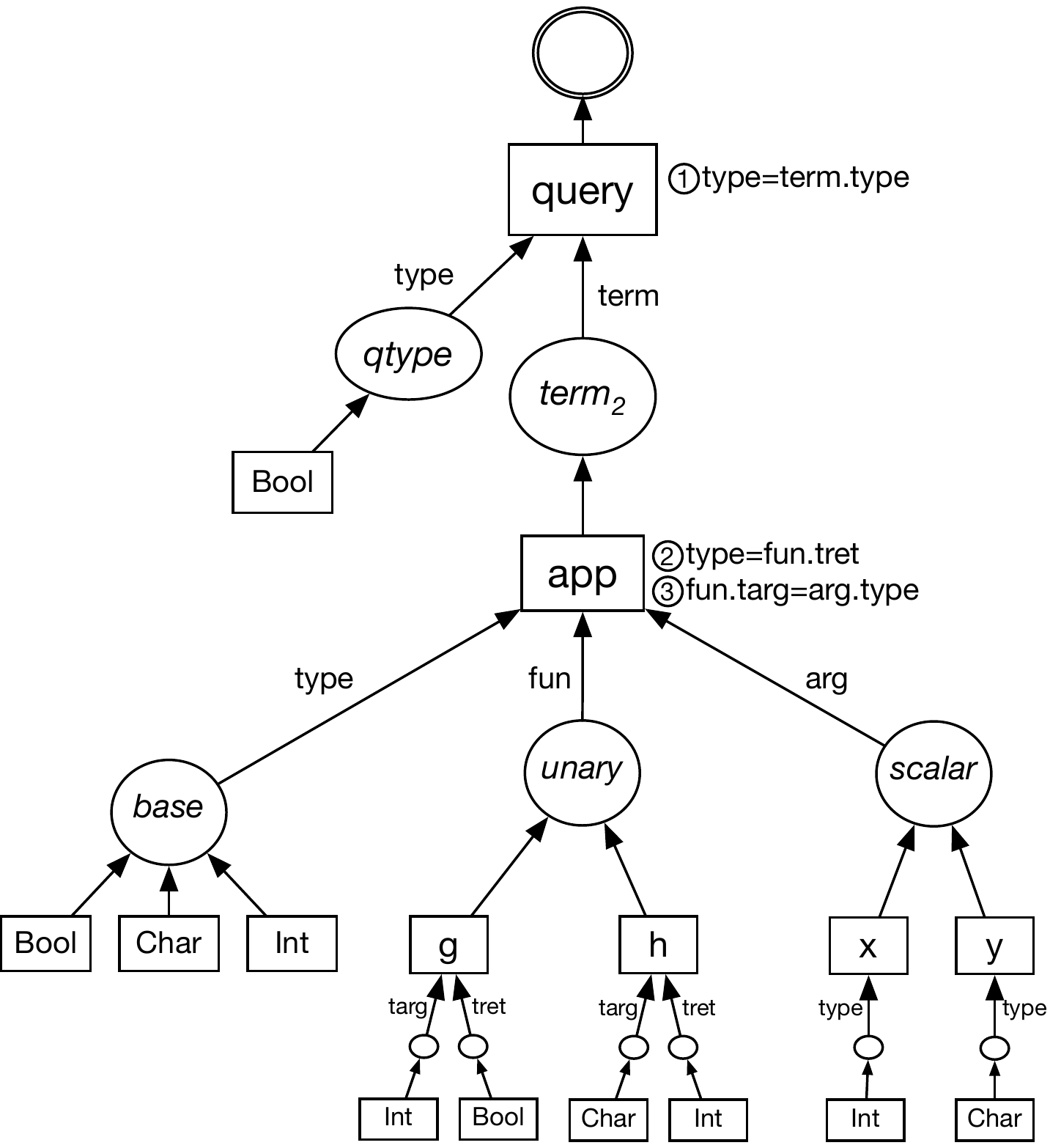}
  \end{minipage}%
  \begin{minipage}{0.5\textwidth}
  \includegraphics[scale=0.4]{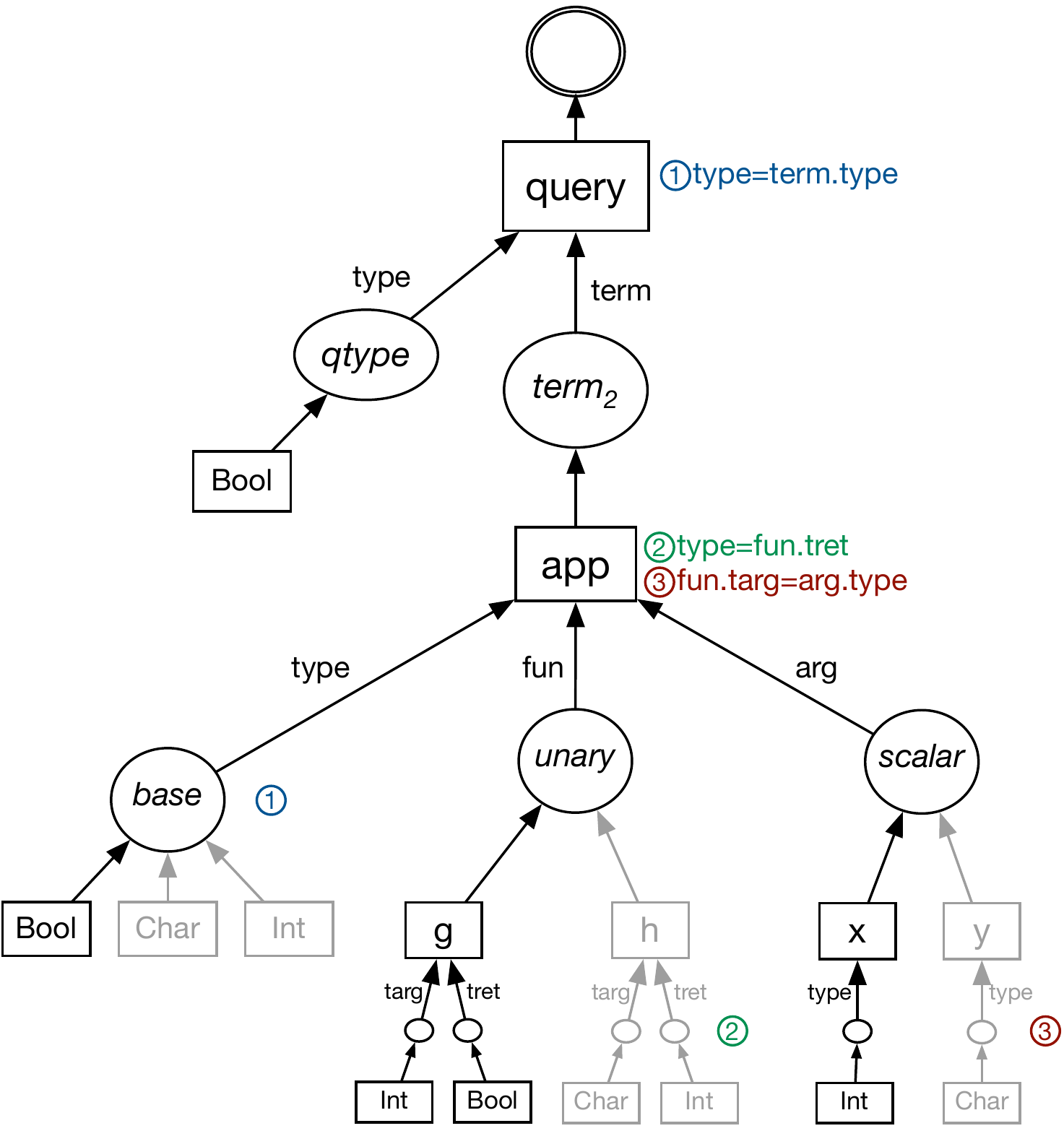}
  \end{minipage}  
\caption{ECTAs representing size-two terms of type \T{Bool}. 
The diagram on the right shows how a sequence of static reductions on the constraints \circled{1}, \circled{2}, and \circled{3}
eliminates the grayed-out transitions and nodes.}  
\label{fig:overview-ex2}
\end{figure}

In type-driven program synthesis, we are typically not interested in all well-typed terms,
but rather terms of a given \emph{query} type.
The ECTA in \autoref{fig:overview-ex2} (left) represents a type-driven synthesis problem
with the same environment $\Gamma_1$ as before and query type \T{Bool}.
The main difference between this automaton and the one in \autoref{fig:overview-ex1-reduced}
is the new transition \T{query}, whose \T{type} edge encodes the given query type
and whose \T{term} edge connects to the node representing all well-typed terms in the search space.
To filter out the terms of undesired types,
constraint \circled{1} prescribes that the \T{term}'s \T{type} be equal to the query type.
In order for this constraint to make sense,
we also add a \T{type} annotation to the \T{app} transition;
the type of an application is initially undetermined (can be any base type),
but is restricted by a new constraint \circled{2} to coincide with the return type of the function.

\autoref{fig:overview-ex2} (right) demonstrates a sequence of static reductions
that happens to eliminate \emph{all} spurious run of this ECTA,
until its skeleton represents the sole solution to the synthesis problem: the term $g\ x$.
First, reducing constraint \circled{1} eliminates all possible types of the application except \T{Bool};
next, reducing  \circled{2} eliminates the function $h$ as it has a wrong return type;
finally, reducing \circled{3} eliminates the argument $y$, 
since it is incompatible with the only remaining function $g$.

\subsection{Dynamic Reduction}\label{sec:overview:dynamic}

In the previous example, static reduction was able to eliminate all spurious runs of the ECTA before enumeration.
so that no spurious runs remained.
This is not always possible.
Consider a slightly more involved version of type-driven synthesis where functions can be polymorphic.
Specifically let $\Gamma_2 = \{x: \T{Int}, y: \T{Char}, g: \forall\alpha . \alpha \to \alpha, h: \T{Char}\to\T{Bool}\}$,
and suppose we are interested in all size-two terms of types $\T{Int}$ or $\T{Bool}$.
This problem can be represented by the ECTA in \autoref{fig:overview-ex3},
which is similar to the one in \autoref{fig:overview-ex2}.
The only interesting difference is how the polymorphic type of $g$ is represented:
the type variable $\alpha$ is encoded as a \emph{union} of all types it can unify with%
---here \T{Bool}, \T{Char}, and \T{Int}.%
\footnote{Here we consider a limited form of polymorphism, where type variables can be instantiated only with base types;
this restriction is relaxed in \secref{sec:applications}.}
Crucially, the constraint \circled{4} on $g$ guarantees that \emph{the same} type is used to instantiate both occurrences of $\alpha$.

\begin{figure}[t!]
  \begin{minipage}{0.5\textwidth}
    \includegraphics[scale=0.4]{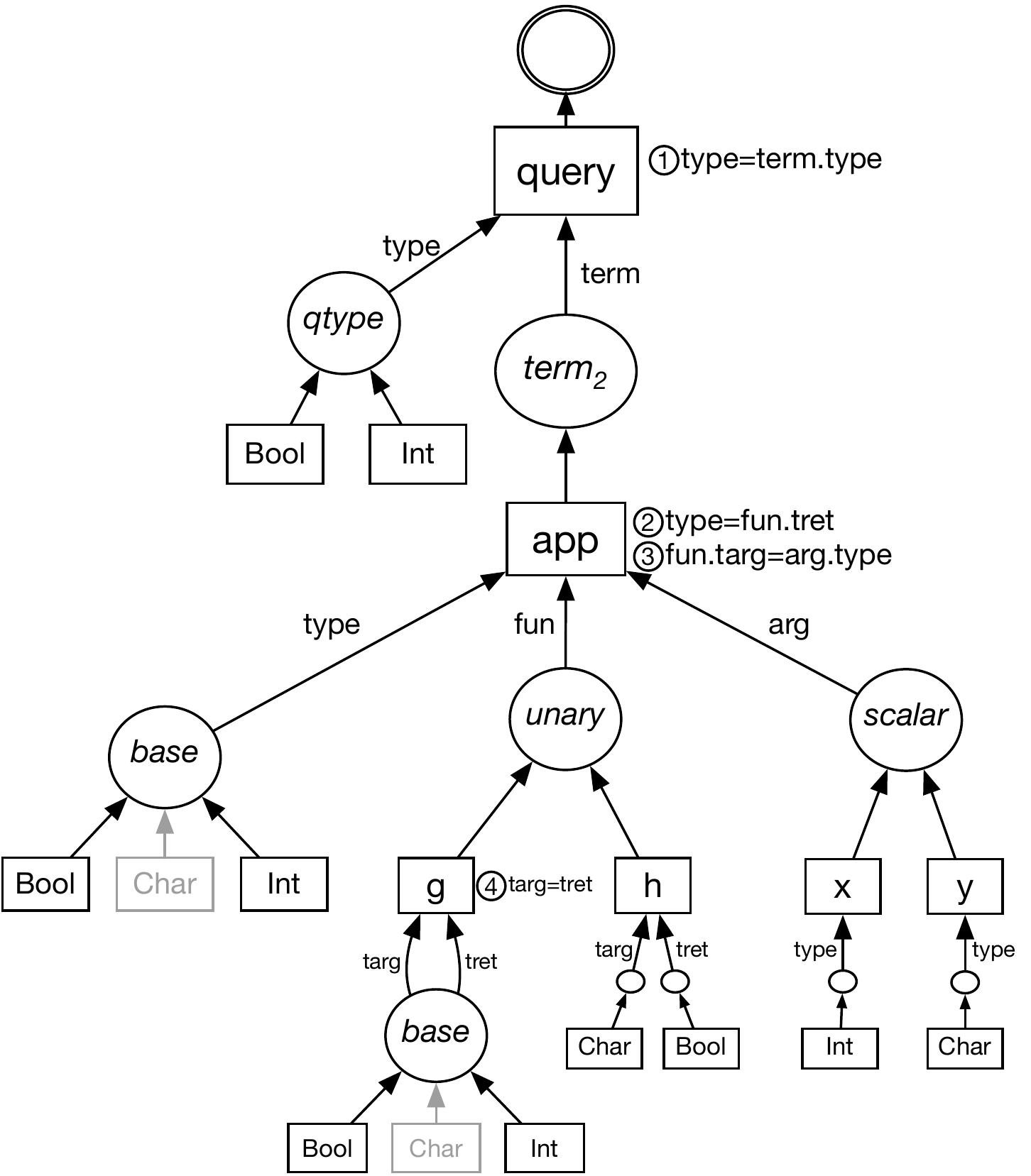}
    \end{minipage}%
    \begin{minipage}{0.5\textwidth}
    \includegraphics[scale=0.4]{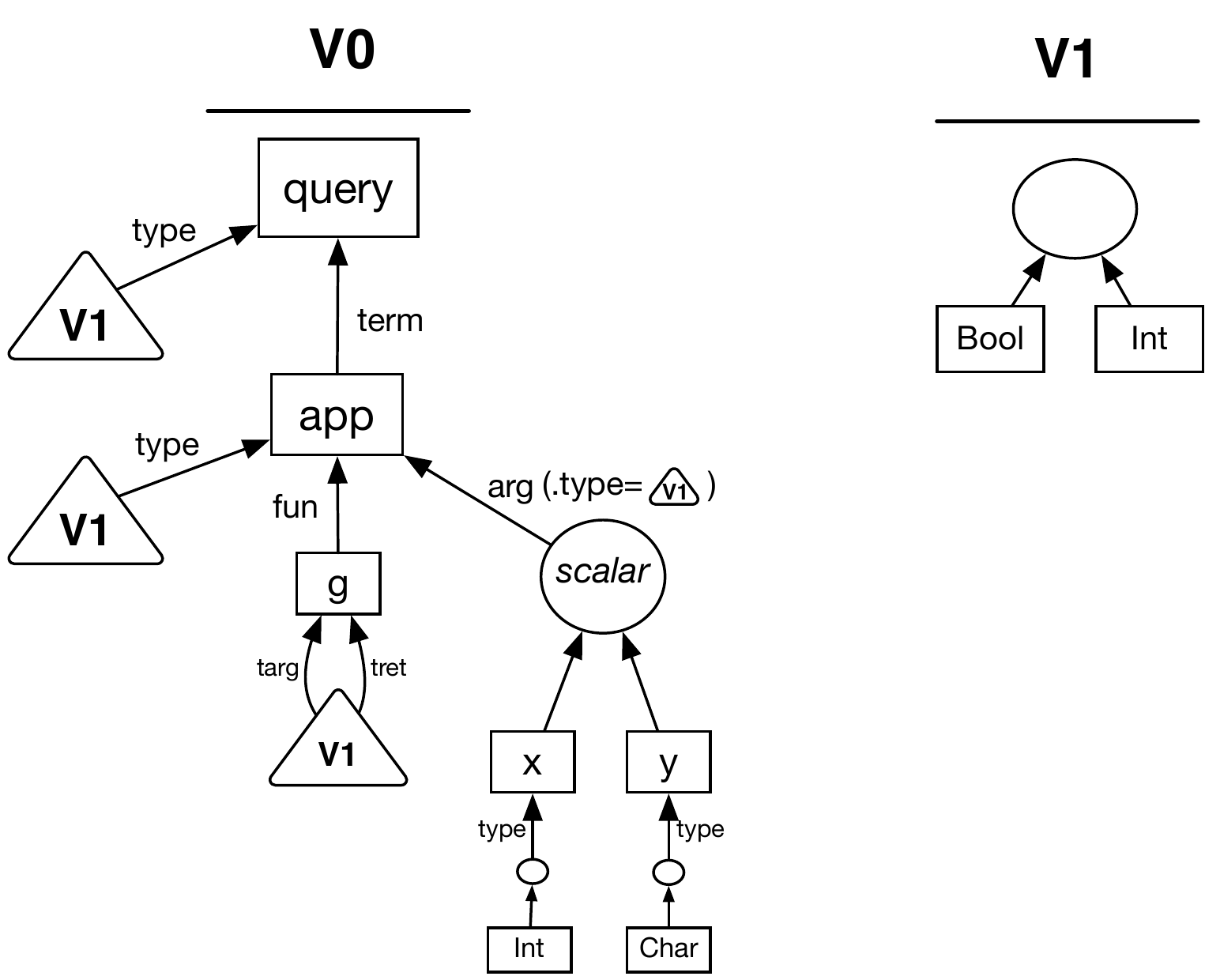}
    \end{minipage}  
  \caption{Type-driven synthesis with polymorphic functions. 
  Left: ECTA representing all terms of types \T{Bool} or \T{Int} in $\Gamma_2$. Grayed-out transitions are eliminated by static reduction.
  Right: Intermediate state during enumeration. The choice of the query type has been suspended into an auxiliary automaton $V_1$.}
  \label{fig:overview-ex3}
\end{figure}

Although static reduction can eliminate some of the transitions in this ECTA (shown in gray),
a fair number of spurious runs remain.
For example, a na\"ive left-to-right enumeration 
would first pick \T{Bool} as the query type and $g$ as the function 
(forcing the selection of $\T{Bool}\to\T{Bool}$ as the type of $g$),
only to discover later that there is no argument of type $\T{Bool}$.
More generally, in the presence of constraints,
the choices made for constrained nodes are not independent,
and making a wrong combination of choices early on (such as \T{Bool} and $g$ in our example)
may lead to expensive backtracking further down the line.

Our \textbf{second key insight} is that such backtracking can be avoided
by \emph{deferring} the enumeration of constrained nodes until more information is available.
\autoref{fig:overview-ex3} (right) illustrates this idea.
It depicts a \emph{partially enumerated} term from the ECTA on the left.
You can think of a partially enumerated term as a tree fragment at the top
with yet-to-be-enumerated ECTAs among the branches.
Importantly, because the node \staten{qtype} is constrained (by \circled{1}),
it is not enumerated eagerly, but instead \emph{suspended} into a named sub-automaton $V_1$.
As the enumeration encounters each other node $N$ 
constrained to be equal to \staten{qtype} (via \circled{1}, \circled{2}, and \circled{4}), 
$n$ is replaced by a reference to $V_1$, 
while $V_1$ is updated to $V_1 \sqcap N$.
Thus, to arrive at \autoref{fig:overview-ex3} (right),
the enumeration has made a single decision---picking $g$ over $h$---%
whereas all the other choices have been deferred.

Finally, the enumeration picks $x$ among the two \staten{scalar}s.
The \T{type} state of $x$---let's call it $N_x$---represent a singleton $\{\T{Int}\}$
\emph{and} is constrained to equal $V_1$ (by \circled{3}).
As a result, $V_1$ gets intersected with $N_x$, 
eliminating its \T{Bool} alternative.
Now when it comes time to ``unsuspend'' $V_1$,
it only contains a single alternative, \T{Int},
which is already guaranteed to be consistent with all constraints.
In other words, we have found the solution $g\ x$%
\footnote{The only other solution to this synthesis problem is $h\ y$,
which is discovered after backtracking and picking $h$ over $g$.}
without having to explicitly search over all possible query types, result types of the application, or instantiations of $g$;
instead all these three choices were made simultaneously and consistency.
%
%
We refer to this mechanism as \emph{dynamic reduction}
because it reduces the number of explored spurious runs \emph{during} enumeration. 
%

\section{Acyclic ECTA}
\label{sec:acyclic}

This section formalizes the ECTA data structure and its core algorithms. 
We begin by presenting the special case of ECTAs without cycles, 
which simplifies both the theory and implementation.
Proofs of all theorems omitted from this and the following sections can be found in \appref{app:proofs}.

\subsection{Preliminaries}

We first present standard definitions of terms, paths, and the prefix-free property
from the term-rewriting literature.

\mypara{Terms}
A \emph{signature} $\Sigma$ is a set of function symbols,
each associated with a natural number by the $\arity$ function.
%
$\closedtermsof{\Sigma}$ denotes the set of \emph{terms} over $\Sigma$,
defined as the smallest set containing all $s(t_0,\dots,t_{k-1})$
where $s \in \Sigma$, $k = \arity(s)$, and $t_0,\dots,t_{k-1}\in\closedtermsof{\Sigma}$.
We abbreviate nullary terms of the form $s()$ as $s$.

\mypara{Paths}
Paths are used to denote locations inside terms.
Formally, a \emph{path} $p$ is a list of natural numbers $i_1.i_2. \dots . i_k \in \nat^*$. 
The empty path is denoted $\epsilon$,
and $p_1.p_2$ denotes the concatenation of paths $p_1$ and $p_2$.
We write $p_1\sqsubseteq p_2$ if $p_1$ is a prefix of $p_2$
(and $p_1\sqsubset p_2$ if it is a proper prefix).
A set $P$ of paths is \emph{prefix-free} if there are no $p_1, p_2 \in P$ 
such that $p_1 \sqsubset p_2$.

Given a term $t\in \closedtermsof{\Sigma}$,
a \emph{subterm of $t$ at path $p$}, written $\at{t}{p}$, is inductively defined as follows:
\begin{enumerate*}[(i)]
    \item $\at{t}{\epsilon} = t$
    \item $\at{s(t_0, \dots, t_{k-1})}{i.p} = \at{t_i}{p}$ if $i < k$ and $\bot$ otherwise.
\end{enumerate*}
%
For example, for $t=+(f(a),f(b))$: $\at{t}{0.0} = a$, $\at{t}{1.0}=b$ and $\at{t}{2.0}=\bot$.

\subsection{Path Constraints and Consistency}

The difference between ECTAs and conventional tree automata is the presence of path equalities, 
such as $0.0 = 1.0$ in \autoref{fig:ex2-ecta}.
We now formalize the semantics of these path equalities over terms,
before using them to define the ECTA data structure.
In the following, we are interested in equalities between an arbitrary number $n > 0$ of paths rather than just two paths;
we refer to such $n$-ary constraints as \emph{path equivalences classes} (PECs).

\begin{definition}[Path Equivalence Classes]
A \emph{path equivalence class} (PEC) $c$, 
is a set of paths. 
We write a PEC $\{p_1,p_2,\dots,p_n\}$ as $\{p_1=p_2=\dots=p_n\}$.
\end{definition}

\noindent
Intuitively, the constraint $0.0=1.0$ is satisfied on a term $t$ if $\at{t}{0.0}=\at{t}{1.0}$; 
this notion generalizes straightforwardly to non-binary PECs:
\begin{definition}[Satisfaction of a PEC, Value at a PEC]
A path equivalence class $c=\{p_1=\dots=p_n\}$ is \emph{satisfied} on a term $t$
if there is some $t'$ such that, $\forall p_i \in c, \at{t}{p_i}=t'$. 
We write $\pecsat{c}{t}$ if this condition holds, and $\at{t}{c}$ to denote this unique $t'$.
\end{definition}

\noindent 
Finally, we discuss sets of PECs, called \textit{path constraint sets} (PCSs):

\begin{definition}[Path Constraint Sets, Satisfaction, Consistency]
A \emph{path constraint set} $C=\{c_1, \dots, c_m\}$ is a set of disjoint path equivalence classes.%
\footnote{Any set of PECs can be \emph{normalized} into a PCS by merging non-disjoint PECs;
for example, the set $\{\{0=1\}, \{1=2\}\}$ can be normalized into $\{\{0=1=2\}\}$.
In the following, we assume that the results of all PCS operations (\eg $C_1 \cup C_2$) are always implicitly normalized.}
A term $t$ satisfies $C$, written $\pecsat{C}{t}$, if $\forall c\in C, \pecsat{c}{t}$. 
If there exists a $t$ such that $\pecsat{C}{t}$, then $C$ is \emph{consistent}; otherwise, it is \emph{inconsistent}.
\end{definition}

\noindent
We are interested in detecting inconsistent PCSs
because ECTA operations can use this property to prune empty subautomata.
For a single PEC $c$, consistency is rather straightforward:
$c$ is consistent iff it is prefix-free.%
\footnote{Technically, we must also ensure that $\forall i\in c . i < \max_{s\in\Sigma}\arity(s)$, but this is trivially maintained by all ECTA operations.}
A non-prefix-free PEC, such as $1.0.0 = 1$, requires a term to be equal to its subterm,
which is impossible since terms are finite trees.
For a PCS, however, the story is more complicated:
in particular, it is not sufficient that each of its member PECs is prefix-free,
because two PECs may reference subterms of each other.
For example, consider the PCS $C = \{c_1,c_2\} = \{\{0=1.0\}, \{0.0=1\}\}$.
Although $c_1$ and $c_2$ are prefix-free,
together they imply an inconsistent constraint $1.0.0=1$,
which can be obtained by substituting $1.0$ for $0$ in $c_2$, as justified by $c_1$.

For more intuition, consider two patterns $f(A, g(A))$ and $f(g(B), B)$;
it is easy to see that the terms matching these patterns satisfy the PECs $c_1$ and $c_2$, respectively.
The conjunction of the two PECs corresponds to the unification of the two patterns,
which produces unification constraints $A=g(B)$ and $g(A)=B$,
and eventually the contradictory constraint $B=g(g(B))$---%
which corresponds exactly to the $1=1.0.0$ PEC above.
In unification parlance, we say that this constraint fails an \emph{occurs check}. 
Checking consistency of a PCS is the name-free analogue of the occurs check.



\mypara{Checking Consistency via Congruence Closure}
These observations suggest an algorithm for checking PCS consistency: 
\begin{enumerate*}
\item saturate the PCS with all implied equalities (such as $1.0.0=1$ above), and
\item check if any of them is non-prefix-free. 
\end{enumerate*}
To formalize the former step, we first declaratively define the \emph{closure} operation on PCSs,
and then discuss how to implement it efficiently.

\begin{definition}[Closure]
  A PCS $C$ is \emph{closed} if the following holds for any $c_1,c_2\in C$:
  for any paths $p, p', p''$, if $p',p''\in c_1$ and $p'.p\in c_2$, then $p''.p \in c_2$.
  %
  In other words, whenever $c_2$ contains an extension of a path in $c_1$,
  it also contains the same extension of \emph{all} paths in $c_1$.
  The \emph{closure} of $C$, denoted $\closure(C)$,
  is the smallest closed PCS that contains $C$.
\end{definition}

\noindent
For example, the PCS $C = \{c_1, c_2\} = \{\{0=1.0\}, \{0.0=1\}\}$ is not closed:
if we set $p'=0, p''=1.0$, and $p=0$,
then $0\in c_1$, $1.0 \in c_1$, and $0.0 \in c_2$, but $1.0.0 \notin c_2$.
%
The closure of this PCS $\closure(C) = \{c'_1, c'_2\}$, where $c'_1$ and $c'_2$ are infinite PECs of the form
$c'_1 = \{0 = 1.0 = 0.0.0 = 1.0.0.0 = \ldots\}$ and 
$c'_2 = \{1 = 0.0 = 1.0.0 = 0.0.0.0 = \ldots\}$.
%


\begin{restatable}[Correctness of Closure]{theorem}{thmgenclosure}
  \label{thm:closure-correctness}
For any term $t \in \closedtermsof{\Sigma}$, $\pecsat{C}{t} \Leftrightarrow \pecsat{\closure(C)}{t}$.
\end{restatable}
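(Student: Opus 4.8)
The plan is to prove the two implications separately, with the $\Leftarrow$ direction being essentially trivial and the $\Rightarrow$ direction being the substantive one. For $\Leftarrow$, note that $C \subseteq \closure(C)$ by definition, so if $t$ satisfies every PEC in $\closure(C)$ it satisfies every PEC in $C$; hence $\pecsat{\closure(C)}{t} \Rightarrow \pecsat{C}{t}$. The real content is the forward direction: assuming $\pecsat{C}{t}$, show $\pecsat{\closure(C)}{t}$.

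For $\Rightarrow$, the natural approach is to show that the set of PCSs that are ``satisfied by $t$ whenever $C$ is'' — more precisely, the property ``$\forall$ term $t$, $\pecsat{C}{t} \Rightarrow \pecsat{D}{t}$'' for $D \supseteq C$ — is preserved under the single closure step, and then argue that $\closure(C)$ is built from $C$ by iterating that step. Concretely, I would first establish the key one-step lemma: if $t$ satisfies a PCS $D$, and $c_1, c_2 \in D$ with $p', p'' \in c_1$ and $p'.p \in c_2$, then $t$ also satisfies $D \cup \{\,c_2 \cup \{p''.p\}\,\}$ (after normalization). This is a direct computation with the definitions of $\at{t}{\cdot}$ and PEC satisfaction: since $p', p'' \in c_1$ and $\pecsat{c_1}{t}$, we have $\at{t}{p'} = \at{t}{p''}$; applying $\at{\cdot}{p}$ to both sides — using the evident compositionality property $\at{(\at{t}{q})}{p} = \at{t}{q.p}$ — gives $\at{t}{p'.p} = \at{t}{p''.p}$; and since $p'.p \in c_2$ with $\pecsat{c_2}{t}$, the common value $\at{t}{c_2}$ equals $\at{t}{p'.p}$, hence equals $\at{t}{p''.p}$, so adding $p''.p$ to $c_2$ keeps it satisfied. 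One subtlety worth a line: I should make sure $\at{t}{p'}$ is not $\bot$, i.e. that these are genuine subterms; since $t$ already satisfies $c_1$ and $c_2$ with well-defined values, all the relevant subterms exist, and the equalities propagate to the extensions.

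Next I would lift this to the full closure. Formally, $\closure(C)$ can be presented as $\bigcup_{n} C_n$ where $C_0 = C$ and $C_{n+1}$ is obtained from $C_n$ by normalizing $C_n$ together with all PECs of the form $c_2 \cup \{p''.p\}$ arising as above; $\closure(C)$ is the least fixed point, which is this union (one should check it is closed — by construction no new violating triple survives in the limit — and minimal — any closed $D \supseteq C$ contains every $C_n$ by induction). By the one-step lemma and an easy induction on $n$, $\pecsat{C}{t}$ implies $\pecsat{C_n}{t}$ for all $n$; the normalization step is harmless because merging two PECs both satisfied with value $\at{t}{c}$ yields a PEC still satisfied (their shared value forces equality of the two witnesses). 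Finally, a term $t$ satisfies the union $\bigcup_n C_n$ iff it satisfies each $C_n$ — a PEC in the union lies in some $C_n$ since, even if the limiting PECs are infinite, satisfaction of an infinite PEC is equivalent to satisfaction of every finite sub-PEC, each of which appears at some finite stage — so $\pecsat{\closure(C)}{t}$ follows.

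The main obstacle is bookkeeping around \emph{infinite} PECs and normalization: the closure of a finite PCS can contain infinite equivalence classes (as the excerpt's example $c_1' = \{0 = 1.0 = 0.0.0 = \ldots\}$ shows), so I must phrase ``satisfaction of a PEC'' in a way that behaves well in the limit — namely, $\pecsat{c}{t}$ for infinite $c$ should be read as: there is a single $t'$ with $\at{t}{p} = t'$ for all $p \in c$, equivalently all finite subsets are satisfied with a common value. With that reading, the limit argument goes through cleanly, and the inductive invariant ``every PEC of $C_n$ is satisfied by $t$ with value equal to $\at{t}{p}$ for each of its members'' carries all the weight. Everything else is routine unfolding of definitions.
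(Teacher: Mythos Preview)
Your proposal is correct and follows essentially the same approach as the paper: trivialize the $\Leftarrow$ direction via $C \subseteq \closure(C)$, and for $\Rightarrow$ induct on the stages $C_n$ of the closure, using the one-step lemma that $\at{t}{p'} = \at{t}{p''}$ plus compositionality of $\at{\cdot}{\cdot}$ yields $\at{t}{p'.p} = \at{t}{p''.p}$. You are in fact more careful than the paper about normalization and the passage to the (possibly infinite-PEC) limit, which the paper's proof leaves implicit.
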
  


\label{sec:contains-closure-algorithm}
\begin{restatable}[Consistency of a Closed PCS]{theorem}{thmgenconsistencyfree}
\label{thm:consistency-prefix-free}
Let $C$ be a closed PCS. Then $C$ is inconsistent iff one of the $c_i\in C$ is not prefix-free.
\end{restatable}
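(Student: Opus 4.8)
The plan is to prove both directions. The easy direction is: if some $c_i \in C$ is not prefix-free, then $C$ is inconsistent. Indeed, if $p \sqsubset p'$ with $p, p' \in c_i$, then any $t$ satisfying $c_i$ would need $\at{t}{p} = \at{t}{p'} = \at{t}{p.q}$ for the nonempty suffix $q$ with $p' = p.q$; but a subterm of $t$ at $p$ cannot equal a proper subterm of itself since terms are finite trees (formally, $|\at{t}{p.q}| < |\at{t}{p}|$ whenever $q \neq \epsilon$ and both are defined, and if $\at{t}{p.q} = \bot$ then $c_i$ is trivially unsatisfiable on $t$). Hence no $t$ satisfies $c_i$, so $C$ is inconsistent.

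The substantive direction is the converse: if $C$ is closed and every $c_i \in C$ is prefix-free, then $C$ is consistent, i.e. there is a term $t$ with $\pecsat{C}{t}$. The natural approach is to construct such a $t$ explicitly. First I would observe that since the $c_i$ are pairwise disjoint PECs and each is prefix-free, I can define a relation $\sim$ on $\nat^*$ by $p \sim p'$ iff $p = p'$ or $p, p'$ lie in a common $c_i$ — but this is not yet an equivalence closed under extension. The key is that closedness lets me extend: define $p \approx p'$ to be the smallest equivalence relation containing $\sim$ and closed under the rule "$p \approx p' \implies p.q \approx p'.q$". Closedness of $C$ is exactly what guarantees that this generated relation, restricted to the paths actually appearing (with extensions), does not collapse a path with a proper prefix of itself — that is, $\approx$ remains "prefix-free" in the sense that no equivalence class contains two comparable paths. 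I would prove this by induction on the derivation of $p \approx p'$: the base case is prefix-freeness of the individual $c_i$, and the inductive step for the extension rule and for transitivity needs the closedness hypothesis to show that any newly-derived equality $p'' . p \in c_2$ was already "present" (the closure condition says $c_2$ already contains $p''.p$ whenever it contains $p'.p$ and $p', p''$ are in some $c_1$), so no genuinely new comparability is introduced. This "no comparable pair in any class" invariant is the heart of the argument and I expect it to be the main obstacle — it is essentially a name-free occurs-check lemma, and getting the induction hypothesis strong enough (perhaps phrased as: if $p \approx p.q$ then $q = \epsilon$) will require care.

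Granting that invariant, I would finish by building the witness term. Define a partial labeling: pick a fresh nullary symbol, or better, observe that for any path $p$ the set of "required children indices" is $\{i : p.i \approx p'.j \text{ for some } p' \text{ currently labeled}\}$; more simply, I can define $t$ by choosing, at each $\approx$-class of paths, a single function symbol consistently — concretely, build $t$ top-down, and at each position assign the same symbol to $\approx$-equivalent positions. Because no class contains two comparable paths, this construction bottoms out (there is a well-founded measure: the length of the longest path in play is bounded, or we work with the finitely many paths mentioned in $C$ and their bounded extensions and fill the rest with a nullary symbol), yielding a finite term. By construction $\at{t}{p} = \at{t}{p'}$ whenever $p \approx p'$, and in particular whenever $p, p'$ lie in a common $c_i \in C$, so $\pecsat{C}{t}$, establishing consistency. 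Finally I would note that for the infinite-closure case (as in the $\{0=1.0\},\{0.0=1\}$ example, whose closure has infinite PECs), the same argument applies verbatim since the construction only ever needs finitely many positions to disagree; alternatively, one can invoke Theorem~\ref{thm:closure-correctness} to reduce satisfaction over $\closure(C)$ back to $C$, but here $C$ is already assumed closed so this is not needed.
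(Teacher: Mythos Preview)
Your proof of the easy direction matches the paper's. For the hard direction, your approach is correct in outline but takes a genuinely different route from the paper.

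The paper does \emph{not} build an extended equivalence relation $\approx$ on paths. Instead, it defines a relation $\sqsubseteq$ on the \emph{PECs themselves}: $c_1 \sqsubseteq c_2$ iff some path in $c_1$ is a prefix of some path in $c_2$. The entire technical content is then compressed into showing that $\sqsubseteq$ is a partial order, specifically that it is anti-symmetric. The anti-symmetry argument is a short direct computation: if $c_1 \sqsubseteq c_2 \sqsubseteq c_1$ with $c_1 \neq c_2$, then using closedness once one produces two comparable paths inside a single $c_i$, contradicting prefix-freeness. Once $\sqsubseteq$ is a partial order, the PECs can be topologically sorted and a witness term built by choosing values along the sort.

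Your route via $\approx$ is more hands-on and arguably more constructive, but the ``no comparable paths in any $\approx$-class'' lemma you identify as the main obstacle is genuinely harder to prove than the paper's anti-symmetry lemma: you have to induct on derivations of $p \approx p'$ and handle transitivity and extension simultaneously, whereas the paper's argument is a two-line algebraic manipulation with one appeal to closedness. What your approach buys is a more explicit description of the witness term (assigning symbols to $\approx$-classes top-down), while the paper's ``inductively choose a term for each minimal $c_i$'' is terser but leaves more to the reader. Your final paragraph about the infinite-closure case is unnecessary and slightly confused: since $C$ is assumed closed, there is nothing further to say, and you correctly note this yourself.
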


\begin{wrapfigure}[8]{r}{0.2\textwidth}
  \vspace{-5pt}
  \includegraphics[width=0.2\textwidth]{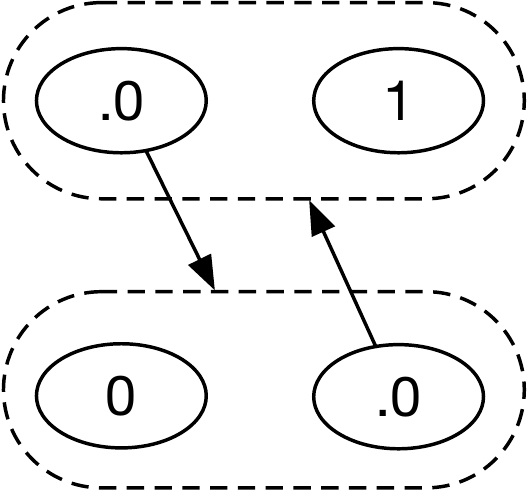}
\end{wrapfigure}
Together \autoref{thm:closure-correctness} and \autoref{thm:consistency-prefix-free}
ensure the correctness of our consistency checking procedure;
what is left is to implement the closure computation efficiently.
It turns out this can be done using the well-known congruence closure algorithm 
for the first-order theory of equality and uninterpreted functions~\cite{DBLP:journals/jacm/NelsonO80}.
This algorithm finitely represents a possibly infinite set of equalities using an e-graph.
%
Hence, to check consistency of a PCS $C$, we can simply
\begin{enumerate*}
  \item add each path of $C$ into an e-graph, interpreting path prefixes as subterms 
  (\ie $1.0$ is $.0$ applied to $1$);
  \item merge all paths from the same PEC into one e-class and run congruence closure on the e-graph;
  \item check if the resulting e-graph has cycles; if so, then $C$ is inconsistent.
\end{enumerate*}
The figure on the right shows the (cyclic) e-graph 
obtained by running this algorithm on our example $\{\{0=1.0\}, \{0.0=1\}\}$.


\subsection{Acyclic ECTAs: Core Definition}

\begin{figure}[t]  
  \begin{minipage}{.5\textwidth}
  \small  
  \centering
  \textbf{Syntax}
  $$
  \begin{array}{rll}
    c \gramdef& \{p_1 = \dots = p_n\} & \text{path equivalence classes} \\
    C \gramdef& \{c_1,\dots,c_m\} & \text {path constraint sets} \\
    n \gramdef& \node{\many{e}} & \text{nodes (states)}\\
    e \gramdef& \edge{s}{\many{n}}{C} \mid \edgebot & \text{transitions}\\
  \end{array}
  $$
  \end{minipage}%
  \begin{minipage}{.5\textwidth}
  \small
  \centering
  \textbf{Denotation}
  $$
  \begin{array}{rl}
    \\\\
    \dennode{\node{\many{e}}} &=  \bigcup_i \denedge{e^i} \\
    \denedge{\edge{s}{\many{n}}{C}} &= \left\{ s(\many{t}) \bigmid t^i \in \dennode{n^i}, \pecsat{C}{s(\many{t})}  \right\}
  \end{array}
  $$
  \end{minipage}
  \caption{Acyclic ECTAs: syntax and semantics. Here $s\in \Sigma$ and $p$ is a path.}
  \label{fig:syntax}
\end{figure}

%

Like string automata, tree automata are usually formalized as graphs, 
defined by a set of states and a transition function. 
For our purposes, it is more convenient to formalize ECTAs using a recursive grammar,
in the same style VSAs are typically presented~\cite{polozov2015flashmeta}.

\mypara{Syntax}
\autoref{fig:syntax} (left) shows the grammar for acyclic ECTAs,
consisting of mutually recursive definitions for nodes (states) $n \in N$ and transitions $e \in E$;
an ECTA then is identified with its root node,
which represents the final state.%
\footnote{Although this representation is restricted to ECTAs with a single final state (the root node),
this is not an important restriction:
any acyclic tree automaton is equivalent to the same automaton with all its final states merged into one.}
In a transition $\edge{s}{\many{n}}{C}$,%
\footnote{Hereafter we write $\many{x}$ to denote a sequence of $x$s,
with $x^i$ referring to the $i$-th element of that sequence.}
the number of child nodes $|\many{n}|$ must equal $\arity(s)$;
both $\many{n}$ and $C$ can be omitted when empty.
As is common for VSAs, we assume implicit sharing of sub-trees:
that is, an acyclic ECTA is a DAG with no duplicate sub-graphs.

The special symbol \edgebot denotes an ``empty transition'',
which is used in intermediate results of ECTA operations.
For symmetry, we also abbreviate the empty node, $\node{}$, as \nodebot.
A \emph{normalized} ECTA contains no occurrences of \edgebot or \nodebot,
unless the root is itself \nodebot. 
Any ECTA can be normalized by iteratively replacing any transition containing a \nodebot child with \edgebot, 
and removing all instances of \edgebot from the children of each node. 
For instance, $\node{\edgeucnull{\s{a}}, \edgeuc{+}{[\node{\edgeucnull{\s{b}}}, \nodebot]}}$ 
normalizes to $\node{\edgeucnull{\s{a}})}$.
We assume henceforth that all ECTAs are implicitly normalized after every operation.

\mypara{Semantics and Spurious Runs}
The \emph{denotation} of an acyclic ECTA, \ie the set of terms it accepts, 
is defined in \autoref{fig:syntax} (right) as a pair of mutually-recursive functions:
$\dennode{\cdot}\colon N \to \P{\closedtermsof{\Sigma}}$ and $\denedge{\cdot}\colon E \to \P{\closedtermsof{\Sigma}}$.
We define a partial order $\prec$ on ECTAs as the subset order on their denotations:
$n_1 \prec n_2$ iff $\dennode{n_1} \subseteq \dennode{n_2}$.
The \emph{skeleton} of an ECTA, \skeleton{n}, is obtained by recursively removing all path constraints from its transitions.
A \emph{spurious run} of $n$ is a term $t$, that is rejected by $n$ but accepted by its skeleton:
$t \notin \dennode{n} \wedge t \in \dennode{\skeleton{n}}$.




\subsection{Basic Operation: Union and Intersection}

We now present algorithms for two basic operations on ECTAs, union and intersection. 
They serve as building blocks for our two core contributions: static and dynamic reduction.

\mypara{Union}
The union of two ECTAs, $n_1 \union n_2$, simply merges the transition of their root nodes:
\begin{definition}[Union]
Let $n_1=\node{\many{e_1}}, n_2=\node{\many{e_2}}$ be two nodes. 
Then $n_1 \union n_2 = \node{\many{e_1}\cup\many{e_2}}$.
\end{definition}

\noindent
\autoref{fig:ecta-ops} gives an example of ECTA union $n_u = n_1 \union n_2$.

\begin{theorem}[Correctness of ECTA Union]
$\dennode{n_1 \union n_2} = \dennode{n_1} \cup \dennode{n_2}$.
\end{theorem}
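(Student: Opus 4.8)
The plan is to prove $\dennode{n_1 \union n_2} = \dennode{n_1} \cup \dennode{n_2}$ by directly unfolding the denotation function from \autoref{fig:syntax}. Write $n_1 = \node{\many{e_1}}$ and $n_2 = \node{\many{e_2}}$, so that by definition $n_1 \union n_2 = \node{\many{e_1} \cup \many{e_2}}$. The whole argument is a short chain of set equalities:
\begin{align*}
\dennode{n_1 \union n_2}
 &= \dennode{\node{\many{e_1} \cup \many{e_2}}}
  = \bigcup_{e \in \many{e_1} \cup \many{e_2}} \denedge{e} \\
 &= \left(\bigcup_{e \in \many{e_1}} \denedge{e}\right) \cup \left(\bigcup_{e \in \many{e_2}} \denedge{e}\right)
  = \dennode{n_1} \cup \dennode{n_2}.
\end{align*}
The first step is the definition of $\union$, the second and fourth are the clause $\dennode{\node{\many{e}}} = \bigcup_i \denedge{e^i}$, and the third is the elementary fact that a union indexed over $\many{e_1} \cup \many{e_2}$ splits into the union over $\many{e_1}$ and the union over $\many{e_2}$.

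The one subtlety worth flagging is the treatment of the edge lists $\many{e_1}, \many{e_2}$ as sets versus sequences, and the role of implicit normalization and sharing. Since the paper states that acyclic ECTAs are DAGs with no duplicate subgraphs and that results are implicitly normalized, I would note that $\many{e_1} \cup \many{e_2}$ denotes the set-theoretic union of the transition multisets, so that duplicated transitions (and hence duplicated terms) collapse harmlessly; this is exactly why the indexed-union split in the third step is valid even when $\many{e_1}$ and $\many{e_2}$ overlap. I would also observe that $\denedge{\edgebot} = \emptyset$, so normalization (dropping $\edgebot$ children) does not change the denotation, making the claim robust to the normalization convention.

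I do not anticipate a genuine obstacle here: the statement is essentially definitional, and the proof is a one-line calculation once the denotation is unfolded. If anything, the only care needed is bookkeeping — being explicit that $\denedge{\cdot}$ is applied elementwise and that the big union over a set of transitions is what $\dennode{\cdot}$ computes — so that the split of $\bigcup_{e \in \many{e_1}\cup\many{e_2}}$ into two unions is justified rather than merely asserted.
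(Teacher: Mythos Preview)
Your proposal is correct and matches the paper's approach: the paper treats this as immediate from the definition of $\dennode{\cdot}$, and your chain of equalities is precisely the unfolding that justifies that claim. Your added remarks on duplicate transitions and normalization are harmless bookkeeping but not required.
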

\begin{trivproof}
Immediately from the definition of $\dennode{\cdot}$.
\end{trivproof}

\mypara{Intersection}
The intersection of two ECTAs is more involved.
Intersecting two nodes, $n_1 \intersect n_2$, involves intersecting all pairs of their transitions;
intersecting two transitions, $e_1 \intersect e_2$, in turn, involves intersecting their child nodes point-wise,
and is only well-defined if the symbols and PCSs of $e_1$ and $e_2$ are compatible:
\begin{definition}[Intersection]
Let $n_1=\node{\many{e_1}}, n_2=\node{\many{e_2}}$ be two nodes, then:
\[
  n_1 \intersect n_2 = \nodesymbol\left(\left\{e_1^i \intersect e_2^j \bigmid e_1^i\in\many{e_1}, e_2^j\in \many{e_2} \right\}\right)
\]
  
Let $e_1=\edge{s_1}{[n_1^0\dots n_1^{k-1}]}{C_1}$, $e_2=\edge{s_2}{[n_2^0\dots n_2^{l-1}]}{C_2}$ be two transitions, then:
\[
  e_1 \intersect e_2 = \begin{cases}
                   \edge{s_1}{[n_1^0 \intersect n_2^0,\dots,n_1^{k-1} \intersect n_2^{k-1}]}{C_1 \cup C_2} & \text{if}\ s_1 = s_2\ \text{and}\ C_1 \cup C_2\ \text{is consistent} \\
                   \edgebot & \text{otherwise}
                 \end{cases}
\]
\end{definition}

Consider the example of ECTA intersection $n_i = n_1 \intersect n_2$ in \autoref{fig:ecta-ops}.
To compute the intersection at the top level,
we intersect all pairs of transitions---$(f,g)$, $(f,h)$, $(g,g)$, and $(g,h)$---%
but the three pairs with incompatible function symbols simpy yield $\edgebot$ and are discarded.
To intersect the two $g$-transitions, we recursively intersect their \T{targ} and \T{tret} nodes;
the resulting $g$-transition also inherits its constraint from $n_2$.

\begin{theorem}[Correctness of ECTA Intersection]
$\dennode{n_1 \intersect n_2} = \dennode{n_1} \cap \dennode{n_2}$.
\end{theorem}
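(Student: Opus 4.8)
The plan is to prove the two set inclusions $\dennode{n_1 \intersect n_2} \subseteq \dennode{n_1} \cap \dennode{n_2}$ and $\dennode{n_1} \cap \dennode{n_2} \subseteq \dennode{n_1 \intersect n_2}$ by mutual induction on the structure of the (acyclic) ECTAs, with a parallel statement for transitions: $\denedge{e_1 \intersect e_2} = \denedge{e_1} \cap \denedge{e_2}$. Since acyclic ECTAs are finite DAGs, structural induction is well-founded; the node case and the transition case are proved simultaneously, the node case appealing to the transition case at the same ``level'' and the transition case appealing to the node case on strictly smaller children.

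For the node case, unfold the definition: $\dennode{n_1 \intersect n_2} = \bigcup_{i,j} \denedge{e_1^i \intersect e_2^j}$. Using the transition-level equality (induction hypothesis / simultaneous claim), this equals $\bigcup_{i,j}\left(\denedge{e_1^i} \cap \denedge{e_2^j}\right)$, which by distributivity of intersection over union equals $\left(\bigcup_i \denedge{e_1^i}\right) \cap \left(\bigcup_j \denedge{e_2^j}\right) = \dennode{n_1} \cap \dennode{n_2}$. One subtlety: transitions whose intersection is $\edgebot$ contribute $\denedge{\edgebot} = \emptyset$, so they can be freely dropped from the union; I should note that $\denedge{\edgebot}=\emptyset$ (consistent with normalization) and that discarding these terms is harmless.

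For the transition case, let $e_1=\edge{s_1}{[n_1^0\dots n_1^{k-1}]}{C_1}$ and $e_2=\edge{s_2}{[n_2^0\dots n_2^{l-1}]}{C_2}$. If $s_1 \neq s_2$, then $\denedge{e_1}$ and $\denedge{e_2}$ consist of terms with distinct head symbols, so their intersection is empty, matching $\denedge{\edgebot}=\emptyset$; also $s_1=s_2$ forces $k=l$ since both equal $\arity(s_1)$. When $s_1 = s_2 = s$, a term $s(\many t)$ lies in $\denedge{e_1}\cap\denedge{e_2}$ iff $t^i \in \dennode{n_1^i}$ for all $i$, $t^i \in \dennode{n_2^i}$ for all $i$, $\pecsat{C_1}{s(\many t)}$, and $\pecsat{C_2}{s(\many t)}$. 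By the node-level induction hypothesis applied to each child, $t^i \in \dennode{n_1^i}\cap\dennode{n_2^i} = \dennode{n_1^i \intersect n_2^i}$; and $\pecsat{C_1}{s(\many t)} \wedge \pecsat{C_2}{s(\many t)}$ is by definition $\pecsat{C_1 \cup C_2}{s(\many t)}$ (using that the normalization of $C_1\cup C_2$ does not change satisfaction, and, if one wishes, \autoref{thm:closure-correctness}). This is exactly the condition for membership in $\denedge{\edge{s}{[\many{n_1^i\intersect n_2^i}]}{C_1\cup C_2}}$. The only remaining point is the consistency side condition: if $C_1 \cup C_2$ is inconsistent, then no term satisfies it, so both $\denedge{e_1}\cap\denedge{e_2}$ and $\denedge{\edgebot}$ are empty; if it is consistent, the case split in the definition gives precisely the transition just described. (Normalization of children to $\nodebot$ is likewise harmless since $\dennode{\nodebot}=\emptyset$ collapses the whole transition's denotation to $\emptyset$, consistent with unsatisfiable constraints or an empty child.)

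I expect the main obstacle to be bookkeeping rather than deep mathematics: making the mutual induction on nodes and transitions precise (on an acyclic ECTA the right measure is, e.g., DAG height or a well-founded order on the sub-ECTA relation), and carefully justifying that $\pecsat{C_1}{s(\many t)} \wedge \pecsat{C_2}{s(\many t)} \Leftrightarrow \pecsat{C_1 \cup C_2}{s(\many t)}$ under implicit normalization of the PCS (this is immediate from the definition of PCS satisfaction as a conjunction over member PECs, plus the fact that merging non-disjoint PECs preserves the set of satisfying terms, which is also what \autoref{thm:closure-correctness} underwrites). The distributivity step and the empty-transition/empty-node cleanups are routine. I would present the proof as: (1) state the simultaneous node/transition claim; (2) do the transition case by the symbol split and the constraint-union argument, invoking the IH on children; (3) do the node case by distributivity; (4) remark that normalization and $\edgebot/\nodebot$ denote $\emptyset$, so implicit normalization is sound.
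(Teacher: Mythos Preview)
Your proposal is correct and follows exactly the approach the paper indicates: the paper's own proof is the single sentence ``By induction on the structure of the two ECTAs,'' and your mutual node/transition induction with the distributivity step and the symbol/constraint case split is precisely the natural way to carry that out. Your handling of the bookkeeping (arity agreement when $s_1=s_2$, $\edgebot/\nodebot$ denoting $\emptyset$, and $\pecsat{C_1}{t}\wedge\pecsat{C_2}{t}\Leftrightarrow\pecsat{C_1\cup C_2}{t}$) is sound and fills in the details the paper omits.
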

\begin{trivproof}
By induction on the structure of the two ECTAs.
\end{trivproof}

\begin{proposition}
$\nodebot \intersect n = n \intersect \nodebot = \nodebot$
\end{proposition}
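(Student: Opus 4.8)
The plan is to prove the two equations $\nodebot \intersect n = \nodebot$ and $n \intersect \nodebot = \nodebot$ directly by unfolding the definition of intersection, recalling that $\nodebot$ is by definition the empty node $\node{}$, \ie the node with an empty sequence of transitions. Since this is a purely syntactic identity (not merely a statement about denotations), I expect the argument to be a one-line computation in each direction.

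First I would handle $\nodebot \intersect n$. Write $n = \node{\many{e}}$ and $\nodebot = \node{\many{e'}}$ where $\many{e'}$ is the empty sequence. By the definition of node intersection, $\nodebot \intersect n = \nodesymbol\left(\left\{ e'^i \intersect e^j \bigmid e'^i \in \many{e'},\ e^j \in \many{e} \right\}\right)$. Because $\many{e'}$ is empty, there are no pairs $(e'^i, e^j)$ to range over, so the comprehension yields the empty set of transitions; hence $\nodebot \intersect n = \node{} = \nodebot$. The identity $n \intersect \nodebot = \nodebot$ is symmetric: the comprehension now ranges over pairs $(e^i, e'^j)$ with $e'^j$ drawn from the empty sequence, again producing no transitions. (Alternatively, one could note that intersection is commutative up to the syntactic identity relation used for ECTAs, but spelling out both directions is cleaner and avoids invoking a commutativity lemma that has not been stated.)

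There is essentially no obstacle here; the only point requiring a word of care is that the statement is about the \emph{normalized} ECTA (the paper assumes all ECTAs are implicitly normalized after every operation), and indeed $\node{}$ \emph{is} already normalized—it is precisely the case where ``the root is itself \nodebot'', which the normalization convention explicitly permits. So no further normalization step is needed, and the empty-comprehension computation already lands on the normalized answer. If one wanted a denotational sanity check, it also follows from the Correctness of ECTA Intersection theorem together with $\dennode{\nodebot} = \emptyset$, since $\emptyset \cap \dennode{n} = \emptyset = \dennode{\nodebot}$—but this only gives denotational, not syntactic, equality, so I would present the syntactic computation as the actual proof and mention the denotational version only as a remark if space allows.
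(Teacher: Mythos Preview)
Your proof is correct. The paper states this proposition without proof, treating it as immediate; your direct unfolding of the node-intersection definition with an empty transition sequence is exactly the obvious argument that justifies it.
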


\begin{corollary}
Define $n_1 \cong n_2$ if $\dennode{n_1}=\dennode{n_2}$. Then, with respect to $(\cong)$, the $(\intersect)$ and $(\union)$ operations form a distributive lattice, with $\nodebot$ as the bottom element, and $(\prec)$ as the order.
\end{corollary}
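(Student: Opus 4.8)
The plan is to observe that the semantic map $\dennode{\cdot}$ is a lattice homomorphism from ECTAs, equipped with $\union$ and $\intersect$, into the powerset lattice $(\P{\closedtermsof{\Sigma}}, \cup, \cap)$, and that $\cong$ is precisely its kernel. Concretely, the Correctness of ECTA Union and Correctness of ECTA Intersection theorems above say $\dennode{n_1 \union n_2} = \dennode{n_1} \cup \dennode{n_2}$ and $\dennode{n_1 \intersect n_2} = \dennode{n_1} \cap \dennode{n_2}$, i.e., $\dennode{\cdot}$ commutes with both operations; and by definition $n_1 \cong n_2$ iff $\dennode{n_1} = \dennode{n_2}$. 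Everything then follows by transporting standard universal-algebra facts along this homomorphism, which is why this is a corollary rather than a theorem.

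First I would check that $\cong$ is a congruence for $\union$ and $\intersect$: if $n_1 \cong n_1'$ and $n_2 \cong n_2'$ then $\dennode{n_1 \union n_2} = \dennode{n_1}\cup\dennode{n_2} = \dennode{n_1'}\cup\dennode{n_2'} = \dennode{n_1' \union n_2'}$, and symmetrically for $\intersect$. Hence $\union$ and $\intersect$ descend to well-defined operations on the quotient $N/{\cong}$, and (by the first isomorphism theorem) $\dennode{\cdot}$ induces an isomorphism of algebras between $N/{\cong}$ and the image $\mathcal{D} = \{\dennode{n} \mid n \in N\}$. By the same two correctness theorems, $\mathcal{D}$ is closed under $\cup$ and $\cap$, so it is a \emph{sublattice} of $(\P{\closedtermsof{\Sigma}}, \cup, \cap)$. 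The powerset lattice is distributive (set union and intersection distribute over each other), and distributivity, being a universally quantified identity, is inherited by every sublattice; therefore $\mathcal{D}$, and with it $N/{\cong}$, is a distributive lattice. In particular this subsumes the associativity, commutativity, idempotence, and absorption laws for $\union$ and $\intersect$ modulo $\cong$ — all of which one could alternatively reprove directly by unfolding $\dennode{\cdot}$ and appealing to the corresponding set-theoretic identity, but that is unnecessary.

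It remains to identify the bottom element and the order. Unfolding the denotation, $\dennode{\nodebot} = \dennode{\node{}} = \bigcup_i \denedge{e^i}$ ranging over the empty sequence of transitions, which is $\emptyset$; since $\emptyset \subseteq S$ for every $S \in \mathcal{D}$, the class of $\nodebot$ is the least element of $N/{\cong}$ — equivalently $\dennode{\nodebot \union n} = \emptyset \cup \dennode{n} = \dennode{n}$, so $\nodebot \union n \cong n$ (and dually $\nodebot \intersect n \cong \nodebot$, consistent with the preceding proposition). For the order, the intrinsic lattice order is $[n_1] \le [n_2]$ iff $n_1 \union n_2 \cong n_2$; unfolding, this holds iff $\dennode{n_1} \cup \dennode{n_2} = \dennode{n_2}$ iff $\dennode{n_1} \subseteq \dennode{n_2}$, which is exactly $n_1 \prec n_2$. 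Reflexivity and transitivity of $\prec$ come from those of $\subseteq$, and antisymmetry holds on the quotient since $\dennode{n_1}\subseteq\dennode{n_2}\subseteq\dennode{n_1}$ forces $n_1 \cong n_2$; so $\prec$ descends to a partial order on $N/{\cong}$ that coincides with the lattice order.

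There is no genuine obstacle here: the only point requiring care is the distinction between syntactic ECTAs and their $\cong$-classes — $\union$ and $\intersect$ are visibly neither idempotent nor even well-defined up to syntactic equality, so passing to the quotient is essential — but once one works modulo $\cong$, every lattice axiom and distributivity are inherited from $(\P{\closedtermsof{\Sigma}}, \cup, \cap)$ by transport along $\dennode{\cdot}$, leaving nothing to compute.
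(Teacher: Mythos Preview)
Your argument is correct and is precisely the intended one: the paper states this as a bare corollary with no proof, relying implicitly on the fact that the two correctness theorems make $\dennode{\cdot}$ a homomorphism into the powerset lattice, so that all lattice identities (including distributivity) transport back along it modulo $\cong$. Your write-up simply spells out that implicit reasoning in full, including the congruence check, the identification of $\dennode{\nodebot}=\emptyset$, and the matching of $\prec$ with the lattice order---there is nothing to add or correct.
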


\begin{figure}
  \begin{minipage}{.5\textwidth}
    \includegraphics[scale=0.5]{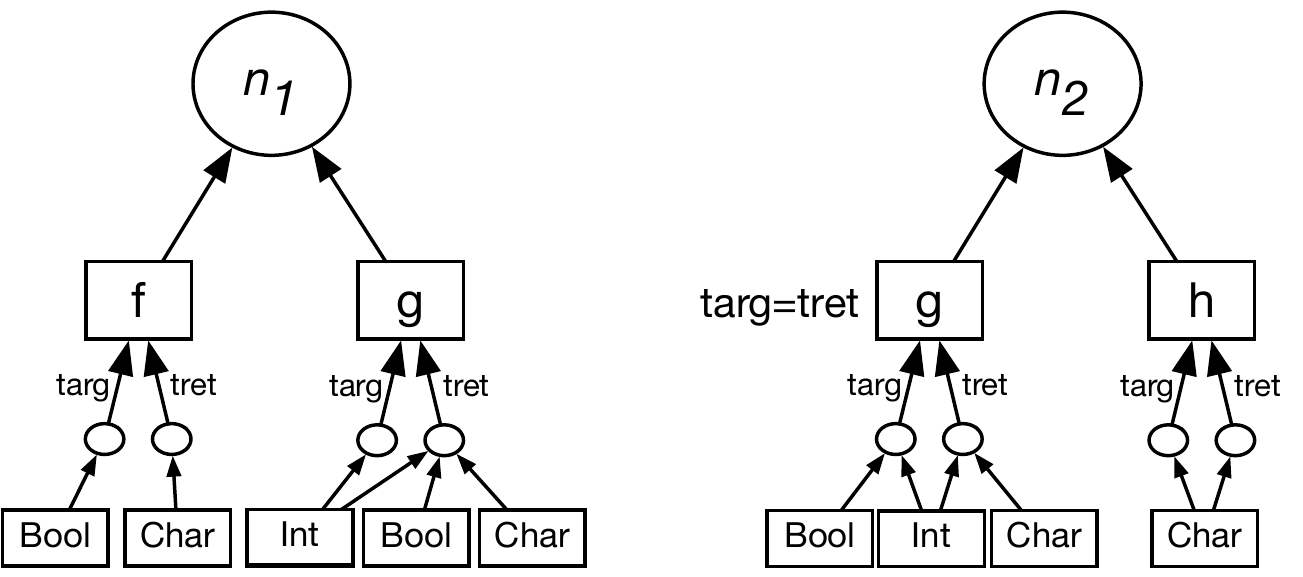}
  \end{minipage}%
  \begin{minipage}{.35\textwidth}
    \includegraphics[scale=0.5]{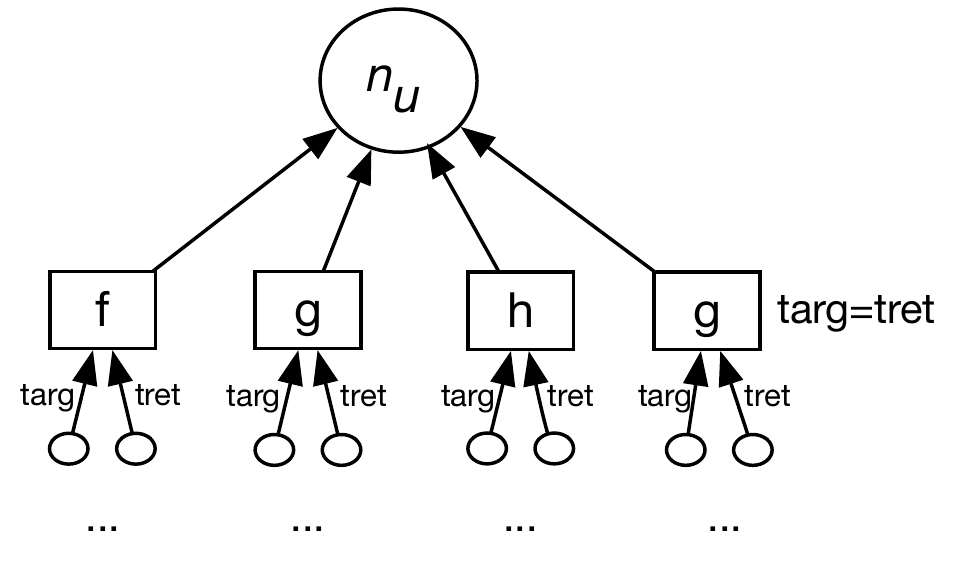}
  \end{minipage}%
  \begin{minipage}{.15\textwidth}
    \includegraphics[scale=0.5]{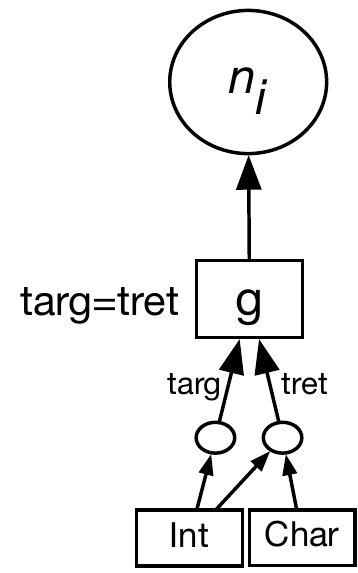}
  \end{minipage}
\caption{Two ECTAs $n_1$ and $n_2$, their union $n_u$ and intersection $n_i$.}\label{fig:ecta-ops}
\end{figure}

\subsection{Static Reduction}
\label{sec:static-reduction}

We are now ready to present static reduction, 
the first of the two core algorithms that enable efficient extraction of terms satisfying ECTA constraints. 
Consider the example in \autoref{fig:overview-ex1}.
Intuitively, the constraint \T{fun.targ = arg.type} has been \emph{reduced} in \autoref{fig:overview-ex1-reduced},
because with $f$ elimitated,
\emph{every} possible formal parameter type at path \T{fun.targ} matches \emph{some} actual parameter type at path \T{arg.type}.
%
%
More generally, a binary constraint $p_1=p_2$ is reduced if everything at path $p_1$ matches something at path $p_2$; 
this definition extends naturally to non-binary constraints. 
We now define the machinery to state this formally, and then provide a simple algorithm for reducing a constraint, 
which builds upon ECTA intersection.

\mypara{Subautomaton at a Path}
First, we generalize the definition of a subterm at a path, $\at{t}{p}$, to ECTAs:

\begin{definition}[Nodes at path, Subautomaton at a path]
The set $\nodesat(n,p)$ of nodes reachable from $n=\node{\many{e}}$ via path $p$ is defined as:
\[
  \nodesat(n, \epsilon) = \{n\}  \quad\quad\quad
  \nodesat(n, j.p) = \bigcup_{i} \nodesat(e^i, j.p)
\]
%
%
The set $\nodesat(e,p)$ of nodes reachable from a transition $e=\edge{s}{\many{n}}{C}$ 
is defined as:
\[
    \nodesat(e, j.p) = \begin{cases}
         \nodesat(n^j, p) & j < \arity(s) \\
         \emptyset & \text{otherwise}
    \end{cases}
\]
Finally, the \emph{subautomaton} of $n$ at path $p$ is defined as $\at{n}{p} = \bigunion \nodesat(n, p)$;
similarly, the subautomaton of $e$ is defined as $\at{e}{p} = \bigunion \nodesat(e, p)$.
\end{definition}

\noindent
In \autoref{fig:overview-ex1-orig}, if $n$ is the root node, 
then $\nodesat(n, \T{arg.type}) = \{ \node{\edgeucnull{\T{Int}}}, \node{\edgeucnull{\T{Char}}} \}$
and $\at{n}{\texttt{arg.type}} = \node{[\edgeucnull{\T{Int}}, \edgeucnull{\T{Char}}]}$.
The reader might be wondering why define $\nodesat(e, j.p)=\emptyset$ for an out-of-bounds index $j$
instead of restricting these and following definitions to ``well-formed'' paths.
The rationale is to enable ECTAs to have ``cousin'' transitions with different arities,
and be able to navigate to nodes and subautomata at higher arities, 
by simply discarding branches with lower arities;
this flexibility is required, for instance, in our full encoding of type-driven synthesis in \secref{sec:applications:hplus}.

Without equality constraints, 
the denotation of $\at{n}{p}$ would simply be the set of subterms $\at{t}{p}$ of all terms $t$ represented by $n$.
With equality constraints, $\at{n}{p}$ is an overapproximation of that set, 
since the equality constraints on the topmost layers get ignored:
\begin{theorem}[Correctness of subautomaton at a path]
\label{thm:nodes-at-path-correctness}
$\denotation{\at{n}{p}}^\text{N} \supseteq \left\{\at{t}{p} \bigmid t \in \denotation{n}^\text{N} \right\}$
\end{theorem}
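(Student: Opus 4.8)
The plan is to prove the inclusion by induction on the path $p$, tracking a single term $t \in \dennode{n}$ and showing that $\at{t}{p} \in \dennode{\at{n}{p}}$ whenever $\at{t}{p}$ is a genuine subterm (if $\at{t}{p} = \bot$ it contributes nothing to the right-hand side, since $\dennode{\cdot}$ contains only terms, so that case is vacuous). Before the induction I would record a routine monotonicity lemma: if $S \subseteq S'$ are finite sets of nodes, then $\dennode{\bigunion S} \subseteq \dennode{\bigunion S'}$. This follows directly from $\dennode{\node{\many{e}}} = \bigcup_i \denedge{e^i}$ together with correctness of ECTA union, and it is the ingredient that lets us pass from the subautomaton reachable through one particular transition to the subautomaton $\at{n}{p}$, which unions over all reachable branches.

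For the base case $p = \epsilon$, we have $\at{t}{\epsilon} = t$ and $\at{n}{\epsilon} = \bigunion \nodesat(n,\epsilon) = \bigunion \{n\} = n$, so the claim is immediate. For the inductive case $p = j.p'$, write $n = \node{\many{e}}$ and choose the transition $e^i = \edge{s}{[m^0,\dots,m^{k-1}]}{C}$ witnessing $t \in \dennode{n}$; then $t = s(\many{t})$ with $t^\ell \in \dennode{m^\ell}$ for every $\ell$. If $\at{t}{j.p'}$ is a genuine subterm, then necessarily $j < k = \arity(s)$ and $\at{t}{j.p'} = \at{t^j}{p'}$, which is in turn a genuine subterm of $t^j$. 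Applying the induction hypothesis to $t^j \in \dennode{m^j}$ and the shorter path $p'$ yields $\at{t^j}{p'} \in \dennode{\at{m^j}{p'}}$.

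It remains to connect $\at{m^j}{p'}$ with $\at{n}{j.p'}$. By definition of $\nodesat$ on transitions and $j < \arity(s)$ we have $\nodesat(e^i, j.p') = \nodesat(m^j, p')$, and $\nodesat(n, j.p') = \bigcup_\ell \nodesat(e^\ell, j.p') \supseteq \nodesat(m^j, p')$. The monotonicity lemma then gives $\dennode{\at{m^j}{p'}} = \dennode{\bigunion \nodesat(m^j, p')} \subseteq \dennode{\bigunion \nodesat(n, j.p')} = \dennode{\at{n}{j.p'}}$, so $\at{t}{j.p'} = \at{t^j}{p'} \in \dennode{\at{n}{j.p'}}$, completing the induction.

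I expect the only subtleties to be bookkeeping rather than mathematics: being careful that the right-hand side ranges only over genuine subterms, so the out-of-bounds ($\bot$) branches of $\at{\cdot}{\cdot}$ and of $\nodesat$ are handled vacuously; and noting explicitly that the path constraints $C$ on the chosen transition play no role in the argument — we simply discard them, which is precisely the source of the over-approximation and the reason the inclusion cannot be strengthened to an equality.
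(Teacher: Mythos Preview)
Your proof is correct and follows exactly the approach the paper indicates (induction on $p$); in fact the paper's own proof is simply the one-line remark ``By induction on $p$,'' so your write-up is a faithful and considerably more detailed expansion of it. The only point worth noting is that your induction hypothesis must quantify over all nodes (since you apply it to $m^j$ rather than $n$), which you clearly intend but might state explicitly.
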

\begin{trivproof}
By induction on $p$.
\end{trivproof}

\mypara{Reduction Criterion}
We can now formally state what it means for a constraint to be \textit{reduced}:

\begin{definition}[Reduction Criterion]
Let $e=\edge{s}{\many{n}}{C}$ be a transition and let $c=\{p_1=\dots=p_k\} \in C$.
We say that $e$ satisfies the \emph{reduction criterion} for $c$ 
(alternatively, $c$ is \emph{reduced} at $e$) if, 
for each $p_i,p_j\in c$ and each $n \in \nodesat(e, p_i)$, 
$n \intersect \at{e}{p_j} \neq \nodebot$.
\end{definition}

\noindent
The reduction criterion suggests an algorithm for reducing a path constraint: 
given a constraint $p_1=p_2$ on transition $e$, 
replace every node $n$ reachable via $p_1$ with $n \intersect \at{e}{p_2}$. 
As a result, every node in $\nodesat(e,p_1)$ will match some node in $\nodesat(e,p_2)$. 
%
For example, to reduce the constraint \T{fun.targ = arg.type} at the transition \T{app} in \autoref{fig:overview-ex1}, 
the algorithm first computes $\at{\T{app}}{\texttt{arg.type}}$, 
the automaton representing all possible actual parameter types;
the result is $n_a = \node{\edgeucnull{\T{Int}}, \edgeucnull{\T{Char}}}$.
Next, it intersects $n_a$ it with each of the three nodes reachable via \T{fun.targ},
that is, \T{Int}, \T{Char}, and \T{Bool}.
This has no effect on the \T{targ} children of $g$ and $h$, 
but the \T{targ} child of $f$ becomes \nodebot, 
leading to the removal of the $f$ transition upon normalizatoin
and resulting in \autoref{fig:overview-ex1-reduced}.

\mypara{Intersection at a Path}
In order to formalize the reduction algorithm outlined above,
we introduce the notion of \emph{intersection at a path}.
\begin{definition}[Intersection at a Path]
Intersecting node $n$ with node $n'$ at path $p$, denoted $\intersectatpath{n}{p}{n'}$, 
replaces all nodes reachable from $n$ via $p$ with their intersection with $n'$.
More formally, if $n=\node{\many{e}}$:
\[
  \intersectatpath{n}{\epsilon}{n'} = n \intersect n' \quad\quad\quad
  \intersectatpath{n}{j.p}{n'} = \node{\intersectatpath{e^i}{j.p}{n'}}
\]
where intersecting a transition $e=\edge{s}{[n^0,\dots,n^{k-1}]}{C}$ at a non-empty path $p$
is defined as:
\[
    \intersectatpath{e}{j.p}{n'} = \begin{cases}
      \edge{s}{[n^0,\dots,\intersectatpath{n^j}{p}{n'},\dots,n^{k-1}]}{C} & j < \arity(s) \\
         \edgebot & \text{otherwise}
    \end{cases}
\]
\end{definition}

\noindent
For example, in \autoref{fig:overview-ex1-orig}, intersecting the root node $n$ at path \T{fun.targ}
with the node $n_a$ from our previous example ($n_a = \node{\edgeucnull{\T{Int}}, \edgeucnull{\T{Char}}}$)
yields the ECTA in \autoref{fig:overview-ex1-reduced}.


\begin{lemma}
\label{lem:intersecting-at-path-already-present}
If $t\in \dennode{\at{n}{p}}$ and $t \in \dennode{n'}$, then $t\in \dennode{\at{\left(\intersectatpath{n}{p}{n'}\right)}{p}}$.
\end{lemma}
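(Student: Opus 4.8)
The plan is to prove the lemma by induction on the path $p$, with the base case reducing to correctness of ECTA intersection.

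\emph{Base case} $p = \epsilon$. Here $\nodesat(n, \epsilon) = \{n\}$, so $\at{n}{\epsilon} = n$; similarly $\intersectatpath{n}{\epsilon}{n'} = n \intersect n'$ and $\at{(n \intersect n')}{\epsilon} = n \intersect n'$. Thus the hypotheses become $t \in \dennode{n}$ and $t \in \dennode{n'}$, and the goal becomes $t \in \dennode{n \intersect n'}$, which is immediate since $\dennode{n \intersect n'} = \dennode{n} \cap \dennode{n'}$ by correctness of ECTA intersection.

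\emph{Inductive step} $p = j.p'$. Write $n = \node{\many{e}}$. Unfolding the definitions of $\nodesat$ and of $\at{\cdot}{\cdot}$ (a union over $\nodesat$), and using correctness of ECTA union to push $\dennode{\cdot}$ through $\bigunion$, the hypothesis $t \in \dennode{\at{n}{j.p'}}$ produces some transition $e = \edge{s}{[m^0,\dots,m^{k-1}]}{C}$ among $\many{e}$ with $j < k$ and $t \in \dennode{\at{m^j}{p'}}$. Together with $t \in \dennode{n'}$, the induction hypothesis applied to $m^j$, $p'$, and $n'$ gives $t \in \dennode{\at{(\intersectatpath{m^j}{p'}{n'})}{p'}}$. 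By the definition of intersection at a path, $\intersectatpath{n}{j.p'}{n'}$ contains the transition $e' = \edge{s}{[m^0,\dots,\intersectatpath{m^j}{p'}{n'},\dots,m^{k-1}]}{C}$ (with the modified node in position $j$), and since $j < k$ we have $\nodesat(e', j.p') = \nodesat(\intersectatpath{m^j}{p'}{n'}, p')$; hence $\at{(\intersectatpath{m^j}{p'}{n'})}{p'} = \at{e'}{j.p'} \prec \at{(\intersectatpath{n}{j.p'}{n'})}{j.p'}$. Therefore $t \in \dennode{\at{(\intersectatpath{n}{j.p'}{n'})}{j.p'}}$, completing the induction.

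I expect no real difficulty here; the argument is bookkeeping around the fact that $\nodesat(n, p)$ is in general a \emph{set} of nodes (one contribution per matching transition), so that both $\at{n}{p}$ and $\at{(\intersectatpath{n}{p}{n'})}{p}$ are unions---the argument only needs the single summand witnessing $t$, and monotonicity of $\dennode{\cdot}$ under $\bigunion$ does the rest. The two minor points to keep in mind, neither of which obstructs the proof, are: (i) transitions whose head index is out of bounds ($j \geq \arity(s)$) become $\edgebot$ under $\intersectatpath{\cdot}{\cdot}{\cdot}$ and contribute $\emptyset$ under $\nodesat$, so they are simply ignored on both sides; and (ii) the implicit re-normalization after $\intersectatpath{\cdot}{\cdot}{\cdot}$ preserves denotations. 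Notably, no consistency or occurs-check reasoning about PCSs is needed, since the PCS $C$ of each transition is carried along unchanged and is never inspected.
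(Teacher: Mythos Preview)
Your proof is correct and follows exactly the approach the paper indicates: induction on $p$, with the base case discharged by correctness of ECTA intersection and the inductive step by unfolding the definitions of $\nodesat$, $\at{\cdot}{\cdot}$, and $\intersectatpath{\cdot}{\cdot}{\cdot}$. The paper records only ``By induction on $p$'' for this lemma, so your write-up is in fact a faithful expansion of what the authors left implicit.
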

\begin{trivproof}
By induction on $p$. 
\end{trivproof}

\mypara{Reduction Algorithm}
With this new terminology, we can recast our previous explanation 
of how the constraint \T{fun.targ = arg.type} in \autoref{fig:overview-ex1} gets reduced:
once we have obtained the ``actual parameter automaton'' $n_a=\at{\T{app}}{\texttt{arg.type}}$,
we can simply return $\intersectatpath{n}{\texttt{fun.targ}}{n_a}$ (where $n$ is the root node).
This explanation needs one final tweak:
in this example, the information only propagates in one direction---from \T{arg.type} to \T{fun.targ}---%
because the types of the actuals happen to be a subset of the types of the formals;
in general, though, reduction needs to propagate information both ways.
Hence a more accurate recipe for how to perform the reduction in \autoref{fig:overview-ex1} is:
\begin{enumerate*}
\item compute the automaton
$n^* = \left(\at{n}{\texttt{fun.targ}}\right) \intersect \left(\at{n}{\texttt{arg.type}}\right)$,
capturing all \emph{shared} formal and actual parameter types;
\item intersects the root with $n^*$ at \emph{both} paths involved in the constraint:
$\intersectatpath{\left(\intersectatpath{n}{\texttt{fun.targ}}{n^*}\right)}{\texttt{arg.type}}{n^*}$. 
\end{enumerate*}
We extrapolate this description into a general algorithm for static reduction:

\begin{definition}[Static Reduction]
\label{defn:basic-static-reduction}
Let $c=\{p_1=\dots=p_k\}$ be a prefix-free PEC; then
\[
  \reduce(e, c) = \intersectatpath{\intersectatpath{e}{p_1}{n^*}\dots}{p_k}{n^*} \quad\quad\text{where}\quad
  n^* = \bigintersect_{p_i\in c} \at{e}{p_i}
\]
\end{definition}

\begin{restatable}[Completeness of Reduction]{theorem}{thmcompreduction}
  \label{thm:reduce-completeness}
$\reduce(e,c)$ satisfies the reduction criterion for $c$.
\end{restatable}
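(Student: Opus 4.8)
The plan is to track, through the two phases of intersection in Definition~\ref{defn:basic-static-reduction}, what the reduced transition looks like \emph{along the constrained paths}, and then read the reduction criterion off directly. Write $e' = \reduce(e,c)$ for $c = \{p_1 = \dots = p_k\}$ and $n^* = \bigintersect_{p_i \in c} \at{e}{p_i}$. If $n^* \cong \nodebot$, then each $\intersectatpath{\cdot}{p_i}{n^*}$ places a $\nodebot$ at $p_i$, so $e'$ normalizes to $\edgebot$, $\nodesat(e', p_i) = \emptyset$, and the criterion holds vacuously; so assume $\dennode{n^*} \neq \emptyset$. Since $c$ is prefix-free, $p_1, \dots, p_k$ are pairwise incomparable. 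I first record an auxiliary fact, proved by induction on $q$ (simultaneously for nodes and transitions): if $p$ and $q$ are incomparable, then $\nodesat(\intersectatpath{x}{q}{n'}, p) = \nodesat(x, p)$, and hence $\at{\cdot}{p}$ is unchanged up to $\cong$ --- intuitively, $\intersectatpath{\cdot}{q}{\cdot}$ only rewrites inside the $q$-subtree, which is disjoint from the $p$-subtree (we carry everything up to $\cong$ to side-step the implicit DAG sharing of ECTAs).

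Using this, I claim $\nodesat(e', p_i) = \{\, m \intersect n^* \mid m \in \nodesat(e, p_i)\,\}$ up to normalization and $\cong$: the $j$-th intersection step in $\reduce$ acts at path $p_j$, and for $j \neq i$ the auxiliary fact says it leaves $\nodesat(\cdot, p_i)$ alone, so only the $i$-th step matters, and it replaces each node $m$ reachable via $p_i$ by $m \intersect n^*$. Two consequences: (i) every $n \in \nodesat(e', p_i)$ has $\dennode{n} = \dennode{m} \cap \dennode{n^*} \subseteq \dennode{n^*}$, i.e.\ $n \prec n^*$, by correctness of intersection (where $n = m \intersect n^*$); and (ii) since $e'$ is normalized and $e' \neq \edgebot$, every node reachable from $e'$ is non-$\nodebot$, so $\dennode{n} \neq \emptyset$.

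Next I show $\at{e'}{p_j} \cong n^*$ for every constrained $p_j$. The inclusion $\at{e'}{p_j} \prec n^*$ is immediate from (i), since $\at{e'}{p_j}$ is a union of nodes each $\prec n^*$. For $n^* \prec \at{e'}{p_j}$, take $t \in \dennode{n^*}$; by definition of $n^*$, $t \in \dennode{\at{e}{p_l}}$ for all $l$. Walk through the $k$ intersection steps: the step at $p_j$, together with $t \in \dennode{n^*}$, preserves membership of $t$ in the denotation of the subautomaton at $p_j$ by Lemma~\ref{lem:intersecting-at-path-already-present}; the other steps, being at paths incomparable to $p_j$, preserve it by the auxiliary fact. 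Hence $t \in \dennode{\at{e'}{p_j}}$, which gives $n^* \prec \at{e'}{p_j}$ and therefore $\at{e'}{p_j} \cong n^*$.

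Finally, fix constrained paths $p_i, p_j$ and any $n \in \nodesat(e', p_i)$. By (ii) pick $t \in \dennode{n}$; by (i) $t \in \dennode{n^*}$, and since $\at{e'}{p_j} \cong n^*$ we get $t \in \dennode{\at{e'}{p_j}}$. So $t \in \dennode{n} \cap \dennode{\at{e'}{p_j}} = \dennode{n \intersect \at{e'}{p_j}}$ by correctness of intersection, hence $n \intersect \at{e'}{p_j} \neq \nodebot$ --- exactly the reduction criterion for $c$ at $e'$. The main obstacle is the bookkeeping behind the first two paragraphs: pinning down that the sequential intersections at the different constrained paths do not interfere (this is where prefix-freeness enters, via incomparability), and that the subautomaton at each constrained path ends up denoting exactly $\dennode{n^*}$; the $n^* \prec \at{e'}{p_j}$ direction is the crucial use of Lemma~\ref{lem:intersecting-at-path-already-present}, and the implicit DAG sharing of ECTAs is a minor wrinkle best handled by reasoning up to $\cong$ throughout.
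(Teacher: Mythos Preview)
Your proof is correct, and at its core it rests on the same structural facts the paper uses: after $\reduce$, every node reachable along a constrained path $p_i$ has the form $m\intersect n^*$ (hence is $\prec n^*$ and, after normalization, nonempty), and the subautomaton at each constrained path is $\cong n^*$, with prefix-freeness guaranteeing that the $k$ successive $\intersectatpath{\cdot}{p_l}{n^*}$ steps do not interfere. Where you differ is in presentation and in which tools you make explicit. The paper argues by contradiction in three lines: assuming $n\intersect\at{e'}{p_j}=\nodebot$, it jumps to $n\intersect n^*=\nodebot$ ``by associativity'' (implicitly using $\at{e'}{p_j}\cong\at{e}{p_j}\intersect n^*\cong n^*$ via distributivity of the lattice), and then observes that such an $n$ cannot survive in $\nodesat(\intersectatpath{e}{p_i}{n^*},p_i)$. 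You instead argue directly, and your route to the inclusion $n^*\prec\at{e'}{p_j}$ goes through Lemma~\ref{lem:intersecting-at-path-already-present} rather than lattice distributivity. Your version is longer but makes the role of prefix-freeness and normalization transparent, and your separate handling of the $n^*\cong\nodebot$ case is a small nicety the paper omits; the paper's version is terse but leans on the reader to unpack ``by associativity.''
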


\begin{restatable}[Soundness of Reduction]{theorem}{thmsoundreduction}
  \label{thm:reduce-soundness}
Let $e=\edge{s}{\many{n}}{C}$ be a transition and $c\in C$; then:
\[
  \denedge{\reduce(e,c)} = \denedge{e}
\]
\end{restatable}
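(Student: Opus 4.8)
The plan is to prove the equality by two inclusions. Write $c = \{p_1 = \dots = p_k\}$, $n^* = \bigintersect_{p_i \in c}\at{e}{p_i}$, $e_0 = e$, and $e_j = \intersectatpath{e_{j-1}}{p_j}{n^*}$, so that $\reduce(e,c) = e_k$ and each $e_j$ differs from $e_{j-1}$ only by replacing the nodes reached along the (nonempty) path $p_j$ with their intersection with $n^*$; crucially, every step leaves the constraint set $C$ of the transition untouched.

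For $\denedge{\reduce(e,c)}\subseteq\denedge{e}$ I would first prove, by induction on $p$, a monotonicity lemma: $\dennode{\intersectatpath{n}{p}{n'}} \subseteq \dennode{n}$ and $\denedge{\intersectatpath{e}{p}{n'}} \subseteq \denedge{e}$. The base case $\intersectatpath{n}{\epsilon}{n'} = n \intersect n'$ is immediate from correctness of ECTA intersection; the transition case $p = j.p'$ either degenerates to $\edgebot$ (vacuous) or replaces one child by a smaller one (by the IH) while keeping $C$; the node case is a union of such transitions. Iterating gives $\denedge{e_k}\subseteq\dots\subseteq\denedge{e_0}=\denedge{e}$.

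For the reverse inclusion, fix $t \in \denedge{e}$. Because $c \in C$, $t$ satisfies $c$, so $\at{t}{p_1} = \dots = \at{t}{p_k}$ is a single well-defined subterm $t' = \at{t}{c}$. For each $i$, $t'$ is a subterm of $t$ reached along $p_i$, so by a straightforward transition-level restatement of \autoref{thm:nodes-at-path-correctness} we get $t' \in \dennode{\at{e}{p_i}}$; intersecting over $i$ and using correctness of ECTA intersection gives $t' \in \dennode{n^*}$. The remaining ingredient is a preservation lemma for intersection at a path, proved by a mutual induction over nodes and transitions in the same spirit as \autoref{lem:intersecting-at-path-already-present}: \emph{if} $t \in \denedge{e'}$ (resp.\ $t \in \dennode{n}$), $\at{t}{p}$ is defined, and $\at{t}{p} \in \dennode{n'}$, \emph{then} $t$ is still accepted after intersecting $e'$ (resp.\ $n$) with $n'$ at $p$. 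In the base case $\at{t}{\epsilon} = t \in \dennode{n'}$ meets $t \in \dennode{n}$ through correctness of $\intersect$; in the step $p = j.p'$, the run witnessing $t \in \denedge{e'}$ uses a transition $\edge{s}{\many{m}}{C}$ with $\arity(s) > j$ (since $\at{t}{j.p'}$ is defined), the IH pushes the $j$-th component of $t$ into $\intersectatpath{m^j}{p'}{n'}$, and the untouched siblings together with the untouched $C$ keep $t$ accepted. Applying this lemma successively to $e_0, e_1, \dots$ — valid because $t$ and $n^*$ never change, so $\at{t}{p_{j+1}} = t' \in \dennode{n^*}$ at every step, and because $t \in \denedge{e_j}$ at each stage by the induction already established — yields $t \in \denedge{e_k} = \denedge{\reduce(e,c)}$.

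The main obstacle is phrasing the preservation lemma so that the node/transition mutual induction closes: the delicate point is the ``cousin transitions of differing arity'' design choice, under which a transition of too-small arity collapses to $\edgebot$; this is harmless here only because the specific transition witnessing $t$'s membership necessarily has arity exceeding the head index of $p$ (otherwise $\at{t}{p}$ would be undefined), and the DAG sharing of sub-ECTAs is benign since $\intersectatpath{\cdot}{\cdot}{\cdot}$ rebuilds fresh structure only along the path. Everything else — prefix-freeness of $c$ ensuring the $p_j$ are pairwise non-nested so the $k$ successive intersections cannot interfere, and unfolding $\bigintersect$ into the binary case — is routine on top of the already-proved correctness results for $\intersect$ and for subautomata at a path.
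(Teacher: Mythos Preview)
Your proof is correct and follows essentially the same strategy as the paper's: both split into two inclusions, invoke monotonicity of intersection-at-a-path for $\subseteq$, and for $\supseteq$ fix $t\in\denedge{e}$, use $c\in C$ to obtain the common subterm $t'=\at{t}{c}\in\dennode{n^*}$ via \autoref{thm:nodes-at-path-correctness}, and then push $t$ through each $\intersectatpath{\cdot}{p_j}{n^*}$. You are simply more explicit than the paper about the monotonicity and preservation lemmas and about the iteration over the $p_j$ (the paper compresses all of this into a ``without loss of generality'' and a bare ``Hence''), and your aside about prefix-freeness is in fact unnecessary for soundness---your preservation lemma already works regardless of whether the $p_j$ nest---but none of this changes the underlying argument.
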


\section{Fast Enumeration with Dynamic Reduction}
\label{sec:enumeration}

We now turn to our second core contribution:
the algorithm for efficiently extracting (or enumerating) terms represented by an ECTA.
As we have outlined in \secref{sec:overview:dynamic},
the main idea behind the algorithm is to avoid eager enumeration of constrained nodes,
instead replacing them with ``unification'' variables---%
the mechanism we dub \emph{dynamic reduction}.

Inspired by presentations of DPLL(T) and Knuth-Bendix completion~\cite{bachmair1994equational,nieuwenhuis2006solving}, 
we formalize the enumeration algorithm as a non-deterministic transition system.
Configurations of this system are called \emph{enumeration states}
and steps are governed by two rules, \textsc{Choose} and \textsc{Suspend}.
Intuitively, \textsc{Choose} handles unconstrained ECTA nodes,
making a non-deterministic choice between their incoming transitions;
\textsc{Suspend} handles constrained nodes,
suspending them into variables.
\autoref{fig:enum}, which serves as the running example for this section,
shows an example sequence of \textsc{Choose} and \textsc{Suspend} steps
applied to a simplified version of the ECTA from \autoref{fig:overview-ex3}
(the simplified ECTA encodes all well-typed size-two terms 
in the environment $\Gamma = \{x\colon \T{Int}, y\colon \T{Char}, g\colon \alpha \to \alpha, h\colon \T{Char}\to\T{Bool}\}$).

\begin{figure}
  \centering
  \includegraphics[width=0.95\textwidth]{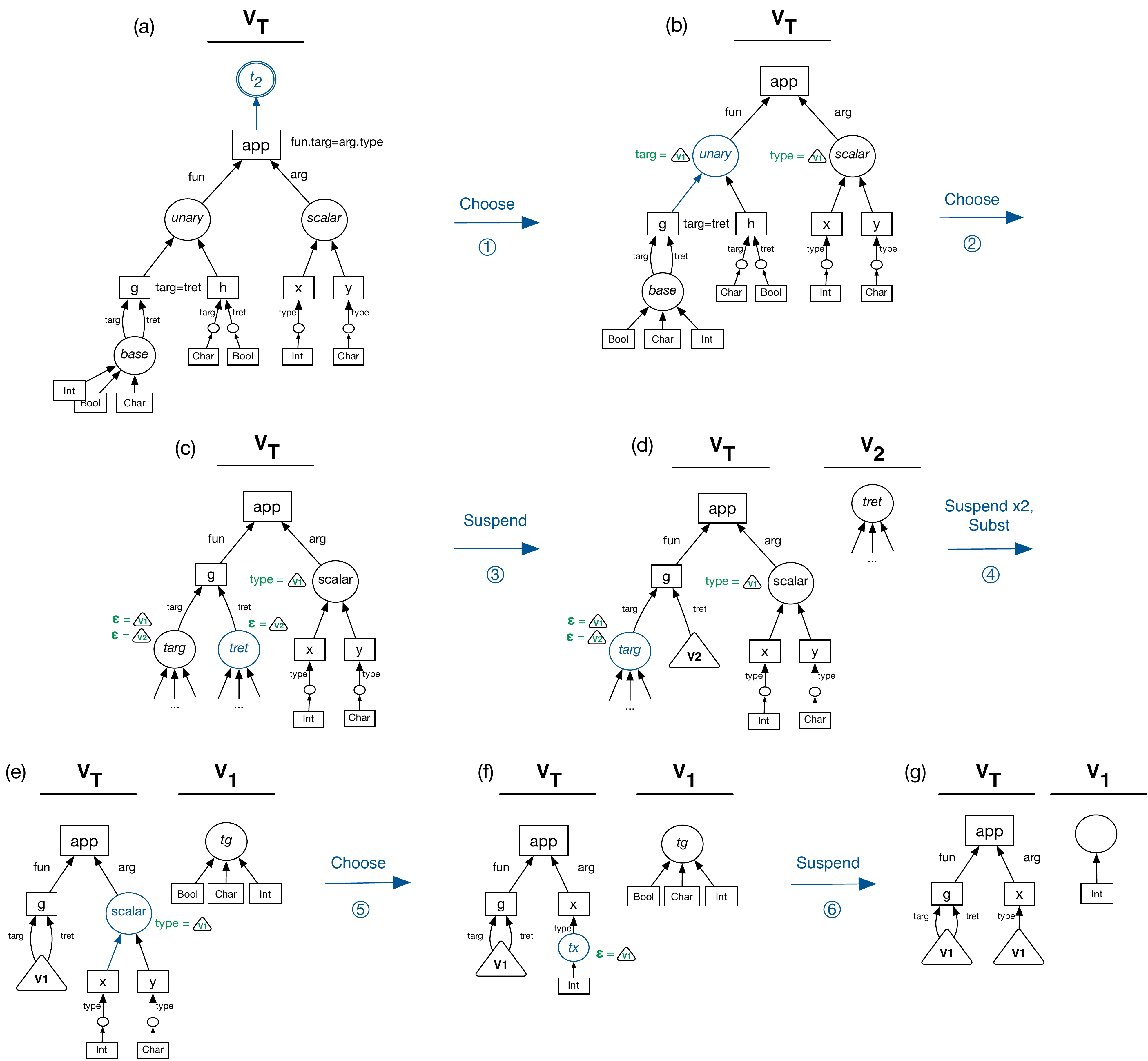}
  \caption{An example sequence of steps through enumeration states. 
  The focus node of each step is highlighted in blue
  and constraint fragments are highlighted in green. 
  The final state is fully enumerated.}\label{fig:enum}
\end{figure}

\subsection{Enumeration State}

The syntax of enumeration states is shown in \autoref{fig:enum-defs} (left).
$\set{Var}$ is a countably infinite set of variables, 
with a dedicated ``root'' variable $v_\top \in \set{Var}$.
An \emph{enumeration state} $\sigma$ is a mapping from variables to \emph{partially-enumerated terms}
(or \emph{p-terms} for short).
With the exception of $v_\top$, which stores the top-level enumeration result,
each variable captures a set of ECTA nodes that are constrained to be equal.
For example, in \autoref{fig:enum}~(g), the variable $v_1$ captures the nodes
\T{g.targ} and \T{x.type}, which are equated by the constraint on \T{app},
and also \T{g.tret}, equated to the former by the constraint on \T{g}.

%
A p-term $\tau$ is a term that might contain variables and \emph{unenumerated nodes}
(\emph{u-nodes} for short).
A u-node $\unenum{n}{\Phi}$ is an ECTA node $n$
annotated with zero or more \emph{constraint fragments} $\phi$,
each consisting of a PEC and a variable.
Intuitively, a constraint fragment is a constraint propagated downward from an ECTA transition.
For example, in \autoref{fig:enum}~(b), 
when the original constraint \T{fun.targ = arg.type} on \T{app} 
is propagated down to \staten{unary} and \staten{scalar},
it is split into two fragments: $\fragment{\T{targ}}{v_1}$ and $\fragment{\T{type}}{v_1}$.
The splitting is necessary because for each of the child u-nodes
one of the sides of this constraint is ``out of scope'';
hence a fresh variable $v_1$ is introduced to refer to the common value at both sides.
A variable $v$ is \emph{solved} in $\sigma$ iff it is not mentioned in any of the constraint fragments;
for example, $v_1$ is unsolved in \autoref{fig:enum}~(b)--(f) and solved in \autoref{fig:enum}~(g).

%
A u-node is \emph{restricted} iff its $\Phi$ is non-empty;
an unrestricted u-node is written $\unenumuc{n}$.
An enumeration state $\sigma$ is called \emph{fully enumerated}
if there are no restricted u-nodes anywhere inside $\sigma$.
The reader might be surprised that a fully enumerated state is allowed to have u-nodes at all;
as we explain in \secref{sec:enum:balanced},
this enables compact representation of enumeration results with ``trivial differences.''

\begin{figure}[t]  
  \begin{minipage}{.43\textwidth}
  \small  
  \centering
  \textbf{Enumeration States}
  $$
  s\in \Sigma, v\in \set{Var}, c \in \PEC
  $$
  $$
  \begin{array}{rll}    
    \phi \gramdef &\fragment{c}{v}   & \text{\emph{Constraint fragments}}\\
    \Phi \gramdef &\many{\phi}       & \text{\emph{Cons. fragment sets}}\\
    \tau \gramdef &                   & \text{\emph{P-terms}}\\
            &\mid v \mid s(\many{\tau})   & \text{variable, application}\\ 
            &\mid \unenum{n}{\Phi}   & \text{unenumerated node}\\
    \sigma \gramdef &[\many{v \mapsto \tau}] &\text{\emph{Enumeration states}}\\
    \ctx[\cdot] \gramdef &            & \text{\emph{Contexts}}\\ 
            &\mid \cdot               & \\
            &\mid s(\many{\tau},\ctx[\cdot],\many{\tau}) &
  \end{array}
  $$
  \end{minipage}%
  \begin{minipage}{.57\textwidth}
  \small
  \centering
  \textbf{Enumeration step}\quad$\boxed{\steps{\tau}{\tau'}, \steps{\sigma}{\sigma'}}$
  \begin{gather*}
  \inference[\textsc{Choose-$\square$}]
  { \edge{s}{[n^0,\dots,n^{k-1}]}{c} \in \many{e} \\     
    \tau^i = \unenum{n^i}{\proj(\fragment{c}{v} \cup \Phi,i)} \\ 
    v\ \text{is fresh} \quad \fragment{\epsilon}{\_} \notin \Phi \\ 
    }
  { \steps{\unenum{\node{\many{e}}}{\Phi}}{s(\tau^0,\dots,\tau^{k-1})} }
  \\
  \inference[\textsc{Choose}]
  { \sigma[v] = \ctx[\tau] \quad \steps{\tau}{\tau'} \quad v\ \text{is solved} }
  { \steps{\sigma}{\sigma[v \mapsto \ctx[\tau']]} }
  \\
  \inference[\textsc{Suspend-1}]
  { \sigma[v] = \ctx[\unenum{n}{\fragment{\epsilon}{v'} \cup \Phi}] \quad v' \notin \dom(\sigma) }
  { \steps{\sigma}{\sigma[v \mapsto \ctx[v'], v' \mapsto \unenum{n}{\Phi}]} }
  \\
  \inference[\textsc{Suspend-2}]
  { \sigma[v] = \ctx[\unenum{n}{\fragment{\epsilon}{v'} \cup \Phi}] \quad \sigma[v'] = \unenum{n'}{\Phi'} }
  { \steps{\sigma}{\sigma[v \mapsto \ctx[v'], v' \mapsto \unenum{n \intersect n'}{\Phi \cup \Phi'}]} }
  \end{gather*}  
  \end{minipage}
  \caption{Enumeration states and rules.}
  \label{fig:enum-defs}
\end{figure}

\mypara{Denotation}
The denotation of an enumeration state $\denstate{\sigma}$ 
is a set of substitutions $\rho\colon \set{Var}\partialfn \closedtermsof{\Sigma}$, 
which is compatible with the constraint fragments and subterm relations imposed by variables inside p-terms. 
Because of the circular dependencies between a p-term and its enclosing $\sigma$, 
the formal definition is somewhat technical and therefore relegated to \appref{app:proofs:enumeration-state}.

\subsection{Enumeration Rules}

\autoref{fig:enum-defs}~(right) formalizes the above-mentioned \textsc{Choose} and \textsc{Suspend} rules
as a step relation $\steps{\sigma}{\sigma}$ over enumeration states
and an auxiliary step relation $\steps{\tau}{\tau'}$ over p-terms.
%

\mypara{\textsc{Choose}}
We first formalize the auxiliary rule \textsc{Choose}-$\square$ for p-terms.
This rule takes a u-node, non-deterministically selects one of its transitions $e$,
and steps to a p-term that has $e$'s function symbol at the root and new u-nodes as children.
Step \circled{5} in \autoref{fig:enum} is an example application of this rule:
here the original u-node \staten{scalar} turns into one of its two incoming transitions, \T{x};
step \circled{1} is also an instance of this rule, albeit with no alternatives.

The tricky aspect of \textsc{Choose}-$\square$ is propagating constraints---%
either from the transition $e$ or from the original u-node---%
to the newly minted u-nodes.
The former scenario is illustrated in step \circled{1}:
here the PEC $c = \{\T{fun.targ} = \T{arg.type}\}$ on the \T{app} transition
is split into two fragments, $\fragment{\T{targ}}{v_1}$ and $\fragment{\T{type}}{v_1}$, 
attached to the new u-nodes \staten{unary} and \staten{scalar}, respectively.
To this end, \textsc{Choose}-$\square$ first creates a fresh variable $v_1$
and forms a constraint fragment $\fragment{c}{v_1}$ using the original PEC $c$;
next it \emph{projects} this fragment down to each $i$-th child,
retaining only those paths of $c$ that start with $i$ and chopping off their heads.
The $\proj$ function is defined formally in \autoref{fig:project}.
Note how the two new fragments together completely capture the semantics of the original constraint.

The latter scenario---propagating existing constraint fragments---is illustrated in step \circled{5}.
Here the u-node \staten{scalar} is restricted by the fragment $\fragment{\T{type}}{v_1}$;
in this case, there is no need to create new variables:
the existing fragment is simply projected down to the child \staten{tx} and becomes $\fragment{\epsilon}{v_1}$.
In the general case, both new and existing constraint fragments should be combined;
this is the case in step \circled{2}, 
where the u-node \staten{targ} inherits the fragment $\fragment{\epsilon}{v_1}$ from \staten{unary},
and also acquires a new fragment $\fragment{\epsilon}{v_2}$ by splitting the constraint on \T{g}.

Finally, consider the rule \textsc{Choose}, 
which lifts \textsc{Choose}-$\square$ to whole enumeration states.
This rule allows making a step inside any component of $\sigma$,
as long as its variable is solved.
For example, in \autoref{fig:enum}~(e) we are not allowed to make a step inside $v_1$
(say, choosing \T{Bool} among the three types),
because $v_1$ still appears in the constraint fragment $\fragment{\T{type}}{v_1}$ inside $v_\top$.
The rationale for this restriction is to avoid making premature choices for constrained nodes:
in our example, picking \T{Bool} would be a mistake,
which is entirely avoidable by simply waiting until all constraints are resolved
(such as the state in \autoref{fig:enum}~(g)).

\mypara{\textsc{Suspend}}
The \textsc{Suspend} rules handle u-nodes with \emph{$\epsilon$-fragments},
\ie constraint fragments of the form $\fragment{\epsilon}{v}$.%
\footnote{\textsc{Choose}-$\square$ does not apply to these nodes thanks to its last premise.
Note also that because all PECs in the original ECTA are prefix-free and this property is maintained by $\proj$,
any fragment that contains $\epsilon$, must \emph{only} contain $\epsilon$.}
Intuitively, an $\epsilon$-fragment indicates that this node
is the target of a constraint captured by $v$.
In response, the \textsc{Suspend} rules simply ``move'' the target u-node to the $v$-component of the state,
replacing it with $v$ in the original p-term.

The two \textsc{Suspend} rules differ in whether the current state $\sigma$ already has a mapping for $v$:
if it does not, \textsc{Suspend-1} initializes this mapping with its target u-node $\unenum{n}{\Phi}$;
if it does, \textsc{Suspend-2} updates this mapping,
combining the old u-node $\unenum{n'}{\Phi'}$ and the new one $\unenum{n}{\Phi}$
by intersecting their ECTAs and merging their constraint fragments.
Note that the old value of $v$ must necessarily be a u-node, 
because \textsc{Choose} is not allowed to operate under unsolved variables.

An example application of \textsc{Suspend-1} is step \circled{3} of \autoref{fig:enum}.
The target node \staten{tret} has an $\epsilon$-fragment $\fragment{\epsilon}{v_2}$;
since $v_2$ is uninitialized,
\textsc{Suspend-1} creates a new mapping $[v_2 \mapsto \text{\staten{tret}}]$.
Step \circled{6}, on the other hand, is an example of \textsc{Suspend-2}:
the target node \staten{tx} is restricted by $\fragment{\epsilon}{v_1}$;
since $v_1$ already maps to \staten{tg}, 
\textsc{Suspend-2} updates it with $\text{\staten{tx}} \intersect \text{\staten{tg}}$.
As a result of this intersection, $v_1$ now contains only those types (in this case, the sole type \T{Int})
that make the term represented by $v_\top$ well-typed.

\mypara{Eliminating Redundant Variables}
Finally, let us demystify the transformation \circled{4} in \autoref{fig:enum},
which consists of three atomic steps.
The first step suspends \staten{targ},
which has \emph{not one but two} $\epsilon$-fragments---%
$\fragment{\epsilon}{v_1}$ and $\fragment{\epsilon}{v_2}$---%
either of which can be targeted by a \textsc{Suspend}.
Suppose that the second one is chosen 
(both choices lead to equivalent results, up to variable renaming).
Since $v_2$ is already initialized, \textsc{Suspend-2} fires,
merging \staten{tret} and \staten{targ} into a single u-node \staten{tg'} under $v_2$;
importantly, \staten{tg'} inherits the other constraint fragment from \staten{targ},
namely $\fragment{\epsilon}{v_1}$.
Because of that, \textsc{Suspend-1} can now fire on \staten{tg'},
creating the state $[v_\top\mapsto \dots, v_2\mapsto v_1, v_1\mapsto \text{\staten{tg}}]$,
where \staten{tg} is \staten{tg'} stripped of its constraint fragment.
This new state is a bit awkward, since it contains a ``redundant'' variable $v_2$,
which simply stores another variable, $v_1$.
To get rid of such redundant variables, we introduce an auxiliary rule \textsc{Subst},
which simply replaces all occurrences of $v_2$ with $v_1$ and removes the unused mapping from $\sigma$
(see \autoref{fig:project}).
After applying \textsc{Subst}, we arrive at the state in \autoref{fig:enum}~(e).

\subsection{Enumeration Algorithm}
\label{sec:enumeration:algorithm}

We are now ready to describe the top-level algorithm \textsc{Enumerate}.
The algorithm takes as input an ECTA $n$ and produces a stream of fully-enumerated states.
To this end, it first creates an initial state $\sigma_0=[v_\top \mapsto \unenumuc{n}]$,
and then enumerates derivations of $\reduces{\sigma_0}{\finalstate}$ where \finalstate is fully enumerated
and $\stepsymb^*$ is the reflexive-transitive closure of $\stepsymb$.
In each step, the algorithm has the freedom to select
\begin{enumerate*}[(i)]
  \item which u-node to target, and
  \item in the case of \textsc{Choose}, which transition to choose.
\end{enumerate*}
The enumeration rules are designed in such a way that
the former selection constitutes ``don't care non-determinism''
(\ie any target node can be selected without loss of completeness);
this is in contrast to the latter selection, 
which constitutes ``don't know non-determinism'' and must be backtracked.
At the same time, different schedules of rule applications might lead to significantly different performance.
The \ectalibrary library provides a default schedule---depth-first, left to right---%
but enables the user to specify a domain-specific schedule in order to optimize performance. 


\begin{restatable}[Termination of Enumeration]{theorem}{themterminationenuremation}
\label{thm:enumeration-termination}
There is no infinite sequence $\sigma_0 \stepsymb \sigma_1 \stepsymb \dots$. 
\end{restatable}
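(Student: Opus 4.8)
The plan is to exhibit a well-founded measure on enumeration states that strictly decreases with every application of \textsc{Choose}, \textsc{Suspend-1}, \textsc{Suspend-2}, and \textsc{Subst}. The natural candidate combines two ingredients: (1) the total ``syntactic size'' of all p-terms stored in $\sigma$, where we assign to each u-node $\unenum{n}{\Phi}$ a size derived from $n$'s structure, to each application node a size larger than the sum of its children, and to each bare variable size $1$; and (2) a count of the pending constraint work, e.g.\ the total number of constraint fragments $\phi$ appearing anywhere in $\sigma$ (or, more finely, a multiset of their PEC-sizes). I would order states lexicographically, with the fragment count (or fragment multiset under the Dershowitz--Manna order) as the primary component when it helps, and the p-term size as the secondary component. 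The key observation making the u-node size well-founded is that the ECTAs in this section are \emph{acyclic} (this is the acyclic-ECTA section), so $\unenum{n}{\Phi}$ can be assigned, say, the height or node-count of the DAG rooted at $n$, which is a genuine natural number; when \textsc{Choose-$\square$} replaces a u-node with $s(\tau^0,\dots,\tau^{k-1})$, each child $\tau^i$ wraps a \emph{child} node $n^i$ of strictly smaller such measure, so even after adding the constant for the application symbol the total strictly decreases.

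First I would check \textsc{Choose}/\textsc{Choose-$\square$}: the focused u-node $\unenum{\node{\many{e}}}{\Phi}$ is replaced by $s(\tau^0,\dots,\tau^{k-1})$ where each $\tau^i=\unenum{n^i}{\dots}$ and $n^i$ is an immediate child of one transition of $\node{\many{e}}$; by acyclicity the measure of each $n^i$ is strictly below that of $\node{\many{e}}$, so the size component drops. The new fragments are obtained by $\proj$ from the constraint $c$ of the chosen transition together with (projections of) the existing $\Phi$; as long as $\proj$ does not \emph{increase} the relevant fragment measure (it only chops path heads and distributes fragments to children, never duplicating a fragment into the same node), the fragment component is non-increasing, so the lexicographic measure strictly decreases. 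Next, \textsc{Suspend-1} moves a u-node out of one p-term and into a fresh $v'$-slot while dropping one $\epsilon$-fragment: the total p-term size changes only by replacing a (restricted) u-node with a bare variable in one place and re-inserting the node under $v'$ — net size change is bounded, but the number of $\epsilon$-fragments strictly decreases by one, so the primary component drops. \textsc{Suspend-2} similarly removes an $\epsilon$-fragment and, crucially, \emph{intersects} $n$ with $n'$: since ECTA intersection never enlarges the denotation (Correctness of Intersection) and, more to the point, $n\intersect n'$ has no more structure than the smaller of $n,n'$, the size component does not increase either; again one fragment disappears. Finally \textsc{Subst} removes a redundant variable mapping $v'\mapsto v$ and replaces occurrences of $v'$ by $v$, strictly decreasing the number of variable-to-variable indirections (or simply the domain size of $\sigma$), which I would fold in as an additional lowest-priority lexicographic component.

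The main obstacle I anticipate is making the measure genuinely monotone for \textsc{Choose-$\square$} in the presence of constraint-fragment propagation: I must argue that $\proj$ applied to $\fragment{c}{v}\cup\Phi$ produces, across all children, a fragment-collection that is no larger than the original under whatever order I pick for that component — in particular that splitting one constraint on a transition into $k$ projected fragments on the $k$ children does not blow up the measure. The clean fix is to measure fragments not by counting them but by summing $|c|$ (number of paths) over all fragments, or by taking a multiset of the path-lengths and using the multiset order: projection strictly shortens every surviving path and deletes paths that don't start with $i$, so the multiset strictly decreases precisely on any step that touches a non-$\epsilon$ fragment, and for steps that don't, size handles it. A secondary subtlety is the $\sigma[v]=\ctx[\tau]$ rule \textsc{Choose}: the step happens under a context $\ctx[\cdot]$ and inside one component of $\sigma$, so I need the measure to be a straightforward sum over all components and compatible with context plugging (monotone in the hole), which it is by construction. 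Once these bookkeeping points are settled, well-foundedness of the lexicographic product of $\mathbb{N}$'s (and one multiset order over $\mathbb{N}$) gives the result immediately: no infinite descending chain, hence no infinite $\steps{\sigma_0}{\sigma_1}{}\cdots$.
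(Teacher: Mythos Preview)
Your overall strategy---a lexicographic well-founded measure with one component tracking u-node ``size,'' one tracking constraint fragments, and one tracking variable-to-variable indirections---is exactly the paper's. The gap is in the \emph{ordering} of the components.

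You place the fragment measure first and the size measure second, and for \textsc{Choose-$\square$} you argue the fragment component is non-increasing. That claim is false. The rule does not merely project the \emph{existing} fragments $\Phi$; it also manufactures a brand-new fragment $\fragment{c}{v}$ from the chosen transition's constraint set and projects \emph{that}. Those paths were not present in any fragment before the step. Concretely, the very first step from $\sigma_0=[v_\top\mapsto\unenumuc{n}]$ has $\Phi=\emptyset$, so the fragment multiset goes from empty to nonempty whenever the chosen transition carries a constraint---an outright increase. Even when $\Phi$ is nonempty, the paths introduced from $c$ can be longer than every path already in $\Phi$, so your proposed ``multiset of path lengths'' fix does not yield a Dershowitz--Manna decrease either. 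Your anticipated obstacle is real, and the fix you sketch does not close it.

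The paper resolves this by swapping the first two components. It takes $F(\sigma)=(A,B,C)$ lexicographically, where $A$ is the \emph{multiset of heights of u-nodes}, $B$ the multiset of paths in $\sigma$, and $C$ the count of variable-to-variable bindings. Then \textsc{Choose} strictly decreases $A$ (one u-node of height $h$ is replaced by children of strictly smaller height), so the possible growth of $B$ is irrelevant; \textsc{Suspend-1} leaves $A$ unchanged (the u-node merely moves) and drops one $\epsilon$-path from $B$; \textsc{Suspend-2} decreases $A$ (two u-nodes become one, and the height of $n\intersect n'$ is bounded by the minimum); \textsc{Subst} leaves $A,B$ fixed and decreases $C$. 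If you reorder your components this way, your argument goes through.
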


\begin{restatable}[Correctness of Enumeration]{theorem}{thmcorrectnessenumeration}
Let $n$ be an ECTA, $\sigma_0$ be the initial enumeration state, 
and consider all finite sequences $\reduces{\sigma_0}{\finalstate}$, such that \finalstate is fully enumerated; then: 
$$\dennode{n}=\left\{ \rho(v_\top) \bigmid \reduces{\sigma_0}\finalstate, \rho \in \denstate{\finalstate}\right\}$$
\end{restatable}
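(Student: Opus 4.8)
The plan is to prove the two inclusions separately, both resting on a single monotonicity invariant: a step $\steps{\sigma}{\sigma'}$ never enlarges the set of top-level terms the state denotes. Concretely, I would establish a \textbf{step soundness lemma} --- if $\steps{\sigma}{\sigma'}$ then $\{\rho(v_\top) \mid \rho \in \denstate{\sigma'}\} \subseteq \{\rho(v_\top) \mid \rho \in \denstate{\sigma}\}$ --- together with a trivial \textbf{base-case lemma}: $\{\rho(v_\top) \mid \rho \in \denstate{\sigma_0}\} = \dennode{n}$, which is immediate by unfolding the definition of $\denstate{\cdot}$ on $\sigma_0 = [v_\top \mapsto \unenumuc{n}]$, a state with no variables and no constraint fragments, so the only requirement on $\rho$ is $\rho(v_\top) \in \dennode{n}$. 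From these two lemmas the theorem follows. For $\supseteq$: given any finite derivation $\reduces{\sigma_0}{\finalstate}$, iterate the step soundness lemma along it to get $\{\rho(v_\top)\mid\rho\in\denstate{\finalstate}\} \subseteq \{\rho(v_\top)\mid\rho\in\denstate{\sigma_0}\} = \dennode{n}$, and unioning over all such $\finalstate$ stays inside $\dennode{n}$. For $\subseteq$: observe that $\sigma_0$ is itself fully enumerated (it contains no restricted u-nodes) and $\sigma_0 \stepsymb^* \sigma_0$ by reflexivity of $\stepsymb^*$, so $\sigma_0$ is one of the states on the right-hand side, and it already contributes all of $\dennode{n}$ by the base-case lemma.

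The bulk of the work is the step soundness lemma, which I would prove by case analysis on the rule fired. For \textsc{Choose}-$\square$ (and \textsc{Choose}, which merely lifts it through a context $\ctx[\cdot]$ under a \emph{solved} variable), replacing $\unenum{\node{\many{e}}}{\Phi}$ by $s(\tau^0,\dots,\tau^{k-1})$ commits to one transition of $\many{e}$, so any $\rho$ witnessing a term for $\sigma'$ also witnesses it for $\sigma$; the substantive point is that the children u-nodes $\unenum{n^i}{\proj(\fragment{c}{v}\cup\Phi,i)}$ with the \emph{projected} constraint fragments denote no more than the original u-node did --- this is where I invoke correctness of $\proj$, namely that splitting a PEC $c$ into $\{\proj(\fragment{c}{v},i)\}_i$ and using the fresh variable $v$ as the shared witness value faithfully re-expresses $\pecsat{c}{s(\many{t})}$ (via the definition of $\at{\cdot}{c}$). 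For \textsc{Suspend-1} and \textsc{Suspend-2}, moving a u-node carrying an $\epsilon$-fragment $\fragment{\epsilon}{v'}$ into the $v'$-slot and leaving $v'$ in its place is an administrative rewriting of ``this subterm equals the value named $v'$'': $\rho$ is compatible with $\sigma'$ iff it is compatible with $\sigma$; in the \textsc{Suspend-2} case the two occurrences of $v'$ force $\rho(v')\in\dennode{n}\cap\dennode{n'}$, which by correctness of ECTA intersection equals $\dennode{n\intersect n'}$, exactly matching the merged u-node. For \textsc{Subst}, the map is a bijective renaming fixing $v_\top$, so the denoted $v_\top$-set is literally unchanged.

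The main obstacle I anticipate is making the \textsc{Choose}-$\square$ case rigorous against the circular dependency between a p-term and its enclosing $\sigma$ in the definition of $\denstate{\cdot}$: propagating a constraint fragment downward and introducing a fresh $v$ changes both the p-term \emph{and} $\dom(\sigma)$, so one must check that freshness of $v$ introduces no spurious coupling (the standard argument that extending $\sigma$ with an unconstrained fresh variable does not alter the denotations of existing entries) and that the projection really is an exact decomposition rather than an over- or under-approximation. I would isolate the decomposition fact as a standalone lemma about $\proj$ and handle the variable bookkeeping separately; the \textsc{Suspend} and \textsc{Subst} cases, and both inclusions of the theorem, I expect to be routine once $\denstate{\cdot}$ is unfolded. (Termination of enumeration, proven separately, guarantees the stream is actually produced but is not needed for this correctness argument.)
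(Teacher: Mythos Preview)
Your proof is correct, and for the $\supseteq$ direction it matches the paper's approach: both iterate a per-rule soundness invariant along the derivation. But for the $\subseteq$ direction you take a genuinely different and shorter route. The paper proves its soundness lemmas as \emph{equalities}---for \textsc{Choose}, the denotation of $\sigma$ equals the union of denotations over all possible outcomes; for \textsc{Suspend} and \textsc{Subst}, the denotation is preserved exactly up to domain restriction---and then combines these with a separate completeness lemma (every irreducible reachable state is either fully enumerated or has empty denotation) together with the termination theorem, to argue that each $t\in\dennode{n}$ survives into some fully-enumerated $\finalstate$. You instead observe that $\sigma_0$ is \emph{already} fully enumerated (its sole u-node $\unenumuc{n}$ is unrestricted) and trivially reachable from itself via reflexivity, so $\sigma_0$ alone contributes all of $\dennode{n}$ to the right-hand side. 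Consequently you need only the one-sided inclusion in your step-soundness lemma, and neither completeness nor termination plays any role---exactly as you remark in your final parenthetical.

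The trade-off is that your shortcut leans on the permissive definition of ``fully enumerated'' that tolerates unrestricted u-nodes; had the paper demanded that \emph{all} u-nodes be eliminated, your $\subseteq$ argument would fail while the paper's would survive unchanged. The paper's route also delivers the stronger operational fact that the union over \emph{any} frontier of reachable states (not just the degenerate frontier $\{\sigma_0\}$) recovers $\dennode{n}$, which is what one actually wants when reasoning about the implemented algorithm that does not stop at $\sigma_0$. Your argument proves precisely the theorem as stated, and does so more economically.
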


\subsection{Compact Fully Enumerated States}\label{sec:enum:balanced}

\begin{figure}
  \begin{minipage}{.5\textwidth}
    \centering
    \small
    \textbf{Projecting constraint fragments}
    \[
      \begin{array}{rl}
      \proj(\Phi,i) &= \bigcup_{\phi \in \Phi} \proj(\phi, i) \\\\
      \proj(\fragment{c}{v},i) &= 
      \begin{cases}
        \{\fragment{c'}{v}\} &\text{if}\ c'\neq\emptyset \\
        \emptyset & \text{otherwise}
      \end{cases}\\
      \text{where}\ c' &= \bigcup_{p\in c} \proj(p,i) \\\\
      \proj(p, i) &= 
        \begin{cases}
          \bot   &\text{if}\ p = \epsilon\\         
          \{p'\} &\text{if}\ p = i.p'\\
          \emptyset &\text{otherwise}
        \end{cases}
      \end{array}
    \]
    \textbf{Enumeration step (cont.)}\quad$\boxed{\steps{\sigma}{\sigma'}}$
    \[
    \inference[\textsc{Subst}]
    { \sigma[v_2] = v_1 }
    { \steps{\sigma}{\subst{v_1}{v_2}{\left(\sigma \setminus [v_2\mapsto v_1]\right)}} }
    \]
    \caption{Auxiliary definitions}
    \label{fig:project}
  \end{minipage}%
  \begin{minipage}{.5\textwidth}
    \centering
    \begin{subfigure}[t]{0.3\textwidth}
    \centering
    \includegraphics[scale=0.35]{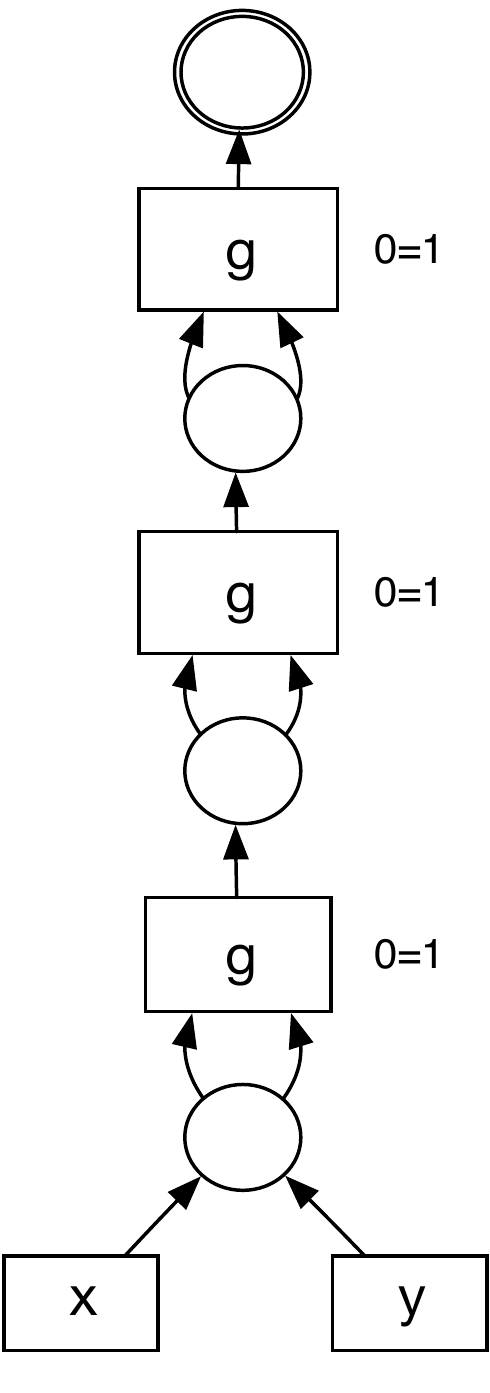}
    \subcaption{}
    \label{fig:balanced-tree:ecta}
    \end{subfigure}
    ~
    \begin{subfigure}[t]{0.7\textwidth}
    \centering
    \includegraphics[scale=0.35]{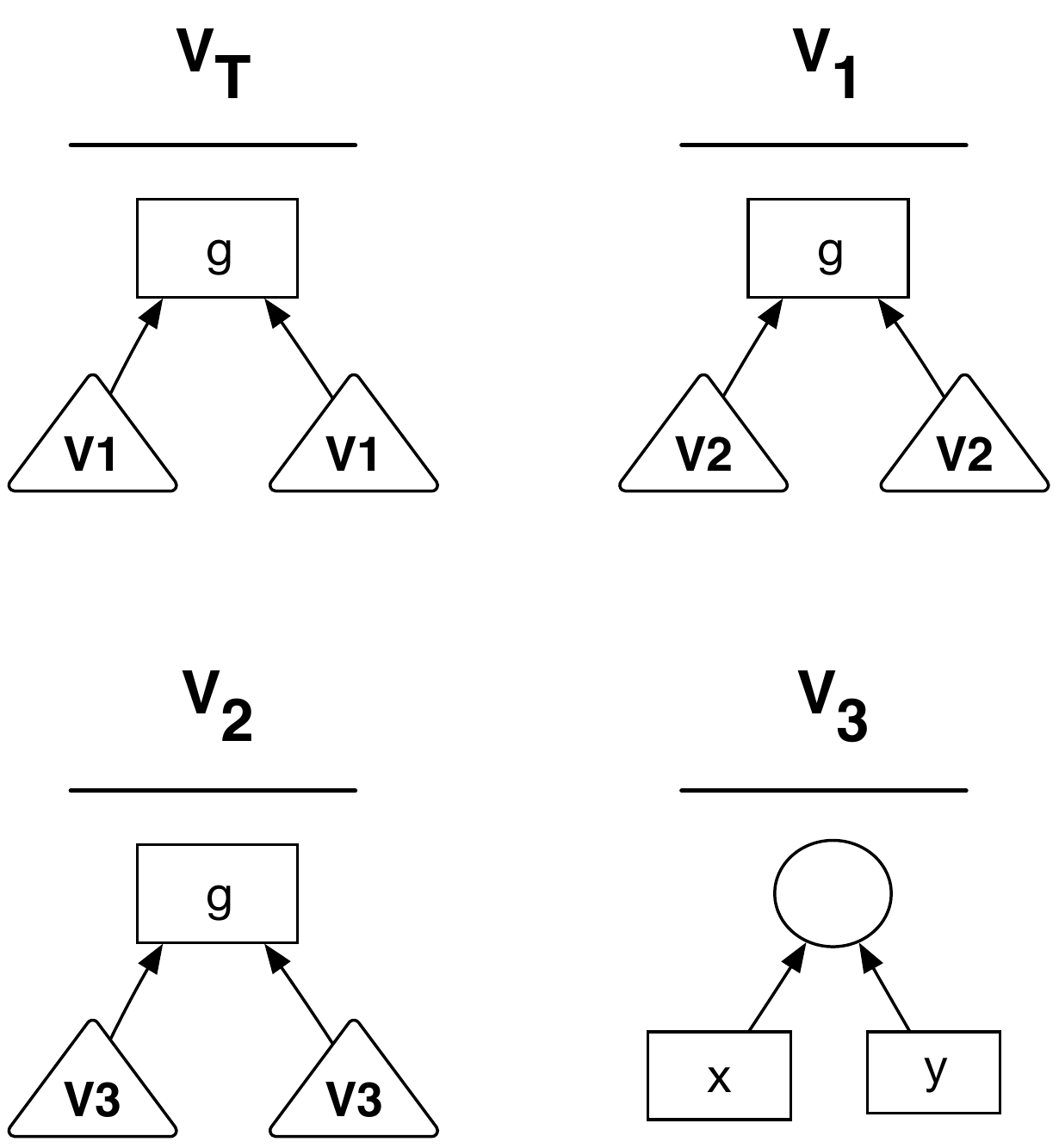}
    \subcaption{}
    \label{fig:balanced-tree:enumerated}
    \end{subfigure}
    \caption{(a) An ECTA representing all perfect trees of depth three, whose leaves are either all \T{x} or all \T{y} 
    (b) The ECTA fully enumerated, in logarithmic space}
    \label{fig:balanced-tree}
  \end{minipage}%
  \end{figure}

We now return to the design decision to allow (unrestricted) u-nodes in fully enumerated states.
Our running example in \autoref{fig:enum} does not motivate this decision very well:
the fully enumerated state in \autoref{fig:enum}~(g) encodes a single term anyway,
so it seems only natural to let \textsc{Choose} loose on the last remaining u-node.
For other ECTAs, however, a single fully-enumerated state might represent exponentially many%
\footnote{Or, with the  cyclic ECTAs of \secref{sec:cyclic}, infinitely many.}
terms, or the terms might be exponentially larger, or both.
For an example, consider \autoref{fig:balanced-tree:ecta}.
This ECTA represents the set of all perfect binary trees of depth three,
whose leaves are either all \T{x} or all \T{y}.
A moment's thought reveals that this set contains two trees, each of size $15$.
Instead of returning these two large trees explicitly,
the fully-enumerated state \finalstate in \autoref{fig:balanced-tree:enumerated} 
represents them as a \emph{hierarchy of unconstrained tree automata}, 
from which the concrete trees may be trivially generated.
It is straightforward to see that the sizes of the two perfect trees grow exponentially with their depth,
while the size of \finalstate grows only linearly.

The main benefit of this design, however,
is that, depending on the problem domain, some nodes \emph{need not be enumerated at all},
as long as we know their denotation is non-empty.
For example, to determine whether a propositional formula is satisfiable (\secref{sec:applications:sat}),
it is often enough to provide a \emph{partial satisfying assignment},
because the values of the unassigned variables are irrelevant;
such a partial assignment can be represented by a \finalstate,
where irrelevant variables are left unenumerated.
Similarly, in type-driven synthesis,
the polymorphic type of a component need not always be fully instantiated,
as long we know that a compatible instantiation exists.
In fact, as we explain in \secref{sec:applications:hplus}, 
cyclic ECTAs can encode infinitely many possible polymorphic instantiations,
and enumerating them all would be simply impossible.

\section{Cyclic ECTA}
\label{sec:cyclic}

\begin{figure}[t]
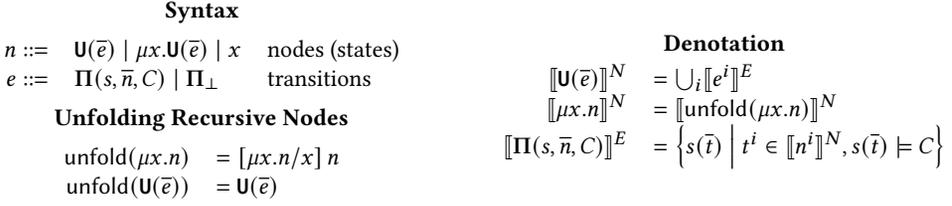
  
  \begin{minipage}{.5\textwidth}
  \small  
  \centering
  \textbf{Syntax}
  $$
  \begin{array}{rll}
    n \gramdef& \node{\many{e}} \mid \recnode{x}{\node{\many{e}}} \mid x & \text{nodes (states)}\\
    e \gramdef& \edge{s}{\many{n}}{C} \mid \edgebot & \text{transitions}\\
  \end{array}
  $$
  \\
  \textbf{Unfolding Recursive Nodes}
  \[
    \begin{array}{rl}
      \unfold{\recnode{x}{n}} &= \bigsubst{\recnode{x}{n}}{x}{n} \\
      \unfold{\node{\many{e}}} &= \node{\many{e}}
    \end{array}
  \]
  \end{minipage}%
  \begin{minipage}{.5\textwidth}
  \small
  \centering
  \textbf{Denotation}
  $$
  \begin{array}{rl}
    \dennode{\node{\many{e}}} &=  \bigcup_i \denedge{e^i} \\
    \dennode{\recnode{x}{n}} &= \dennode{\unfold{\recnode{x}{n}}} \\
    \denedge{\edge{s}{\many{n}}{C}} &= \left\{ s(\many{t}) \bigmid t^i \in \dennode{n^i}, \pecsat{C}{s(\many{t})}  \right\}
  \end{array}
  $$
  \end{minipage}
  \caption{Cyclic ECTAs: syntax and semantics. Here $s\in \Sigma$, $C$ is a PCS, and $x$ is a bound variable.}
  \label{fig:cyclic-syntax}
\end{figure}

We now present the formalism for fully general ECTAs, which may contain cycles. 
With cycles, an ECTA node can now represent an infinite space of terms, 
such as an arbitrary term in some context-free language, 
including (as in \secref{sec:applications:hplus}) the language of arbitrary Haskell types.
While this requires an extension to the syntax of ECTAs to allow recursion, 
shockingly, none of the algorithms require substantial modification.

\subsection{Cyclic ECTAs: Core Definition}

We extend acyclic ECTAs to cyclic by adding ``recursive nodes'' $\recnode{x}{\node{\many{e}}}$. 
Within this node, $x$ is a variable bound to $\node{\many{e}}$. 
In diagrams, we depict any use of $x$ as a back-edge to $\node{\many{e}}$ 
and keep the $\mu$ binding itself implicit.
Semantically, $x$ can be replaced with a copy of the node it is bound to, 
so that an ECTA $n$ is equivalent to  $\subst{\node{\many{e}}}{x}{n}$---%
or rather, to $\subst{\recnode{x}{\node{\many{e}}}}{x}{n}$, since $\node{\many{e}}$ contains further uses of $x$. 
\autoref{fig:cyclic-example:orig} shows an example ECTA, 
with a recursive node \staten{Nat} representing arbitrary natural numbers 
defined by the grammar $\T{Nat} \gramdef S(\T{Nat}) \mid Z$.
\autoref{fig:cyclic-syntax} gives the syntax and semantics of cyclic ECTAs. 
The recursive definition of $\dennode{n}$ should be interpreted with least-fixed-point semantics
(as it may unfold arbitrarily many times). 
Note that this grammar excludes nodes like $\recnode{x}{x}$ or $\recnode{x}{\recnode{y}{x}}$, which would be meaningless. 
We again assume implicit sharing of sub-trees.%
\footnote{However, this pseudo-tree representation precludes sharing of some nodes 
which would be shared in a true graph representation.}

\begin{wrapfigure}[21]{r}{0.38\textwidth}
  \centering
  \begin{subfigure}[b]{0.18\textwidth}
  \centering
  \includegraphics[scale=0.33]{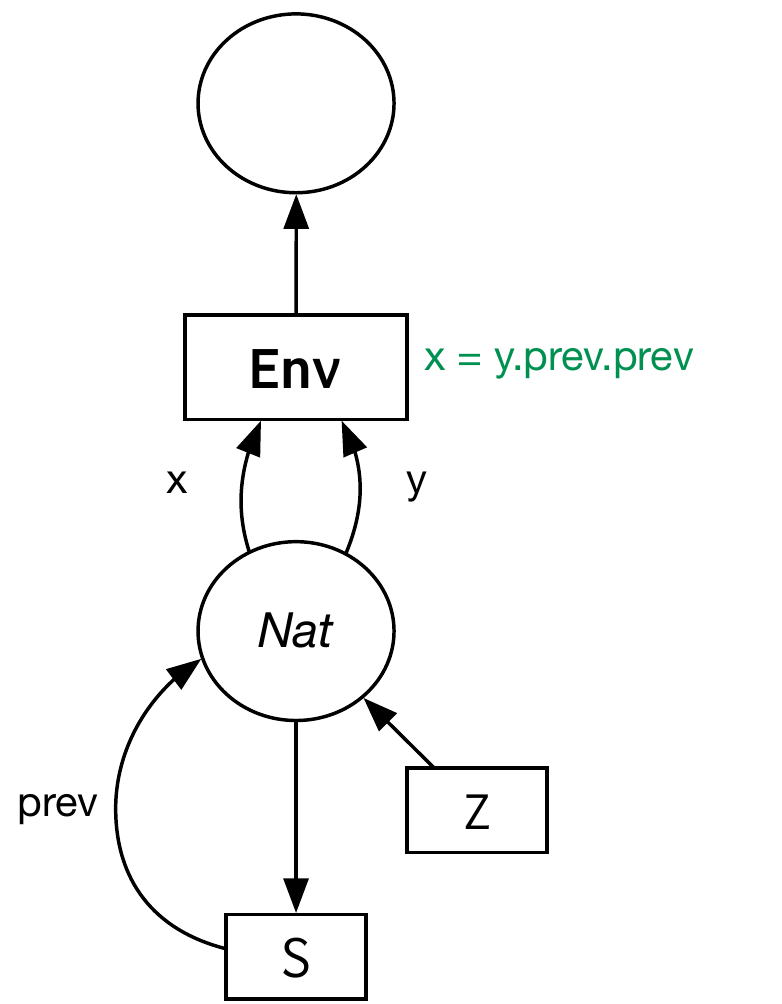}
  \subcaption{}
  \label{fig:cyclic-example:orig}
  \end{subfigure}
  ~
  \begin{subfigure}[b]{0.18\textwidth}
  \centering
  \includegraphics[scale=0.33]{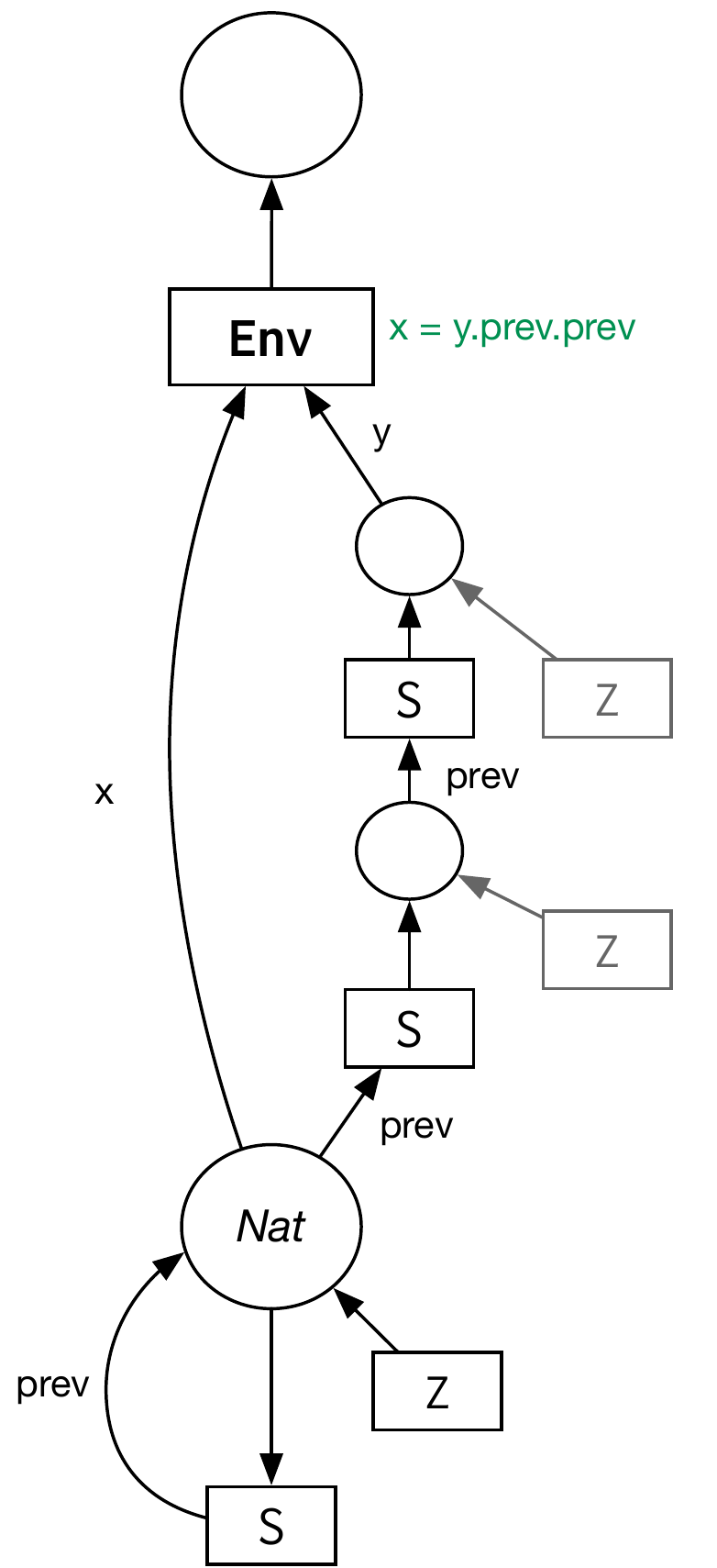}
  \subcaption{}
  \label{fig:cyclic-example:lasso}
  \end{subfigure}
  \caption{(a) ECTA representing an environment with two arbitrary natural numbers $x$ and $y$, where $y=x+2$. The \T{Nat} node is represented $\recnode{x}{\node{\edgeuc{S}{x}, \edgeucnull{Z}}}$.  (b) The ECTA unfolded into lasso form. The grayed-out transitions will be removed by static reduction.}
  \label{fig:cyclic-example}
\end{wrapfigure}
Cyclic ECTAs become unwieldy when constraints are allowed inside cycles.
As a recursive node $\recnode{x}{n}$ is repeatedly unfolded and its constraints duplicated, 
it can yield an arbitrarily large constraint system whose smallest solution may be arbitrarily large. 
In fact, in this general case, ECTA emptiness is undecidable:

\begin{restatable}[Undecidability of \textsc{Emptiness}]{theorem}{thmgenundecidable}
\label{thm:cyclic-undecidability}
Determining whether $\dennode{n}=\emptyset$ is undecidable.
\end{restatable}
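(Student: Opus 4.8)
The plan is to reduce an undecidable problem — the most natural candidates are the Post Correspondence Problem (PCP), or emptiness of the intersection of context-free grammars, or acceptance/halting for Turing machines / Minsky machines — to the question of whether a cyclic ECTA denotes the empty set. The remark immediately preceding the statement gives the essential clue: a recursive node $\recnode{x}{n}$ carrying a constraint, when unfolded $k$ times, duplicates that constraint $k$ times, so a single cyclic ECTA encodes an unbounded family of constraint systems whose solutions may grow without bound. This is exactly the expressive power one needs to simulate an undecidable search.

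Concretely, I would reduce from PCP. Given a PCP instance with pairs $(u_1,v_1),\dots,(u_m,v_m)$ over some alphabet, I would build an ECTA whose skeleton generates, via a cycle, all terms of the form $\s{seq}(c_{i_1}, \s{seq}(c_{i_2}, \dots, \s{seq}(c_{i_k}, \s{nil})))$ encoding an arbitrary nonempty index sequence $i_1,\dots,i_k$, and simultaneously builds up two strings: the concatenation $u_{i_1}u_{i_2}\cdots u_{i_k}$ along one ``coordinate'' and $v_{i_1}v_{i_2}\cdots v_{i_k}$ along another. Strings are encoded as terms (a right-nested list of letter symbols), and the key device is a path equality attached to the recursive transition that forces, at every unfolding, the ``top'' string-fragment contributed by choosing pair $i_j$ to be literally the string $u_{i_j}$ on one side and $v_{i_j}$ on the other, while threading the remaining suffix through the recursive child. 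A final ``accept'' transition at the root then carries a constraint demanding that the fully assembled $u$-string equals the fully assembled $v$-string. By \autoref{thm:closure-correctness} and the semantics in \autoref{fig:cyclic-syntax}, a term is accepted iff it corresponds to an index sequence witnessing a solution to the PCP instance; hence $\dennode{n}=\emptyset$ iff the PCP instance has no solution, and emptiness is undecidable.

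The main obstacle is engineering the string-concatenation-by-constraints inside a single cycle: a path equality can only equate subterms of one term, so I must lay out the encoding so that the two growing strings live at fixed, statically-nameable paths relative to each $\s{seq}$ node, and the constraint on the recursive transition relates the head fragment at this level to the corresponding fragment one level down in a way that composes correctly under unfolding. The cleanest way to handle this is to carry, at each level, a pair of ``accumulator'' subterms and a pair of ``target'' subterms, use PECs of the form (this level's accumulator tail) $=$ (child's accumulator), and let the finiteness of terms (equivalently, the prefix-free/occurs-check condition of \autoref{thm:consistency-prefix-free}) force the recursion to bottom out only when the accumulators are fully consumed — which happens exactly when the two assembled words coincide. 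Once this gadget is in place the reduction is a routine computable map from PCP instances to ECTAs, and correctness follows by induction on the number of unfoldings together with the denotational equations already established.
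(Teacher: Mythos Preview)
Your proposal is correct and follows essentially the same route as the paper: a reduction from the Post Correspondence Problem, encoding an arbitrary index sequence via a cycle, threading the two growing strings through the recursion with equality constraints on the cyclic transition, and checking at the top that the assembled $u$- and $v$-strings coincide. The paper presents this only as a worked example (a concrete three-card instance with a figure) and leaves the general construction and the exclusion of the trivial empty solution as exercises, so your write-up is, if anything, more explicit about the accumulator mechanics than the paper itself.
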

\begin{proof}
By reduction from the Post Correspondence Problem. More explanation in \appref{app:proofs:undecidable}.
\end{proof}

\noindent
This motivates a restriction barring constraints on cycles,
which guarantees that the constraint system remains finite 
and efficient algorithms remain possible.
More formally:

\begin{definition}[Finitely-constrained ECTA]
An ECTA $n$ is \emph{finitely-constrained} if, for all recursive nodes $\recnode{x}{m}$ reachable from $n$, $m=\skeleton{m}$.
\end{definition}

\noindent
We assume henceforth that all ECTAs are finitely-constrained.
What makes such ECTAs tractable is that, after sufficient unfolding, 
they enter what we call \emph{lasso form}:

\begin{definition}[Lasso form]
An ECTA $n$ is in \emph{lasso form} if it contains no constrained recursive nodes 
(\ie, no path constraint references a recursive node).
\end{definition}

\noindent
An ECTA in lasso form is split into a top portion, which contains constraints but no cycles, 
and a bottom portion, which contains cycles but no constraints. 
The top portion permits only finitely many choices, 
while the bottom portion can be enumerated and intersected as in classic tree automata theory. 
While they may perform intersection on entire subautomata, 
neither static nor dynamic reduction directly inspect nodes beneath the deepest constraint. 
Hence, with the updated definition of intersection, 
our definitions for both static and dynamic reduction work unmodified on ECTAs in lasso form. 
An example ECTA in lasso form is in \autoref{fig:cyclic-example:lasso}.

\subsection{Algorithms for Cyclic ECTAs}

\begin{figure}[t]  
  \begin{minipage}{.43\textwidth}
  \small  
  \centering
  \textbf{Reachable Nodes}
  \[
    \nodesat(\recnode{x}{n}, p) = \nodesat(\unfold{\recnode{x}{n}}, p)
  \]
  \textbf{Intersection at a Path}
  \[
    \intersectatpath{(\recnode{x}{n})}{n'}{p} = \intersectatpath{\unfold{\recnode{x}{n}}}{n'}{p}
  \]
  \end{minipage}%
  \begin{minipage}{.57\textwidth}
  \small
  \centering
  \textbf{Enumeration step}\quad$\boxed{\steps{\tau}{\tau'}}$
  \begin{gather*}
  \inference[\textsc{Choose-$\mu$}]
  {\Phi \neq \emptyset}
  { \steps{\unenum{\recnode{x}{n}}{\Phi}}{\unenum{\unfold{\recnode{x}{n}}}{\Phi}}}
  \end{gather*}  
  \end{minipage}
  \caption{Extensions to prior algorithms to account for recursive nodes}
  \label{fig:cyclic-algorithms}
\end{figure}

\mypara{Intersection}
One formulation of intersection for classic string automata is a depth-first search
that begins from a pair of initial or final states 
and finds all reachable pairs of states. 
We use this idea to extend our previous definition of intersection to cyclic ECTAs: 
the algorithm tracks all previous visited pairs of nodes, 
and creates a recursive reference upon seeing the same pair twice.

More formally, we define $n_1 \intersect n_2$ in terms of a helper operation, 
$\recintersect{n_1}{n_2}{S}$ (intersection tracking the set of previously-visited pairs); 
which in turn invokes the helper $\recintersectinner{n_1}{n_2}{S}$. 
We assume a function $\varOf(\{n_1, n_2\})$ mapping an unordered pair of nodes 
to a unique named variable for that pair. 
We use the notation $n_1 \in n_2$ to mean that some descendant of $n_1$ is equal to $n_2$.

Let $n_1, n_2 \in N$ be two ECTAs, $S \subseteq \unorderedpairs{N}$ be a set of unordered pairs of ECTAs, 
and define $S' = S \cup \{\{n_1, n_2\}\}$. Then define:
\[
  \recintersect{n_1}{n_2}{S} = \begin{cases}
                                    \varOf(n_1, n_2) & \{n_1, n_2\} \in S \\
                                    \recnode{z}{\recintersectinner{n_1}{n_2}{S'}} & z=\varOf(\{n_1, n_2\}) \wedge z \in (\recintersectinner{n_1}{n_2}{S'}) \\
                                    \recintersectinner{n_1}{n_2}{S'} & \text{otherwise}
                                  \end{cases}
\]
The remainder of the definition is almost identical to the definition for acyclic ECTAs, 
except that recursive nodes are first unfolded. 
Let $\unfold{n_1}=\node{\many{e_1}}$ and $ \unfold{n_2}=\node{\many{e_2}}$. Then:
\[
  \recintersectinner{n_1}{n_2}{S} = \nodesymbol\left(\left\{\recintersect{e_1^i}{e_2^j}{S} \bigmid e_1^i\in\many{e_1}, e_2^j\in \many{e_2} \right\}\right)
\]
  
Let $e_1=\edge{s_1}{[n_1^0\dots n_1^{k-1}]}{C_1}$, $e_2=\edge{s_2}{[n_2^0\dots n_2^{l-1}]}{C_2}$ be two transitions, then:
\[
  \recintersect{e_1}{e_2}{S} = \begin{cases}
                   \edge{s_1}{[\recintersect{n_1^0}{n_2^0}{S},\dots, \recintersect{n_1^{k-1}}{n_2^{k-1}}{S}}{C_1 \cup C_2} & \text{if}\ s_1 = s_2\ \text{and}\ C_1 \cup C_2\ \text{is consistent} \\
                   \edgebot & \text{otherwise}
                 \end{cases}
\]

Now, define $n_1 \intersect n_2 = \recintersect{n_1}{n_2}{\emptyset}$.

\mypara{Static Reduction}
Recall from \secref{sec:static-reduction} that static reduction is defined in terms
intersection at a path, which in turn relies on the definition of reachable nodes. 
\autoref{fig:cyclic-algorithms} extends these operations to unfold recursive nodes until the ECTA enters lasso form, 
at least with regards to the PEC under consideration. 
Then the rest of the static reduction algorithm remains unchanged.

\mypara{Enumeration}
Adapting enumeration to cyclic ECTAs requires a single change: 
the new \textsc{Choose-$\mu$} rule in \autoref{fig:cyclic-algorithms} unfolds recursive nodes referenced by some ancestor's constraint. 
This rule continues unfolding such nodes so long as they are referenced by a parent's constraint, 
at which point it is a fully enumerated node.
Note that a fully-enumerated state will necessarily be in lasso form.
\section{Applications}
\label{sec:applications}

This section gives two examples of problem domains that can be reduced to ECTA enumeration:
boolean satisfiability (SAT; \secref{sec:applications:sat})
and type-driven program synthesis (\secref{sec:applications:hplus}).
The second domain has already been introduced informally in \secref{sec:overview};
here we present its encoding in full generality,
and in \secref{sec:eval} we evaluate our encoding against a state-of-the-art synthesizer \hplus. 
The purpose of presenting the first domain is to demonstrate the versatility of ECTAs,
not to compete with highly-engineered industrial SAT solvers;
hence we leave the SAT domain out of empirical evaluation.

\subsection{Boolean Satisfiability}\label{sec:applications:sat}

\begin{figure}
  \centering
  \begin{subfigure}[t]{0.4\textwidth}
  \centering
  \includegraphics[scale=0.4]{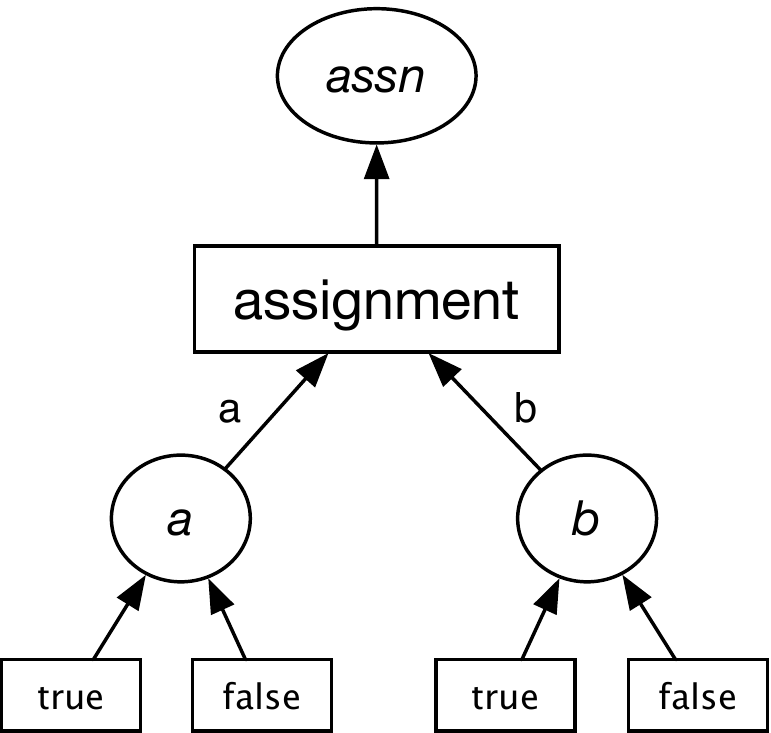}
  \subcaption{Variable assignment}
  \label{fig:sat-assn}
  \end{subfigure}
  ~ 
  \begin{subfigure}[t]{0.6\textwidth}
  \centering
  \includegraphics[scale=0.4]{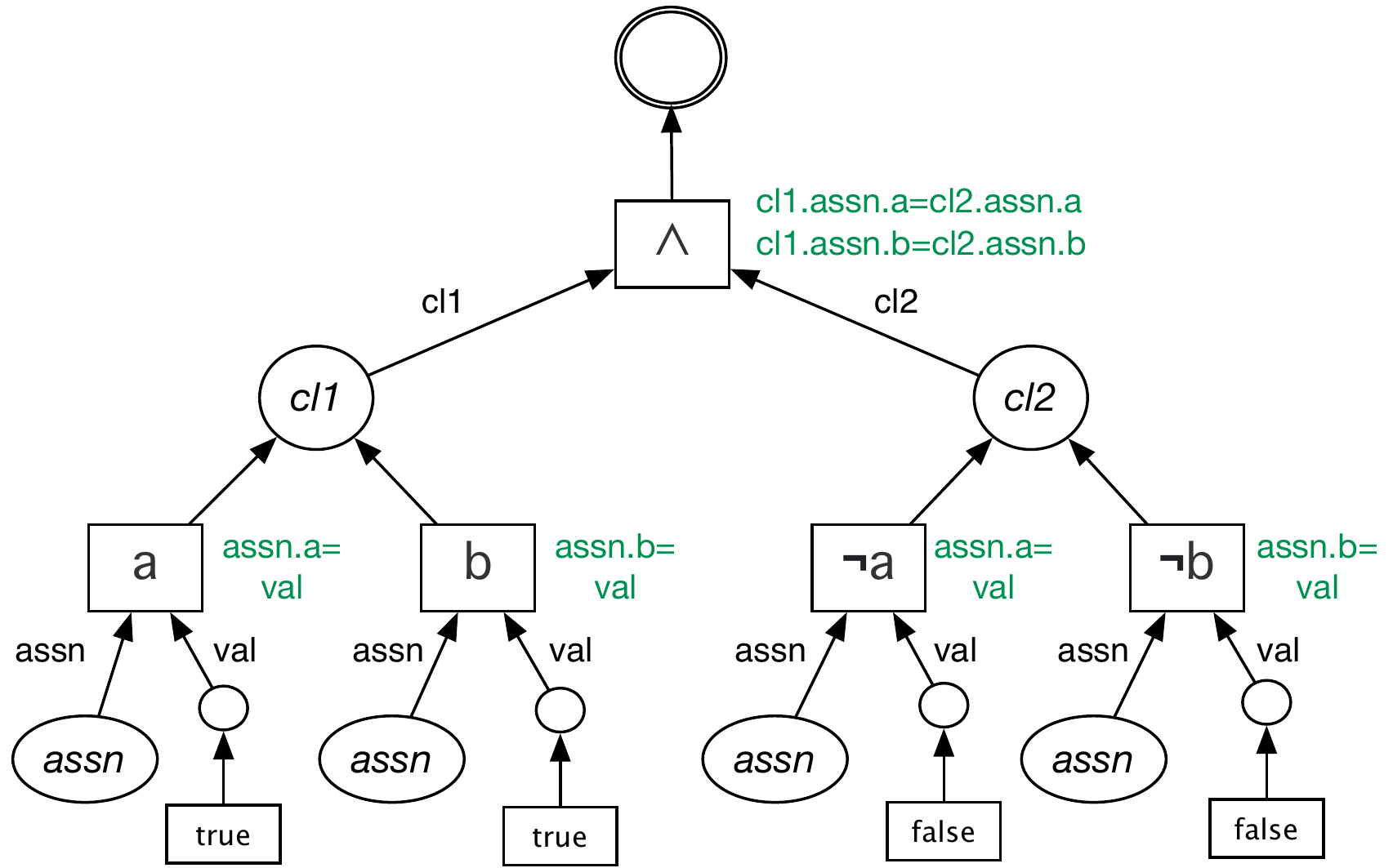}
  \subcaption{CNF formula}
  \label{fig:sat-formula}
  \end{subfigure}
  \caption{ECTA encoding of a CNF formula $(a \vee b) \wedge (\neg a \vee \neg b)$}
  \label{fig:sat}
\end{figure}

\mypara{Problem Statement}
Given a propositional formula in \emph{conjunctive normal form} (CNF),
the SAT problem is to find a satisfying assignment to its variables.
A CNF formula is a conjunction of \emph{clauses},
where each clause is a disjunction of \emph{literals},
and each literal is either a variable or its negation.
For example, the CNF formula $(a \vee b) \wedge (\neg a \vee \neg b)$
has two satisfying assignments: $\{a, \neg b\}$ and $\{\neg a, b\}$.

\mypara{Encoding}
\autoref{fig:sat} illustrates our ECTA encoding for the above formula.
The sub-automaton \staten{assn} in \autoref{fig:sat-assn}
represents the set of all possible variable assignments.
The \T{assignment} transition has one child per variable,
and each variable node has two alternatives: \T{true} and \T{false};
hence, to extract a term from \staten{assn} one must pick a value for each variable.

The ECTA for the entire CNF formula is shown in \autoref{fig:sat-formula};
this ECTA has a single top-level conjunction transition $\wedge$, 
with one child per clause.
Each clause node has one alternative per literal in that clause:
the choice between these alternatives corresponds to picking 
which literal is responsible for making the clause true.
Each literal transition---such as \T{a} or $\neg \T{a}$---has two children:
\T{assn} is its local copy of the assignment sub-automaton
and \T{val} is the Boolean value that this literal assigns to its variable.
The constraint on the literal---such as \T{assn.a = val}---%
restricts its local assignment in such a way that the literal evaluates to true.
%
Finally, the constraints on the $\wedge$ transition force all local assignments to coincide; note that, while the various \T{assn} nodes are shared in memory, each occurrence of this node is an independent choice unless so constrained.
The reader might be wondering why we chose to split these constraints per-variable
instead of simply writing \T{cl1.assn = cl2.assn};
as we explain next, this helps enumeration discover inconsistent assignments quickly.

\mypara{SAT Solving as ECTA Enumeration}
With this encoding, the general-purpose ECTA enumeration algorithm from \secref{sec:enumeration} 
turns into a SAT solver.%
\footnote{A curious reader might be wondering why don't we go the other direction:
encode an ECTA into a SAT formula and use a SAT solver for ECTA enumeration;
this is not possible in general, as we discuss in more detail in \secref{sec:conclusions}.
}
Specifically, once \textsc{Enumerate} has found a fully enumerated state,
the satisfying assignment can be read off the children of any \T{assignment} symbol in that state;
note that if we let \textsc{Enumerate} run past the first result,
it will enumerate all satisfying assignments, modulo irrelevant variables (see \secref{sec:enum:balanced}).

The overall solving procedure amounts to choosing a literal from each clause
and backtracking whenever the assignment becomes inconsistent.
For example, suppose \textsc{Enumerate} has chosen \T{a} from \staten{cl1};
the enumeration state $\sigma$ now contains variables $v_a$ and $v_b$,
which store assignments for $a$ and $b$ consistent with the current choices
(that is, $v_a$ is restricted to \T{true}, while $v_b$ still allows both choices).
If the algorithm now attempts to make an inconsistent choice of $\neg\T{a}$ from \staten{cl2},
this inconsistency is discovered immediately 
when $\neg\T{a}.\T{val}$ is suspended and intersected with $v_a$.

\subsection{Type-Driven Program Synthesis}\label{sec:applications:hplus}

\mypara{Problem Statement}
We are interested in the following \emph{type-driven program synthesis} problem:%
\footnote{This problem is also known as \emph{type inhabitation}~\cite{urzyczyn97} and \emph{composition synthesis}~\cite{rehof16}.}
given a type $T$, called the query type, 
and a components library $\Lambda$, which maps component names to their types,
enumerate terms of type $T$ built out of compositions of components from $\Lambda$.
For example, a Haskell programmer might be interested in a code snippet that,
given a list of optional values, finds the first element that is not \T{Nothing}
(and returns a default value if such an element does not exist).
The programmer might pose this as a type-driven synthesis problem,
where $\Lambda$ is the Haskell standard library, and the query $T$ is $\T{a} \to \T{[Maybe a]} \to \T{a}$.
Given this problem, the state-of-the-art type-driven synthesizer \hplus~\cite{guo2020tygar,james2020digging}
returns a list of candidate programs that includes the desired solution:
$\lambda \T{def}\ \T{mbs} \to \T{fromMaybe def (listToMaybe (catMaybes mbs))}$.

In this section we adopt the setting of \hplus,
where components can be both \emph{polymorphic} and \emph{higher-order},
both of which make the synthesis problem significantly harder.
On the other hand, also following \hplus,
we do not consider synthesis of inner lambda abstractions:
that is, arguments to higher-order functions can be partial applications but not lambdas.

\mypara{\hplus Limitations}
\hplus works by encoding a synthesis problem into a data structure called \emph{type-transition net}:
a Petri net, where places (nodes) correspond to types, and transitions correspond to components;
the synthesis problem then reduces to finding a path from the input types to the output type of the query.
This encoding has two major limitations:
\begin{enumerate}
  \item \emph{No native support for polymorphic components.}
  In the presence of polymorphism, the space of types that can appear in a well-typed program becomes infinite.
  Because types are encoded as places, a finite Petri net cannot represent all candidate programs.
  Instead, \hplus employs a sophisticated abstraction-refinement loop
  to build a series of Petri nets that encode increasingly precise approximations of the set of types of interest.
  \item \emph{No native support for higher-order components.}
  Because components are encoded as transitions with a fixed arity---%
  they transform a fixed number of types into a single type---%
  all components must always be fully applied.
  This precludes the use of higher-order components: 
  for example, in \T{foldr (+) 0 xs}, the binary component \T{(+)} is not fully applied.
  To circumvent this limitation, \hplus must add a separate \emph{nullary copy} of the \T{(+)} component to the library.
  Since these duplicate components bloat the library and slow down synthesis,
  in practice only a few popular components are duplicated,
  thereby limiting the practicality of the synthesizer.
\end{enumerate}
In this section we present \hpp (\hplus: ECTA REvision),
our encoding of type-driven synthesis as ECTA enumeration.
This encoding has native support for both polymorphic and higher-order components,
without the need for an expensive refinement loop or duplicate components.

\begin{figure}[t]
  \centering
  \hskip1em
  \parbox{0.8\textwidth}{
    \parbox{.456\textwidth}{%
      \subcaptionbox{Variables (library components)\label{fig:apps-hplus-vars}}{\includegraphics[width=\hsize]{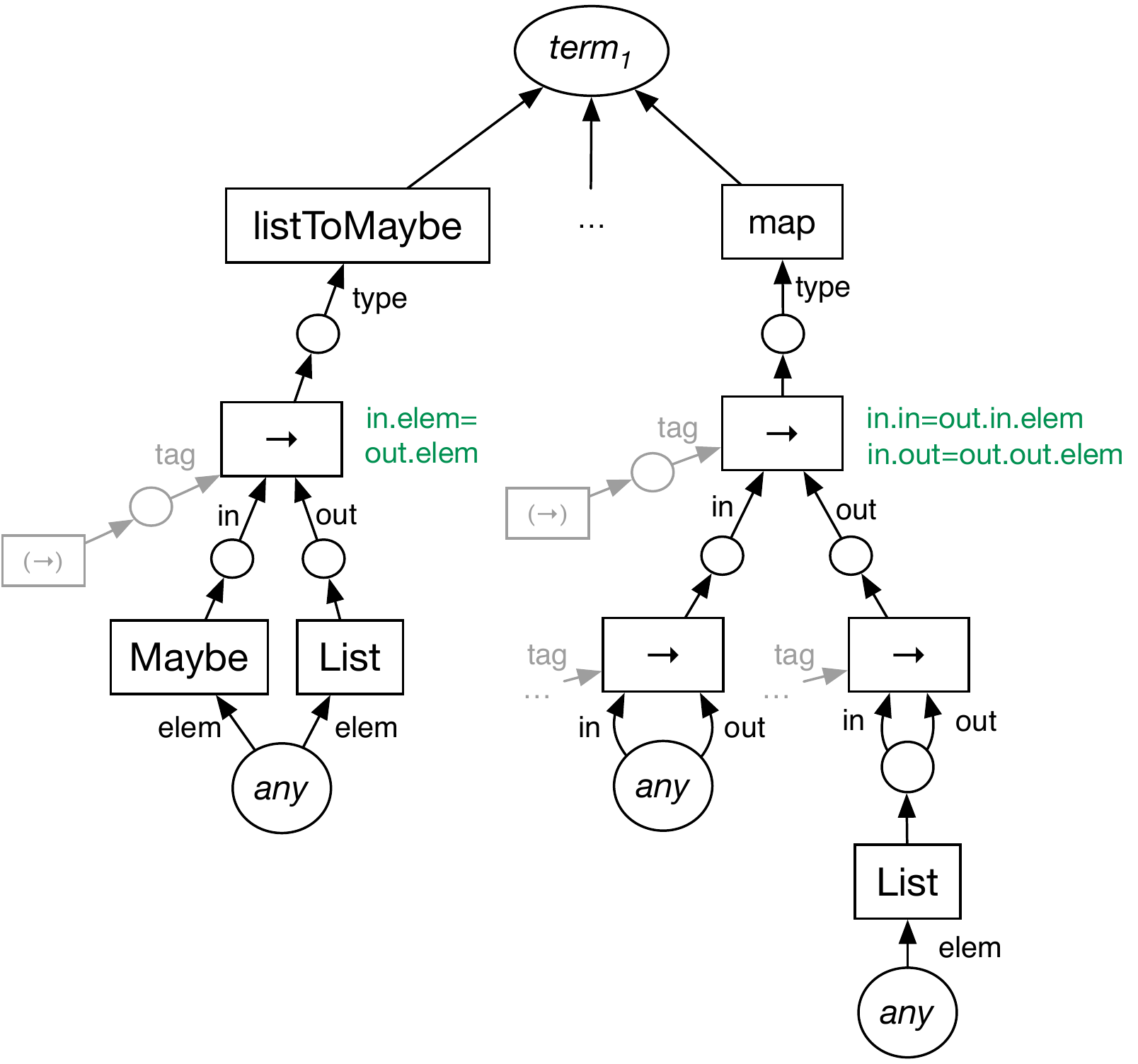}}
    }
    \hskip1em
    \parbox{.304\textwidth}{%
      \subcaptionbox{Types\label{fig:apps-hplus-types}}{\includegraphics[width=\hsize]{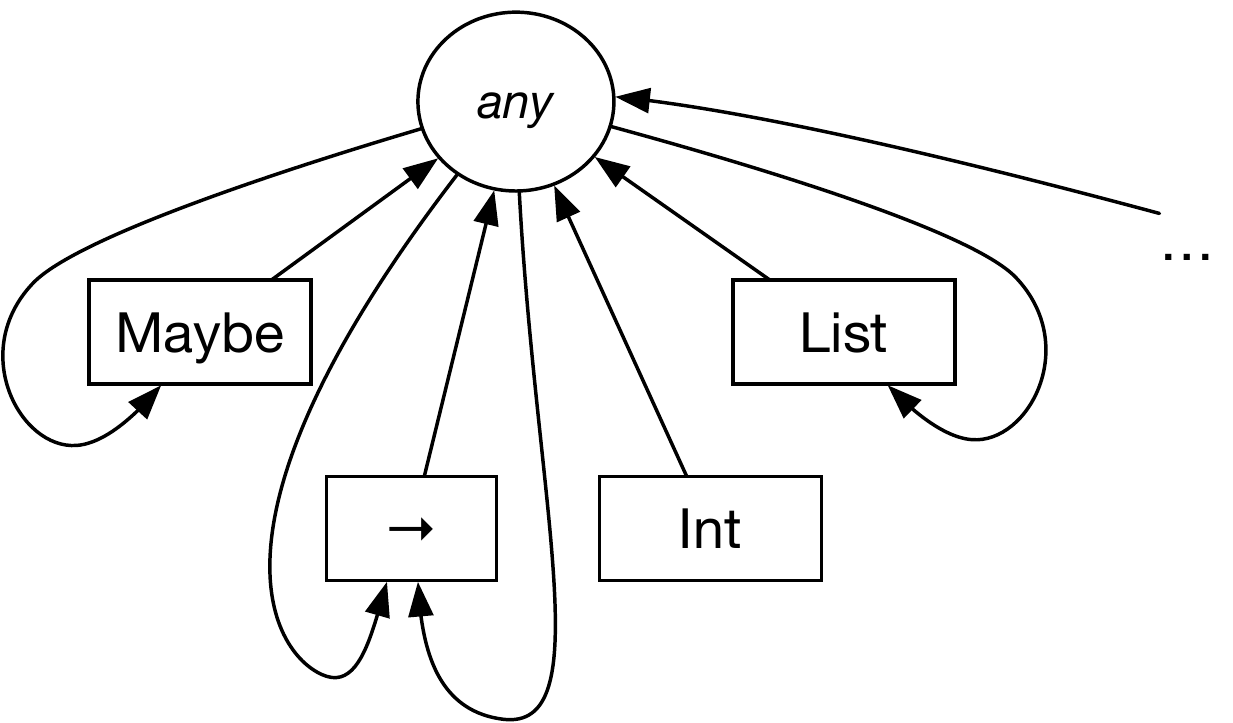}}
      \vskip1em
      \subcaptionbox{Size-two terms\label{fig:apps-hplus-terms}}{\includegraphics[width=\hsize]{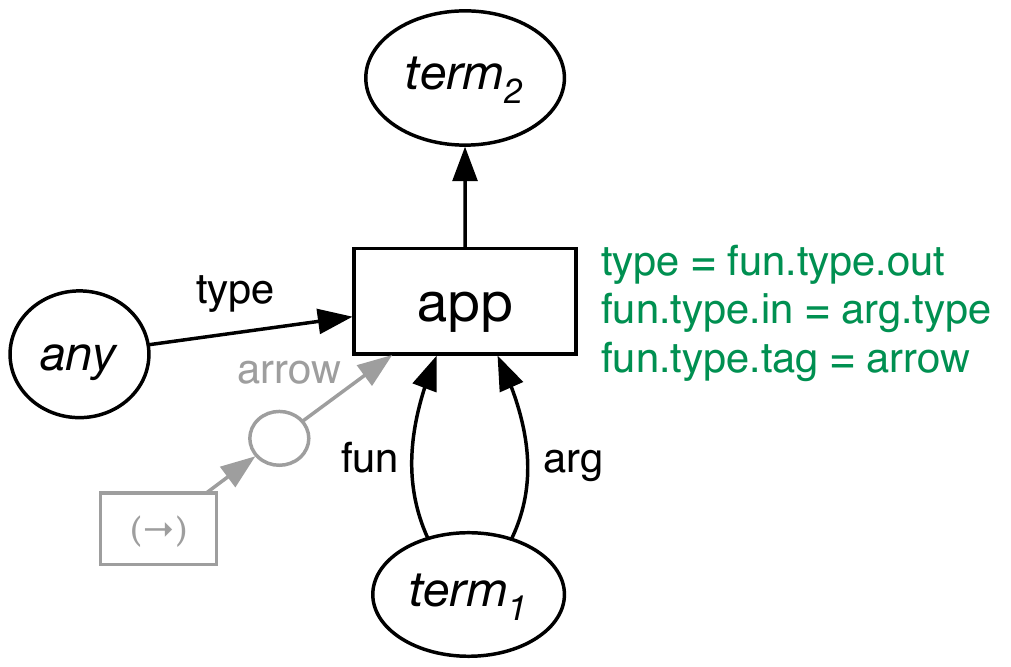}}  
    }
  }
  \caption{Encoding of variables, types, and fixed-size terms in \hpp.}\label{fig:apps-hplus}
\end{figure}

\mypara{Encoding Types}
Recall that \secref{sec:overview} (\autoref{fig:overview-ex3}) introduced an encoding for a limited form of polymorphism, 
where the type variable $\alpha$ in a type like $\alpha \to \alpha$ 
could be instantiated only with base types. 
We now generalize this encoding so that $\alpha$ can be instantiated with \emph{any type},
with arbitrarily nested applications of type constructors.
%
The infinite space of all types can be finitely encoded as a recursive node \staten{any},
as shown in \figref{fig:apps-hplus-types}.
The \staten{any} node has one child per type constructor in $\Lambda$,
with non-nullary type constructors looping back to \staten{any}.
Now the type $\alpha \to \alpha$ can be represented as a $\to$ transition,
whose children are both \staten{any} (and are constrained to equal each other).

\mypara{Encoding Components}
The simplified encoding in \secref{sec:overview} splits components into different nodes by their arity
(\eg the nodes \staten{scalar} and \staten{unary} in \autoref{fig:overview-ex1});
this was necessary given our simplified encoding of function types,
but as we mentioned above, such arity-specific encoding precludes partial applications.
\autoref{fig:apps-hplus-vars} illustrates the generalized encoding of components in \hpp.
Here all components, regardless of arity, are gathered in single node $\text{\staten{term}}_1$.
Each component is annotated with its \T{type};
function types are represented using the $\to$ transition with two child types, \T{in} and \T{out}
(for now, ignore the grayed out edges labeled \T{tag}, we explain those below).
\autoref{fig:apps-hplus-vars} showcases the type encoding for two polymorphic components:
$\T{listToMaybe} :: [\alpha]\to \T{Maybe}\ \alpha$ and
$\T{map} :: (\alpha \to \beta) \to [\alpha] \to [\beta]$;
as before, all occurrences of the same type variable are related by equality constraints, shown in green.

\mypara{Encoding Applications}
\autoref{fig:apps-hplus-terms} illustrates the \hpp encoding of size-two terms.
As before, the application transition \T{app} has two children \T{fun} and \T{arg},
but now they are both represented by the same node $\text{\staten{term}}_1$;
hence this encoding supports partial applications, such as \T{map listToMaybe}.

We now explain the purpose of the grayed-out parts of \autoref{fig:apps-hplus}.
The \T{app} node must ensure that its \T{fun} child has an arrow type.
The mere presence of \T{fun.type.in} and \T{fun.type.out} in its first two constraints does not suffice:
recall that the actual \ectalibrary library refers to children by index instead of by name,
and hence any other binary type constructor (such as \T{Either}) could satisfy those two constraints.
This would lead to accepting ill-typed programs, such as $\T{Left}\ x\ y$.
To circumvent this issue, we introduce a special tag transition $(\to)$,
which occurs nowhere else but as a first child of every $\to$ transition;
by constraining the first child of \T{fun} to be $(\to)$,
\T{app} effectively ensures that it is indeed a function (see the last constraint on \T{app}).

This encoding of application terms generalizes from size-two terms to terms of arbitrary fixed size $n$ as follows:
the node $\text{\staten{term}}_n$ has $n - 1$ incoming \T{app} transitions,
where the $i$-th transition ($i \in 1 \twodots n-1$) has children $\text{\staten{term}}_i$ and $\text{\staten{term}}_{n-i}$.

\mypara{Synthesis Algorithm}
So far we have discussed how to encode the space of all well-typed terms of size $n$.
Let us now proceed to the top-level synthesis algorithm of \hpp.
Given a query type, such as $\T{a} \to [\T{Maybe a}] \to \T{a}$,
\hpp first adds the inputs of the query (here $\T{def} :: \T{a}$ and $\T{mbs} :: [\T{Maybe a}]$)
to the node $\text{\staten{term}}_1$, as if they were components.
The algorithm then iterates over program sizes $n \geq 1$;
for each size $n$, it constructs the ECTA $\text{\staten{term}}_n$
and restricts its top-level type to the return type of the query (here \T{a}),
following the recipe illustrated in \autoref{fig:overview-ex2}.
The algorithm then statically reduces all constraints in the restricted ECTA and enumerates all terms accepted by the resulting reduced ECTA,
before moving on to the next size $n$.
Note that the type variables \emph{of the query} (here \T{a})
are represented as type constructors and not as the \staten{any} node,
since those type variables are universally quantified.

\mypara{Enforcing Relevancy}
Existing type-driven synthesizers~\cite{FengM0DR17,guo2020tygar}
restrict synthesis results to \emph{relevantly typed} terms---%
that is, terms that use all the inputs of the query.
Without such relevancy restriction, any synthesis algorithm gets bogged down
by short but meaningless programs.
\hpp enforces relevancy via a slight modification to the simple synthesis algorithm outlined above:
it splits every $\text{\staten{term}}_n$ node into $2^k$ nodes, 
where $k$ is the number of inputs in the query.
In our example, there are four nodes at each term size:
$\text{\staten{term}}_n^{\{\texttt{def}, \texttt{mbs}\}}, 
\text{\staten{term}}_n^{\{\texttt{def}\}}, 
\text{\staten{term}}_n^{\{\texttt{mbs}\}}, 
\text{\staten{term}}_n^{\emptyset}$,
each representing terms that must mention the corresponding set of inputs.
When constructing a new term node, say $\text{\staten{term}}_2^{\{\texttt{def}\}}$,
\hpp considers all applications of $\text{\staten{term}}_1^P$ to $\text{\staten{term}}_1^Q$ such that $P \cup Q = \{\T{def}\}$.
At the top level, only $\text{\staten{term}}_n^{\{\texttt{def}, \texttt{mbs}\}}$ is connected to the accepting node.
Although the number of term-nodes in this encoding grows exponentially with the number of inputs,
this is not a problem in practice, since the number of inputs is typically small;
note also that due to hash consing in \ectalibrary, the overlapping component sets are not actually duplicated.

\section{Evaluation}\label{sec:eval}

As we explained in \secref{sec:applications:hplus},
we used the \ectalibrary library to implement \tool, a type-driven component-based synthesizer for Haskell.
In this section, we evaluate the performance of \tool 
and compare it with the state-of-the-art synthesizer \hplus,
based on an SMT encoding of Petri-net reachability~\cite{guo2020tygar}.
Both tools are written in Haskell,
but the \hplus implementation (excluding tests and parsing) contains a whopping 4000 LOC, 
while \tool only contains 400. 
Although code size is an imperfect measure of development effort,
these numbers suggest that the \ectalibrary library 
has the potential to significantly simplify the development of program synthesizers.

%
We designed our evaluation to answer the following research questions:
\begin{enumerate}[\textbf{(RQ\arabic*)}]
    \item How does \tool compare against \hplus on existing and new benchmarks?
    \item How significant are the benefits of static and dynamic reduction in program synthesis?
\end{enumerate}

All experiments were conducted on an Intel Core i9-10850K CPU with 32 GB memory.

\begin{table}[]
    \footnotesize
    \centering
    \renewcommand{\arraystretch}{1.25}
    \caption{Three sample queries and corresponding solutions from two benchmark suites.}
    \label{tab:eval:example-query}
    \begin{tabular}{c|c|c|c}
        \toprule
        Suite & Name & Query & Expected solution \\
        \midrule
        \multirow{3}{*}{\rotatebox{90}{\hplus}} & \T{mergeEither} & \T{Either a (Either a b) -> Either a b} & \T{\\e -> either Left id e} \\
         & \T{headLast} & \T{[a] -> (a, a)} & \T{\\xs -> (head xs, last xs)} \\
         & \T{both} & \T{(a -> b) -> (a, a) -> (b, b)} & \T{\\f p -> (f (fst p), f (snd p))} \\
        \midrule
        \multirow{3}{*}{\rotatebox{90}{\parbox{1.3cm}{\textsc{Stack-}\\\textsc{Overflow}}}} & \T{multiIndex} & \T{[a] -> [Int] -> [a]} & \T{\\xs is -> map ((!!) xs) is} \\
         & \T{splitOn} & \T{Eq a => a -> [a] -> [[a]]} & \T{\\x xs -> groupBy (on (&&) (/= x)) xs} \\
         & \T{matchedKeys} & \T{(b -> Bool) -> [(a, b)] -> [a]} & \T{\\p xs -> map fst (filter (p . snd) xs)} \\
        \bottomrule                
    \end{tabular}
\end{table}

\subsection{Comparison on \hplus Benchmarks}
\mypara{Experiment Setup}
For our main experiment, we compare the two synthesizers on the benchmark suite
from the latest \hplus publication~\cite{james2020digging}.
This suite includes \nHplus synthesis queries, 
and a library of \nComps components from 12 popular Haskell modules.
These benchmarks are non-trivial:
the expected solutions range in size from 3 to 9, with the average size of 4.7; 
\polypercent of the components are polymorphic,
and \nHplusHO of the queries require using a higher-order component.
Three samples queries from this suite are listed at the top of \autoref{tab:eval:example-query}.
These queries have sizes 4, 5, and 7 respectively, 
and \T{mergeEither} uses a higher-order component \T{either}. 

Both \hplus and \tool yield candidate programs one at a time,
gradually increasing the size of the programs they consider.
For both tools, our test harness terminates the search once the expected solution has been found
(or the timeout of \nTimeout has been reached).
We report the average time to expected solution over \nRepeat runs.
We configured \tool to perform static reduction on all constructs prior to running the fast enumeration procedure of \secref{sec:enumeration:algorithm}, repeating this operation up to 30 rounds or until the automaton converges.

\begin{figure}
    \centering
    \begin{minipage}{.48\textwidth}      
        \centering
        \includegraphics[width=\textwidth]{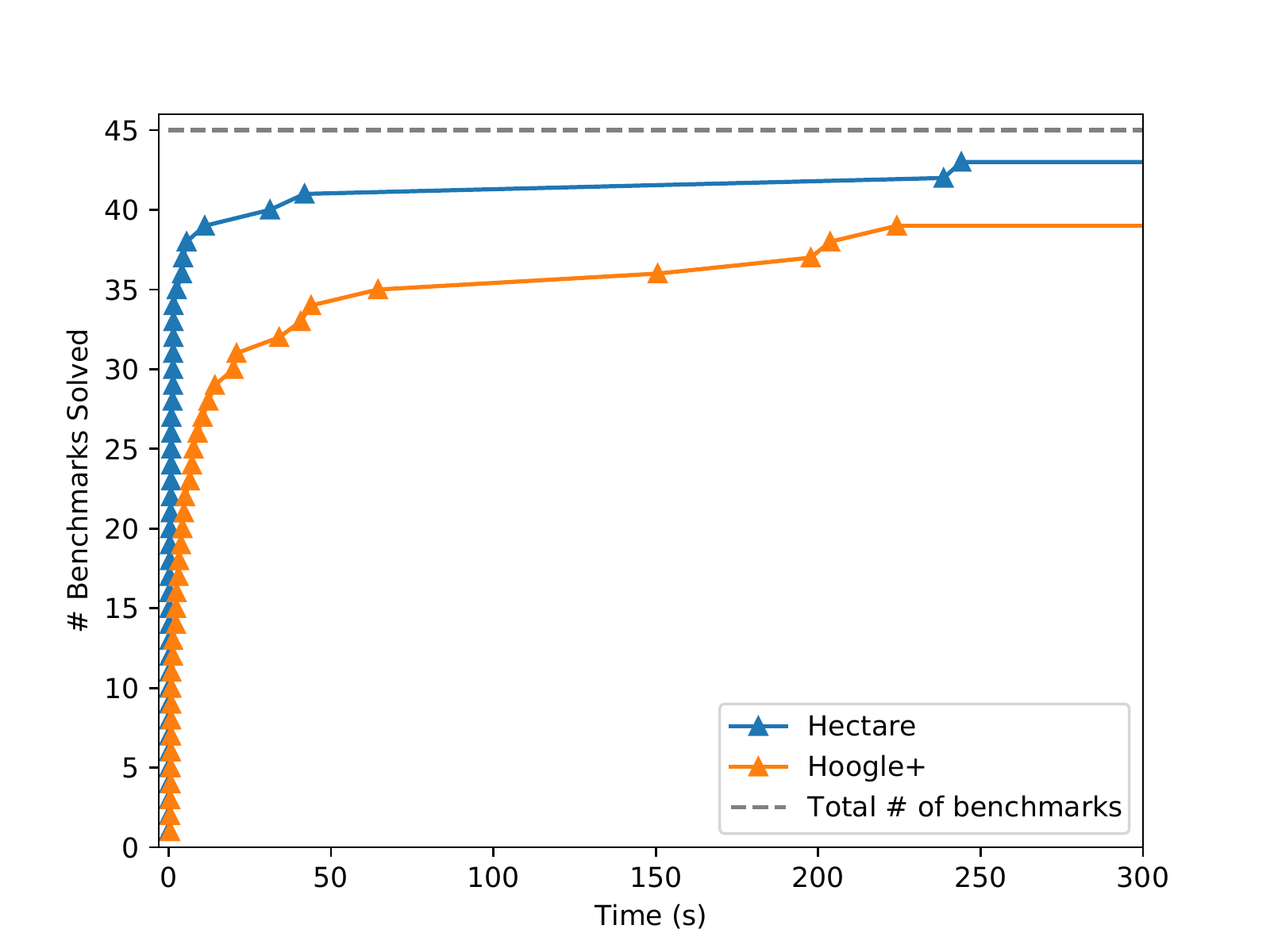}
        \caption{Benchmarks solved vs time for \hpp and \hplus on \hplus benchmarks.}
        \label{fig:hplus-cactus}
    \end{minipage}%
    \hfill%
    \begin{minipage}{.48\textwidth}
        \centering
        \includegraphics[width=\textwidth]{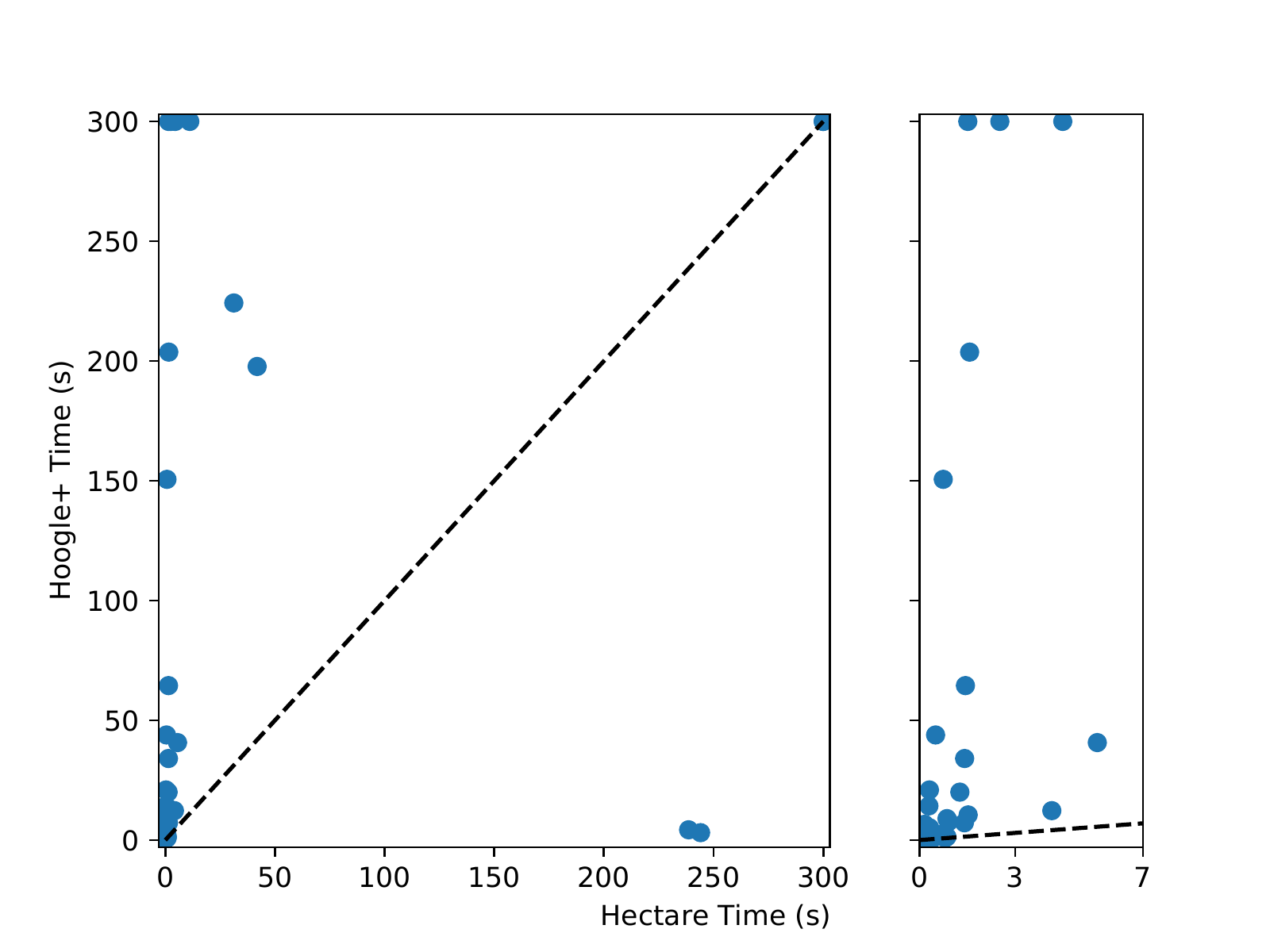}
        \caption{Synthesis times of \tool against \hplus on \hplus benchmarks.}
        \label{fig:hplus-scatter}
    \end{minipage}
\end{figure}

\mypara{Results}
\autoref{fig:hplus-cactus} plots the number of benchmarks solved \vs time for both synthesizers.
Within the timeout, \tool solves \nEctaSolved out of \nHplus benchmarks, 
whereas \hplus only solves \nHplusSolved.
Importantly, as we show in \autoref{fig:hplus-scatter}, 
on commonly solved benchmarks \tool is significantly faster:
it achieves an average speedup (geometric mean) of \hplusSpeedup on this suite,
solving all but two tasks faster than \hplus.
%
%
Fast synthesis times are especially important if a synthesizer is to be used interactively.
As shown in the zoomed-in scatter plot in \autoref{fig:hplus-scatter} (right),
\tool also vastly outperforms \hplus if we consider a shorter timeout of seven seconds,
commonly used for interactive synthesizers~\cite{loopy};
in fact, \tool solves 84\% of the benchmarks within seven seconds.


The poor performance of \hplus can be mainly attributed to the brittleness of the abstraction-refinement loop
it uses to support polymorphic components (\secref{sec:applications:hplus}).
For example, the \T{headLast} benchmark from \autoref{tab:eval:example-query}
is one of the queries where \hplus times out, while \tool only takes 2.5 seconds. 
Upon closer inspection,
\hplus is unable to create an accurate type abstraction for this query
and ends up wasting a lot of time enumerating ill-typed terms.
\tool, in contrast, natively supports polymorphic components via recursive nodes,
which leads to more predictable performance.
On the other hand,
the two benchmarks where \tool is slower than \hplus both involve deconstructing a \T{Pair}
and using both of its fields 
(\T{both} from \autoref{tab:eval:example-query} is one of these benchmarks).
\hplus solves these queries using a special treatment of \T{Pair}s:
it introduces a single component that projects both fields of a pair simultaneously,
which makes the solutions to these queries much shorter;
in \tool, we did not find a straightforward way to add this trick.



In general, we conclude that \emph{\tool is effective in solving type-driven synthesis tasks 
and outperforms a state-of-the-art tool on 89\% of their benchmarks with \hplusSpeedup speedup on average}.

\subsection{Comparison on \stackoverflow Benchmarks}
\mypara{Benchmark Selection}
Recall that another limitation of \hplus we discussed in \secref{sec:applications:hplus} 
is its restricted support for higher-order functions.
In fact, the original \hplus configuration
contains only \emph{nine} components whose nullary versions are added to the Petri net
(and which consequently can appear in arguments to higher-order functions).
In order to push the limits of both tools
and demonstrate the benefits of \tool's native encoding,
we assembled an additional benchmark suite focusing on higher-order functions.
To this end, we searched \stackoverflow for Haskell programming questions;
for each question, we attempted to construct an expected solution
using only applications of library components;
we excluded tasks that can be solved without higher-order functions
or require unsupported features (such as higher-kinded type variables and inner lambda abstractions).
This left us with \nStackoverflow synthesis queries.
The new benchmark suite is generally more complex than the original \hplus suite:
expected solutions range in size from 4 to 9, with the average of 6.2;
all of these programs include partial applications as arguments to higher-order components.
Three sample queries are shown at the bottom of \autoref{tab:eval:example-query}.

\mypara{Experiment Setup}
To run the newly collected benchmarks, 
we augmented the original component set from \hplus with seven components required in these benchmarks.
We also created a variant of \hplus called \hplusAll, 
in which we added nullary copies of all components into the Petri net
(\hplusAll thus has the same expressiveness as \tool).
As before, we record the time to expected solution,
repeat the measurement \nRepeat times, 
and report the average time;
to accommodate the increased benchmark complexity, we use a longer timeout of \nTimeoutL.

\begin{figure}
    \centering
    \begin{minipage}{.48\textwidth}
      \centering
        \includegraphics[width=\textwidth]{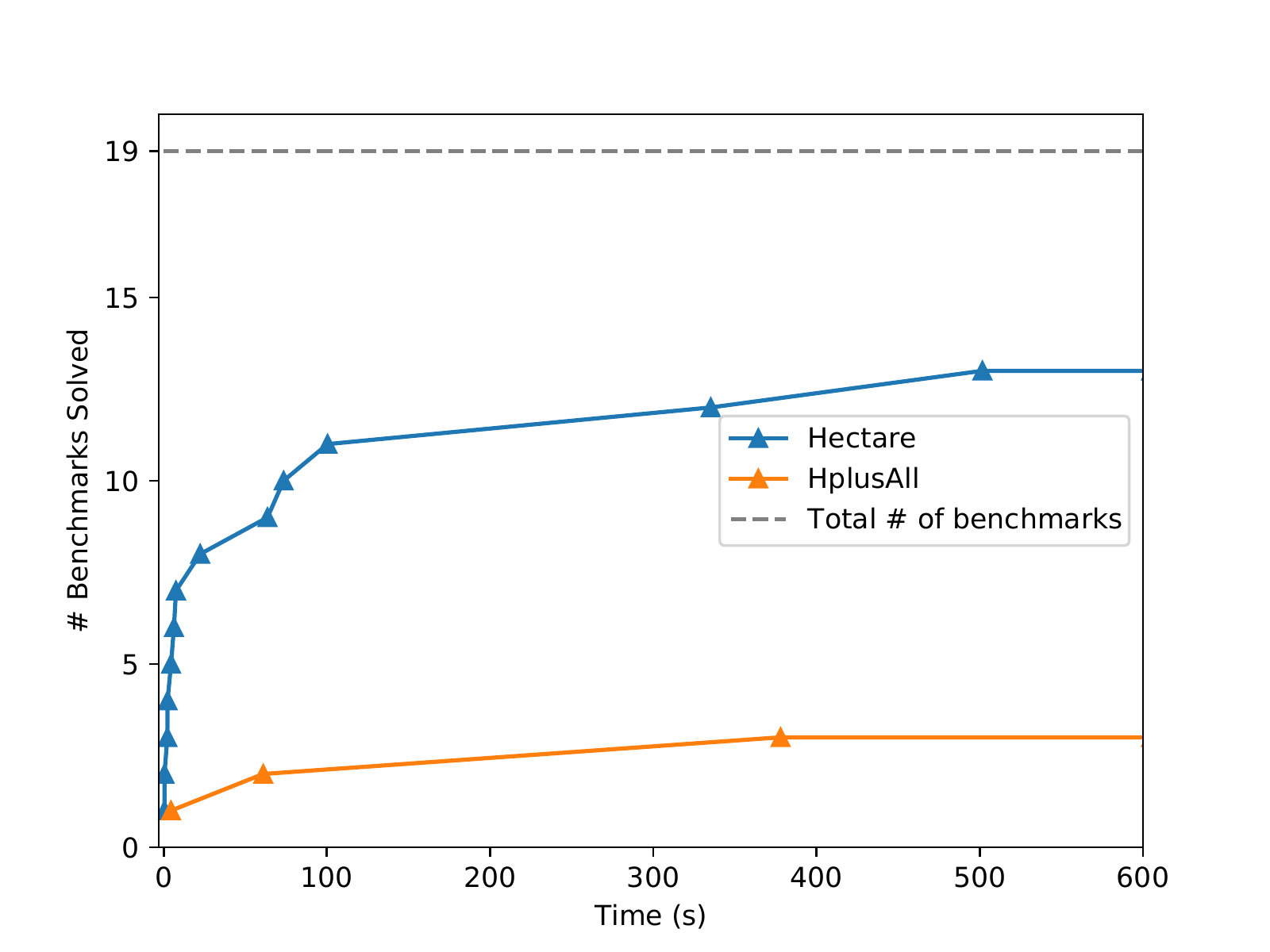}
        \caption{Comparison of synthesis time on higher-order benchmarks between \tool and \hplusAll.}
        \label{fig:compare-ho}
    \end{minipage}%
    \hfill%
    \begin{minipage}{.48\textwidth}
      \centering
        \includegraphics[width=\textwidth]{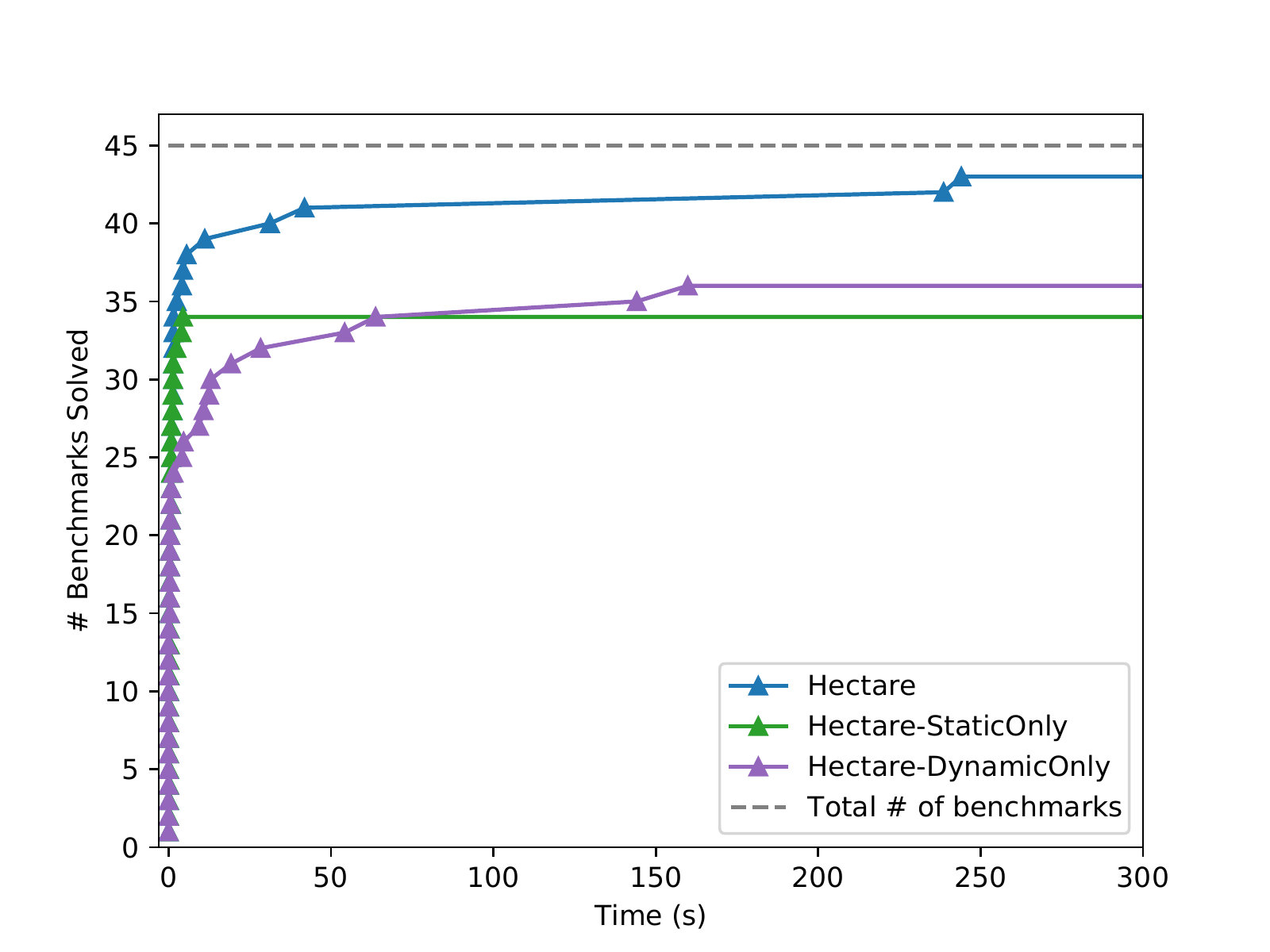}
        \caption{Comparison of synthesis performance between \hpp and its two variants.}
        \label{fig:ablation}
    \end{minipage}
\end{figure}

\mypara{Results}
Unsurprisingly, the original \hplus cannot solve any of the new benchmarks:
most of them require using new components in higher-order arguments
(and the rest are simply too large).
The results for \hplusAll and \tool are shown in \autoref{fig:compare-ho}.
Although the search space of \hplusAll does include all the new benchmarks,
it still fares poorly, solving only \nHplusSolvedSo out of \nStackoverflow.
The reason is that adding nullary versions of all components blows up the Petri net 
and makes the reachability problem intractable.
In contrast, \tool's native support for partial applications
enables it to solve \nEctaSolvedSo out of \nStackoverflow tasks in this challenging suite,
achieving \soSpeedup speedup on the three commonly solved benchmarks.
We therefore conclude that \emph{the benefits of \ectalibrary-based synthesis
are even more pronounced on larger benchmarks focused on higher-order functions}.

\subsection{Benefits of Static and Dynamic Reduction}
\mypara{Experiment Setup}
To isolate the contributions of static and dynamic reduction,
we compare \tool with its three variants:
\hppStatic, \hppDyn, and \hppNone,
which forgo one or both kinds of reduction, respectively.
Specifically, both \hppStatic and \hppNone, use a na\"ive ``rejection-sampling'' enumeration.
Note that in the presence of recursive nodes, such as the \staten{any} type, 
the na\"ive enumeration tends to get ``stuck'',
constructing infinitely many spurious terms and never finding one that satisfies the constraints.
To prevent this behavior, we limit the unfolding depth of recursive nodes to three,
which is sufficient to solve all the benchmarks.
We run the three variants on the \hplus benchmarks with a timeout of \nTimeout
and report the average time to expected solution over \nRepeat runs.

%

\mypara{Results}
\autoref{fig:ablation} plots the number of benchmarks solved \vs time for \hpp and its variants.
\hppNone is omitted from the plot because it cannot solve \emph{any benchmarks}: 
it spends most of its time unfolding the recursive \staten{any} node,
or in other words, blindly going through all possible instantiations of every polymorphic component.
The other two variants fare significantly better:
\hppStatic and \hppDyn are able to solve \nStaticSolved and \nDynSolved tasks, respectively.
That said, as the tasks get harder, \tool still outperforms these variants drastically:
in particular, the variants cannot solve any benchmarks of size six or larger.

A closer look at \autoref{fig:ablation} reveals a curious difference:
\hppStatic is ``all-or-nothing'': it performs as well as \tool on easy benchmarks, but then completely falls flat;
\hppDyn, in contrast, demonstrates a more gradual degradation of performance.
To understand why,
recall that the biggest time sink during enumeration is blindly unfolding the recursive \staten{any} nodes.
Static reduction can sometimes get rid of \staten{any} nodes entirely,
making the resulting ECTA small enough that any enumeration algorithm would do;
when it fails to do so, however, 
na\"ive enumeration spends all its time in \staten{any} nodes.
Dynamic reduction, on the other hand,
provides a more gradual yet robust approach to dealing with \staten{any} nodes,
via \textsc{Suspend}.
In summary, we find that \emph{both static and dynamic reduction individually are critical to the performance of ECTA-based programs synthesis,
and moreover, they complement each other's strengths.}

%

\section{Related Work}
\label{sec:related-work}


\mypara{Constrained Tree Automata}
Tree automata have long been used to represent sets of terms in term rewriting \cite{dauchet1993rewriting,feuillade2004reachability,geser2007tree}, 
and we are not the first to consider adding equality constraints to handle nonlinear rewrites. 
In fact, in 1995, Dauchet introduced a data structure very similar to our ECTAs,
called \emph{reduction automata}~\cite{dauchet1995automata}.
%
In fact, reduction automata are more expressive than ECTAs, 
as they also allow \emph{disequality constraints}, 
including disequalities (but not equalities) on cycles.
Unfortunately, allowing disequalities---or other classes of constraints for that matter---%
precludes efficient static and dynamic reduction based on automata intersection.
For that reason, we consider ECTAs to be a sweet spot:
expressive enough to encode a variety of interesting problems,
yet restricted enough to enable fast enumeration.
%

Other prior work on constrained tree automata 
\cite{bogaert1999recognizability,bogaert1992equality,barguno2010emptiness,barguno2013decidable,reuss2010bottom} 
similarly focuses on theoretical aspects, such as worst-case complexity and decidability results, 
and we have found no reference to these data structures being used in a practical system in the $30$ years since their introduction. 

Attribute grammars \cite{knuth1968semantics, paakki1995attribute, van2010silver} augment context-free grammars with a number of equations of the form $\left<\text{attribute}\right> = \left<\text{expression}\right>$. This notation resembles a constraint system over trees, but those equations are actually unidirectional assignments; attribute grammars compute values over trues, but do not constrain them.
%

\mypara{Unconstrainted FTAs, VSAs, and E-Graphs}
In contrast to the purely theoretical work on constrained tree automata,
their unconstrained counterparts, as well as VSAs and e-graphs,
have enjoyed practical applications
in program synthesis~\cite{wang2017synthesis,DBLP:journals/pacmpl/WangDS18,gulwani2011automating,polozov2015flashmeta,nandi2020synthesizing,ruler,willsey2020egg} 
and related areas, such as theorem proving \cite{detlefs2005simplify}, superoptimization~\cite{tensat}, and semantic code search~\cite{premtoon2020semantic}.
One important feature of these data structures, which ECTAs currently lack,
is the ability extract an optimal term according to a user-defined cost function.
It is not surprising that ECTAs have a slightly different focus,
since in the presence of constraints
extracting or enumerating terms regardless of cost becomes nontrivial---%
at least as hard as SAT solving.
Consequently, extracting optimal terms would be akin to MaxSAT solving~\cite{maxsat};
we leave this non-trivial extension to future work.

Finally note that unlike FTAs and VSAs, 
e-graphs are used to represent a \emph{congruence relation} over terms, 
as opposed to an arbitrary term space;
hence adding equality constraints to an e-graph is less meaningful.
Returning to our introductory example in \autoref{fig:ex1},
an e-graph equivalent to the FTA in \autoref{fig:ex1-fta} 
would actually encode that $\s{a}$, $\s{b}$, and $\s{c}$, are all \emph{equivalent} to each other;
hence it is hard to imagine why one would want to represent only the terms of the form $+(f(X), f(X))$
but not $+(f(X), f(Y))$,
because all these terms are equivalent.









\section{Conclusions and Future Work}\label{sec:conclusions}

This paper has introduced \emph{equality-constrained tree automata} (ECTAs)
and contributed an efficient implementation of this new data structure in the \ectalibrary library.
We think of ECTAs as a general-purpose language for expressing constraints over terms,
and the \ectalibrary library as a solver for these constraints.
Although in this paper we only discussed two concrete examples of properties that can be encoded with ECTAs---%
boolean satisfiability and well-typing---%
in the future we hope to see many fruitful applications in a wide range of domains.


\mypara{ECTA \vs SMT}
Instead of developing a custom solver for ECTAs,
wouldn't it be better to simply translate ECTAs into SAT or SMT constraints,
and use existing, well-engineered solvers?
A natural idea is to introduce a variable per ECTA node,
whose value represents the choice of incoming transition,
and to translate ECTA constraints into equalities between these variables.
This simple idea, however, does not work:
because the choice is made independently every time a node is visited,
this encoding would require unfolding the ECTA \emph{into a tree}.
This is a complete non-starter for cyclic ECTAs (like the \hpp encoding of \secref{sec:applications:hplus}), 
since the corresponding tree is infinite.
For acyclic ECTAs, the tree is finite but might be exponential in the size of the ECTA 
(since we need to ``un-share'' all the shared paths in the DAG).

More generally, the problem of finding an ECTA inhabitant is not in $\mathsf{NP}$, 
because the smallest tree represented by an ECTA can be exponential in the size of the ECTA
(as we illustrated in \autoref{fig:balanced-tree});
hence a general and efficient SAT encoding is not possible.
Although future work might develop a clever SMT encoding using advanced theories,
we believe this problem is far from trivial. 
After all, \hplus uses an SMT encoding that is specifically tailored to the type inhabitation problem (\ie it is less general than ECTA), 
and yet it is less efficient. 
As we discussed in \secref{sec:applications:hplus}, the main source of this inefficiency is polymorphism,
which makes the search space of types \emph{infinite} and precludes a ``one-shot'' SMT encoding,
requiring \hplus to go through a series of finite approximations of the space of types to consider.
Instead, a cyclic ECTA is able to represent the entire infinite space of types at once, 
and ECTA enumeration is able to explore this space efficiently, as our experiments show.
We anticipate that ECTAs will outperform SMT solvers on other similar problems
that require searching an \emph{infinite yet constrained space of terms}.

\mypara{Future Work}
One avenue for extension is to enrich the constraint language supported by ECTAs. The key ingredient for efficiency is that there exists a constraint-propagation mechanism that can be interleaved with \textsc{Choose}. Intersection is this constraint-propagation mechanism for equality, but there may be others. For example, one of our anonymous reviewers suggests that disequality constraints could be processed by creating an alternative rule to \textsc{Suspend} which tracks both sides of a disequality, and modifying \textsc{Choose} to discharge disequality constraints or propagate them into subterms as symbols are selected. 

Another path for extension is to relax the requirement for no constraints on cycles. The careful reader may notice that \autoref{thm:cyclic-undecidability} only impedes emptiness-checking; enumerating all satisfying terms up to a fixed size is trivially decidable. Currently \hpp creates many ECTA nodes for different term sizes, using a meta-program to iterate through successive ECTAs. With constraints on cycles, this meta-program could be internalized, shortening the \hpp implementation even further.

\bibliography{citations.bib}

\clearpage

\appendix

\section{Proofs and Additional Formalization}
\label{app:proofs}

\thmgenclosure*

\begin{proof}
  The reverse direction is trivial, since $\closure(C)$ is a superset of $C$.
  Let us focus on the forward direction: 
  that is, pick a $t$ that satisfies $C$ and show that it satisfies $\closure(C)$. 
  The proof is by induction on the number of steps,
  where each step constructs a new approximation of the closure, $C^{n+1}$, 
  by adding a new path $p''.p$ to a PEC $c_2\in C^n$ if
  $\exists c_1\in C^n$ such that $p',p''\in c_1$ and $p'.p\in c_2$ but $p''.p \notin c_2$.
  We need to show that if $\pecsat{C^n}{t}$ then also $\pecsat{C^{n+1}}{t}$.
  
  Assume $\pecsat{C^n}{t}$.
  Then there exists a unique $t'$ such that $t' = \at{t}{p'} = \at{t}{p''}$.
  But then $\at{t}{p'.p} = \at{t'}{p} = \at{t}{p''.p}$.
  Hence $t$ satisfies the new PEC $c_2' = c_2 \cup\{p''.p\}$, and hence satisfies $C^{n+1}$.
\end{proof}

\thmgenconsistencyfree*

\begin{proof}
It is easy to see that a non-prefix-free PEC is inconsistent.

For the reverse direction, we shall show that if each of the $c_i$ are prefix-free, then $C$ is consistent. Consider $c_1, c_2 \in C$, and define $c_1 \sqsubseteq c_2$ if $\exists p_1\in c_1, p_2\in c_2$ such that $p_1$ is a subpath of $p_2$. We show that $\sqsubseteq$ is a partial order. Reflexivity and transitivity are trivial. For anti-symmetry: suppose $p_1,p_3 \in c_1$ and $p_2, p_4 \in c_2$ with $p_1$ a subpath of $p_2$, $p_4$ a subpath of $p_3$. If $c_1=c_2$, we are done; otherwise, we must have $p_1\neq p_2$ and $p_3 \neq p_4$ by completeness of $C$. We can thus write $p_2=p_1.p'_2$, $p_3=p_4.p'_3$ for some $p'_2, p'_3 \neq \epsilon$. By completeness of $C$, it follows that $c_1$ contains $p_1=p_3=p_4.p'_3=p_2.p'_3=p_1.p'_2.p'_3$, which would mean that $c_1$ is not prefix-free. Thus, $\sqsubseteq$ is a partial order.

Because $C$ can be topologically sorted by $\sqsubseteq$, it is trivially possible to construct a term $t$ such that $C(t)$ by inductively choosing a term for each successive minimal $c_i$. Thus, if each of the $c_i$ are prefix-free, $C$  is consistent.
\end{proof}

\thmcompreduction*

\begin{proof}
  Let $e'=\reduce(e,c)$, let $p_i, p_j\in c$, 
  and suppose there is $n\in \nodesat(e', p_i)$ such that $n \intersect \at{e'}{p_j} = \nodebot$. 
  Let $n^*$ be as defined in \ref{defn:basic-static-reduction}. 
  Then $n \intersect n^* = \nodebot$ by associativity.
  Then $n \notin \nodesat(\intersectatpath{e}{p_i}{n^*}, p_i)$. 
  Because $c$ is prefix-free, no successive intersect-at-path operation in the definition of $\reduce(e, c)$ can alter this property, 
  and so $n \notin \reduce(e, c)$.
  Therefore, there is no such $n$ with $n \intersect \at{e'}{p_j} = \nodebot$, 
  and so $\reduce(e, c)$ satisfies the reduction criterion for $c$.
\end{proof}

\thmsoundreduction*  

\begin{proof}
  Let $c=\{p_1=\dots=p_k\}$. 
  Without loss of generality, it suffices to show that $\denedge{\intersectatpath{e}{p_1}{n^*}}=\denedge{e}$, 
  where $n^*$ is defined as in \autoref{defn:basic-static-reduction}. 
  We know from monotonicity that $\denedge{\intersectatpath{e}{p_i}{n^*}} \subseteq \denedge{e}$, 
  so we need only show $\denedge{\intersectatpath{e}{p_i}{n^*}} \supseteq \denedge{e}$.
  
  Consider a term $t\in \denedge{e}$. 
  Then, by definition, $\pecsat{c}{t}$, 
  and $\at{t}{c} \in \dennode{\at{e}{p_i}}$ for all $i$ by \autoref{thm:nodes-at-path-correctness}, 
  and so $\at{t}{c} \in \dennode{n^*}$ by the properties of intersection and the definition of $n^*$. 
  Since $\at{t}{c} \in \dennode{\at{e}{p_1}}$, so $\at{t}{p_1} \in \dennode{n^*}$.
  Hence $t \in \denedge{\intersectatpath{e}{p_i}{n^*}}$, 
  so $\denedge{\intersectatpath{e}{p_i}{n^*}} \supseteq \denedge{e}$.
\end{proof}

\subsection{Undecidable ECTAs}
\label{app:proofs:undecidable}

\begin{figure}
    \centering
    \includegraphics[scale=0.3]{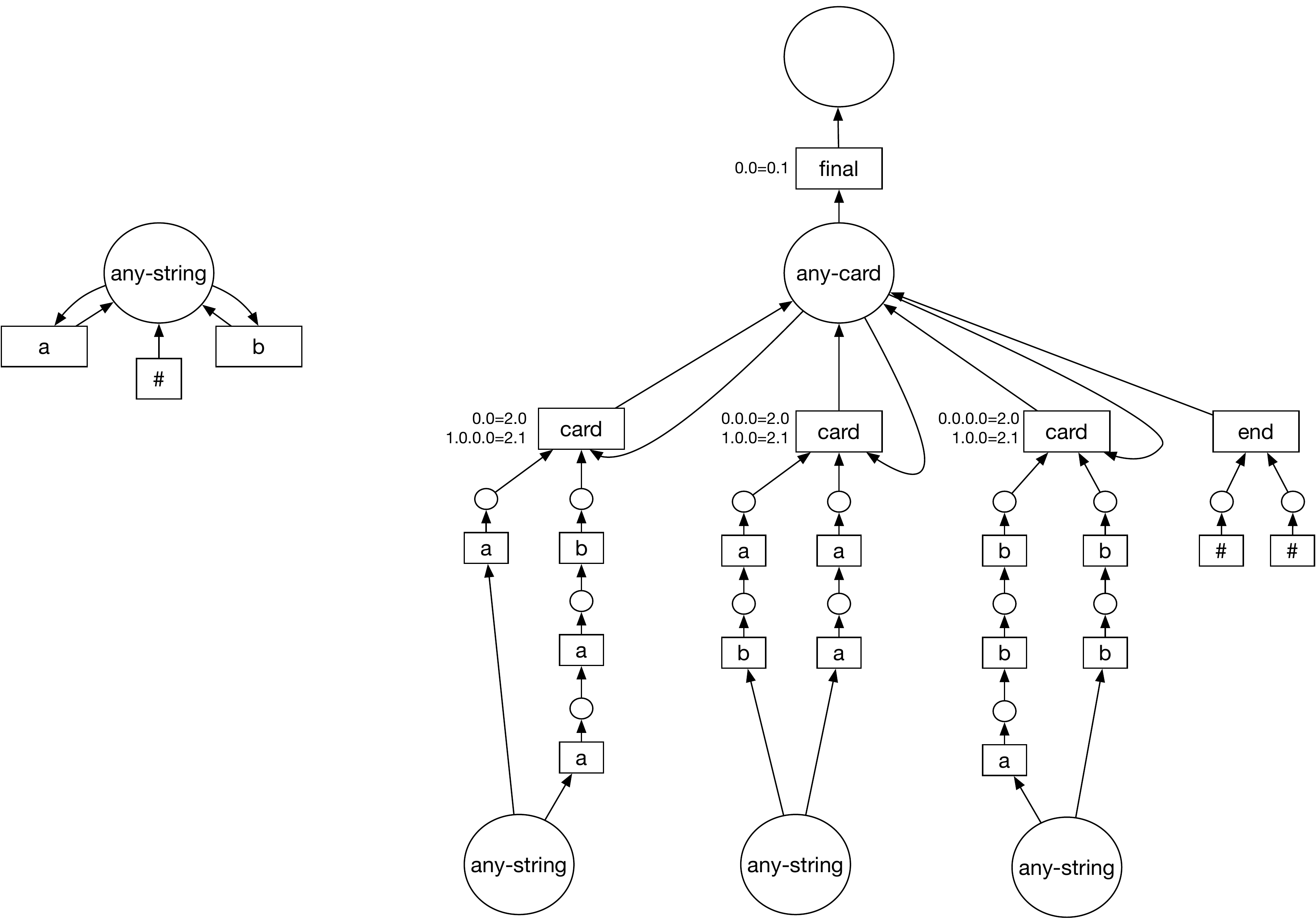}
    \caption{Example encoding of the PCP problem into an ECTA with constraints on cycles}
    \label{fig:undecadibility-proof}
\end{figure}

\thmgenundecidable*

\begin{proof}
By reduction from the Post Correspondence Problem. Consider the instance of the PCP problem with three cards $(a,baa), (ab,aa),$ and $(bba, bb)$. Any term represented by the ECTA in \autoref{fig:undecadibility-proof} encodes either a solution to this instance or is the empty solution $\text{final}(\text{end}(\#,\#))$.  We leave it as an exercise to the reader to generalize this construction to other instances, and to unroll this ECTA so as to exclude the empty solution.
\end{proof}

\subsection{Enumeration State}
\label{app:proofs:enumeration-state}

\mypara{Denotation}
Defining the denotation of a partially-enumerated term is tricky because of a circular reference: enumeration states contain partially enumerated terms, which in turn can reference other parts of the enumeration state.

We choose to break this circularity by giving an \emph{impredicative} definition. Each of the cases in the definition is presented as a \emph{filter} on the space of all possible substitutions. The denotation of a partially-enumerated term is a set of pairs $(t, \rho)$, where the $t$ range over all all terms it may represent, and the $\rho$ range over all possible substitutions $\rho$ compatible with this choice of term. Intersecting these filtered sets of all possible substitutions, and then restricting the substitutions to the set of variables actually used (written $\restrict{\rho}{\text{dom}(\sigma)}$), gives the desired set of substitutions compatible with the constraints of $\sigma$.

\begin{definition}[Denotation of Enumeration State]

\begin{align*}
    \denpt{\cdot}_{\sigma} &\hastype \tau \times (\set{Var} \partialfn \tau) \rightarrow (\P{\closedtermsof{\Sigma}} \times (\set{Var} \rightarrow \tau)) \\
    \denpt{s(\overline{\tau_i})}_{\sigma} &= \left\{ (s(\overline{t_i}), \rho) \bigmid (t_i, \rho_i) \in \denpt{\tau_i}_{\sigma}, \rho\text{ arbitrary}, \forall j\forall v, \rho(v)=\rho_j(v) \right\} \\
    %
    %
    \denpt{\unenum{n}{\many{\fragment{c_i}{v_i}}}}_{\sigma} &= \left\{ (t, \rho) \bigmid \rho\text{ arbitrary}, t\in \dennode{n}, \pecsat{\many{\fragment{c_i}{v_i}}}{t}, \at{t}{c_i}=\rho(v_i)  \right\} \\
    %
    %
    \denotation{v}^{PT} &= \left\{ (t, \rho) \bigmid (t, \rho) \in \denpt{\sigma[v]}, \rho(v)=t \right\}\\
    %
    %
    ~\\
    \denstate{\sigma} &\hastype (\set{Var} \partialfn \tau) \rightarrow (\set{Var} \to \P{\closedtermsof{\Sigma}}) \\
    \denstate{\sigma} &= \left\{ \restrict{\rho}{\text{dom}(\sigma)} \bigmid  (t_v, \rho) \in \denotation{\sigma(v)}^{PT}, \rho(v) = t_v \right\}
\end{align*}

\end{definition}

\begin{example}
Let $v_\top=v_0$ and consider the enumeration state

\begin{align*}
\sigma = [v_0 &\mapsto f(\unenum{\node{\edgeucnull{a}, \edgeucnull{b}}}{\fragment{\epsilon}{v_1}}, v_1), \\
v_1 &\mapsto \unenumuc{\node{\edgeucnull{a}, \edgeucnull{b}}}]
\end{align*}

Then 

\[
  \denotation{\sigma(v_1)}^{PT} = \bigcup_{\rho}\left\{(a, \rho), (b, \rho)\right\}
\]
where the union is over all $\rho \hastype \set{Var} \rightarrow \set{Term}$, and
\[
  \denotation{\sigma(v_0)}^{PT} = \big(\bigcup_{\rho \big| \rho(v_1) = a} \{(f(a, \rho(v_1)), \rho)\} \big) \bigcup \big(\bigcup_{\rho \big| \rho(v_1) = b} (f(b, \rho(v_1)), \rho)\}\big)
\]
which reduces to the terms $f(a, a)$ and $f(b, b)$ with $\rho(v_1)$ being $a$ or $b$ accordingly. This yields the final denotation
\[
  \denstate{\sigma} = \left\{ [v_0 \mapsto f(a, a), v_1 \mapsto a],  [v_0 \mapsto f(b, b), v_1 \mapsto b] \right\}
\]
\end{example}

\subsection{Enumeration Algorithm}
\label{app:proofs:enumeration}

\begin{definition}[Reachable Enumeration State]
Enumeration state $\sigma$ is a \emph{reachable enumeration state} if there is an initial state  $\sigma_0=[v_\top \mapsto \unenumuc{n}]$ such that $\reduces{\sigma_0}{\sigma}$.
\end{definition}

\begin{lemma}[Soundness of \textsc{Choose-$\square$}]\label{lemma:soundness-choose-sq}
Let $\tau$ be a u-node, and define $T=\left\{\tau' \middle| \steps{\tau}{\tau'}\text{ by \textsc{Choose-}}\square \right\}$. Then $\denpt{\tau} =\bigcup_{\tau' \in T} \denpt{\tau'}$
\end{lemma}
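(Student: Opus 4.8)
The plan is to prove the two inclusions of $\denpt{\tau} =\bigcup_{\tau' \in T} \denpt{\tau'}$ separately, by directly unfolding the definition of $\denpt{\cdot}$ from \appref{app:proofs:enumeration-state}, the definitions of $\dennode{\cdot}$ and $\denedge{\cdot}$ from \autoref{fig:syntax}, and the \textsc{Choose-}$\square$ rule from \autoref{fig:enum-defs} together with the $\proj$ function from \autoref{fig:project}. No structural induction on terms is needed: \textsc{Choose-}$\square$ unrolls exactly one layer, and all the recursion is already packaged inside $\dennode{\cdot}$. Write $\tau = \unenum{\node{\many{e}}}{\Phi}$, and assume, as the statement implicitly requires, that \textsc{Choose-}$\square$ is applicable, i.e. $\fragment{\epsilon}{\_}\notin\Phi$ and $\node{\many{e}}$ is not recursive; then every $\tau' \in T$ has the form $s(\tau^0,\dots,\tau^{k-1})$ arising from some transition $e^j = \edge{s}{[n^0,\dots,n^{k-1}]}{c}\in\many{e}$, with $\tau^i = \unenum{n^i}{\proj(\fragment{c}{v}\cup\Phi,i)}$ for a fresh $v$. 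I will take the transition to carry a single PEC $c$; the case of a PCS with several PECs is identical, allocating one fresh variable per PEC.

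The technical core is a "projection commutes with satisfaction" lemma: for any $t = s(t^0,\dots,t^{k-1})$, any prefix-free PEC $c'$ with $\epsilon\notin c'$, and any term $u$, we have $\pecsat{c'}{t}\wedge\at{t}{c'}=u$ iff for every $i$ with $\proj(c',i)\neq\emptyset$ we have $\pecsat{\proj(c',i)}{t^i}\wedge\at{t^i}{\proj(c',i)}=u$, the constraint being vacuous for children with $\proj(c',i)=\emptyset$. This is a routine case analysis on paths: since $\epsilon\notin c'$, every $p\in c'$ is $i_p.p'$ with $p'\in\proj(c',i_p)$ and $\at{t}{p}=\at{t^{i_p}}{p'}$; the $\bot$ branch of $\proj(p,i)$ is never reached precisely because of $\epsilon$-freeness, which is where the premise $\fragment{\epsilon}{\_}\notin\Phi$ and prefix-freeness of all PECs are used. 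Applying this to $c$ and to each $c_l$ occurring in $\Phi$ shows that the fragment set $\proj(\fragment{c}{v}\cup\Phi,i)$ attached to $\tau^i$ captures exactly the "$i$-th slice" of the transition constraint $c$ together with the inherited fragments, glued through the shared variables $v, v_1,\dots$. With this in hand, the $\supseteq$ direction is direct: for $\tau'=s(\tau^0,\dots,\tau^{k-1})\in T$ from $e^j$ and $(t,\rho)\in\denpt{\tau'}_\sigma$, the definition of $\denpt{s(\cdots)}$ gives $t=s(t^0,\dots,t^{k-1})$ with $(t^i,\rho)\in\denpt{\tau^i}_\sigma$, hence $t^i\in\dennode{n^i}$ and $t^i$ satisfies the projected fragments with values read from $\rho$; the projection lemma reassembles these into $\pecsat{\{c\}}{t}$ (so $t\in\denedge{e^j}\subseteq\dennode{\node{\many{e}}}$) and $\pecsat{\Phi}{t}$ with $\at{t}{c_l}=\rho(v_l)$, i.e. $(t,\rho)\in\denpt{\tau}_\sigma$. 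For $\subseteq$, take $(t,\rho)\in\denpt{\tau}_\sigma$; then $t\in\dennode{\node{\many{e}}}=\bigcup_j\denedge{e^j}$, so $t\in\denedge{e^j}$ for some $j$, giving $t=s(t^0,\dots,t^{k-1})$ with $t^i\in\dennode{n^i}$, $\pecsat{\{c\}}{t}$, $\pecsat{\Phi}{t}$, and $\at{t}{c_l}=\rho(v_l)$; setting $u=\at{t}{c}$ and extending $\rho$ on $v$ by $\rho(v)=u$, the projection lemma certifies $(t^i,\rho)\in\denpt{\tau^i}_\sigma$ for each $i$, hence $(t,\rho)\in\denpt{\tau'}_\sigma$ for the $\tau'$ built from $e^j$.

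The point that requires care, and where I expect the argument to feel delicate, is the bookkeeping of the fresh variable $v$: it is unconstrained in $\denpt{\tau}_\sigma$ but pinned to $\at{t}{c}$ in $\denpt{\tau'}_\sigma$, so the two sides agree on the term component and on every variable occurring in $\tau$ (equivalently, in $\Phi$), with $v$ freely chosen on one side and forced on the other. This is harmless because $v\notin\dom(\sigma)$ and the value assigned to such a variable is discarded when $\denpt{\cdot}$ is lifted to $\denstate{\cdot}$, which restricts substitutions to $\dom(\sigma)$; I would make it precise by stating the equality of $\denpt{\cdot}$ modulo assignments to fresh variables, or equivalently by fixing $v$ as a single chosen-fresh name common to all $\tau'\in T$ so that the inclusion chases through verbatim. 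Apart from this, and the mild notational overhead of PCSs with several PECs, the proof is a mechanical unfolding of the definitions.
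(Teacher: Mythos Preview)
Your proof is correct and takes essentially the same approach as the paper: unfold the p-term and node/edge denotations and use that projecting a constraint fragment to the children commutes with satisfaction (the paper simply asserts this equivalence as ``obvious''). Your treatment is in fact more careful than the paper's own proof, which works only with the term component and never addresses the fresh-variable bookkeeping you flag; your observation that the equality should be read modulo the value of the fresh $v$ (harmless once lifted to $\denstate{\cdot}$) is the right way to make the statement precise.
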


\begin{proof}

By rule \textsc{Choose-}$\square$ and the definition of p-term denotation, 
we have $\tau = \unenum{\node{\many{e}}}{\many{\fragment{c_i}{v_i}}}$,
and hence 
$$\termsofunode{\tau}=\{ t \mid t\in \dennode{\node{\many{e}}}, \pecsat{\many{\fragment{c_i}{v_i}}}{t} \}
$$
By the definition of node denotation, we have that
$$
\termsofunode{\tau}=\{ t \mid t \in \bigcup_{\edge{s}{\many{n^j}}{c} \in \many{e}} \denedge{\edge{s}{\many{n^j}}{c}},  \pecsat{\many{\fragment{c_i}{v_i}}}{t} \}
$$
Then we can apply the definition of edge denotation and copy the constraints $\many{\fragment{c_i}{v_i}}$ into each edge to get:
$$
\termsofunode{\tau}=\{ t \mid t\in \bigcup_{\edge{s}{\many{n^j}}{c} \in \many{e}} \{s(\many{t^j}) \mid t^j \in \dennode{n^j}, \pecsat{\fragment{c}{v} \cup \many{\fragment{c_i}{v_i}}}{s(\many{t^j})}\} \}
$$
Meanwhile, from the first and third premise of rule \textsc{Choose-}$\square$, we have that 
$$
T = \{s(\many{\tau_j}) \mid \edge{s}{\many{n^j}}{c} \in \many{e}, \tau_j = \unenum{n^j}{\proj(\fragment{c}{v}\cup\many{\fragment{c_i}{v_i}}, j)}\}
$$
Therefore, for each $\tau' \in T$,
$$
\termsofunode{\tau'} = \termsofunode{s(\many{\tau_j})} = \{s(\many{t^j}) \mid t^j \in \dennode{n^j}, \pecsat{\proj(\fragment{c}{v} \cup \many{\fragment{c_i}{v_i}}, j)}{t^j}\}
$$
It is obvious that, given $s(\many{t^j}), \fragment{c}{v}$, $\many{\fragment{c_i}{v_i}}$ and $\fragment{\epsilon}{\_}\not\in\many{\fragment{c_i}{v_i}}$, the constraints $\pecsat{\proj(\fragment{c}{v} \cup \many{\fragment{c_i}{v_i}}, j)}{t^j}$ and $\pecsat{\fragment{c}{v} \cup \many{\fragment{c_i}{v_i}}}{s(\many{t^j})}$ are equivalent.
Hence, the lemma holds.
\end{proof}

\begin{lemma}[Soundness of \textsc{Choose}]\label{lemma:soundness-choose}
$\sigma$ be a reachable enumeration state, and fix a $v\in\dom(\sigma)$ and a $\ctx,\tau$  such that $\sigma[v] = \ctx[\tau]$. Define $S=\left\{\sigma[v\mapsto\ctx[\tau']] \middle| \steps{\sigma}{\sigma[v\mapsto\ctx[\tau']]}\text{ by \textsc{Choose}}\right\}$. Then $\denstate{\sigma}=\bigcup_{\sigma' \in S} \denstate{\sigma'}$.
\end{lemma}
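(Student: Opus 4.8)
The plan is to reduce this to the already-proved Lemma~\ref{lemma:soundness-choose-sq} (soundness of \textsc{Choose-$\square$}) in two moves: first decompose the denotation of the focused p-term $\sigma[v]=\ctx[\tau]$ at its hole, then lift that decomposition outward through the enclosing context $\ctx$ and finally through the whole state $\sigma$. First I would identify $S$ precisely. Any step $\steps{\sigma}{\sigma'}$ by \textsc{Choose} that rewrites the $v$-component at the position $\ctx[\cdot]$ must, by the premises of \textsc{Choose}, have $v$ solved and arise from a p-term step $\steps{\tau}{\tau'}$; and in the full system $\steps{\tau}{\tau'}$ is derivable only by \textsc{Choose-$\square$} or by the deterministic \textsc{Choose-$\mu$}. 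The \textsc{Choose-$\mu$} case has a unique successor, and since $\denpt{\recnode{x}{n}}_{\sigma}=\denpt{\unfold{\recnode{x}{n}}}_{\sigma}$ by the cyclic denotation, with the constraint fragments unchanged, that case is immediate; for the remainder I focus on \textsc{Choose-$\square$}, for which $S=\{\sigma[v\mapsto\ctx[\tau']]\mid\tau'\in T\}$ where $T$ is exactly the index set of Lemma~\ref{lemma:soundness-choose-sq}.

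The next step is the local decomposition $\denpt{\tau}_{\sigma}=\bigcup_{\tau'\in T}\denpt{\tau'}_{\sigma}$. Lemma~\ref{lemma:soundness-choose-sq} gives the analogue for plain term sets; I would observe that its argument carries over to the pair-valued denotation $\denpt{\cdot}_{\sigma}$, because $\proj$ preserves not only satisfaction of a PEC but also the value at it, so the side conditions $\at{t}{c_i}=\rho(v_i)$ match up between $\tau$ and its children (with the shared fresh variable reinstating the cross-child agreement, exactly as in that proof's final argument). Using this, I would prove a \emph{context lemma} by structural induction on $\ctx$: for fixed $\sigma$, $\denpt{\ctx[\tau]}_{\sigma}$ depends on $\tau$ only through $\denpt{\tau}_{\sigma}$, and does so in a union-preserving manner. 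The base case $\ctx=\cdot$ is immediate; in the case $\ctx=s(\overline{\tau_1},\ctx'[\cdot],\overline{\tau_2})$, the clause defining $\denpt{s(\many{\tau})}_{\sigma}$ is a set-builder $\{(s(\many{t}),\rho)\mid\dots\}$ whose only dependence on the $\ctx'$-slot is a membership condition together with $\rho$-agreement, hence union-preserving there, while the siblings $\overline{\tau_1},\overline{\tau_2}$ contain no hole and (by the well-foundedness invariant discussed below) do not mention $v$, so they are unaffected. Composing yields $\denpt{\ctx[\tau]}_{\sigma}=\bigcup_{\tau'\in T}\denpt{\ctx[\tau']}_{\sigma}$.

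Finally I would lift this to $\denstate{\cdot}$. Reading the definition, $\denstate{\sigma}$ is the set of restrictions $\restrict{\rho}{\dom(\sigma)}$ of those substitutions $\rho$ such that for every $w\in\dom(\sigma)$ one has $(\rho(w),\rho)\in\denpt{\sigma[w]}_{\sigma}$. The only occurrences of $\sigma[v]$ in this condition are the $w=v$ clause and the occurrences of the variable $v$ inside the p-terms of other components, via $\denotation{v}^{PT}=\{(t,\rho)\mid(t,\rho)\in\denpt{\sigma[v]}_{\sigma},\ \rho(v)=t\}$; note that $v$ being solved excludes any further, constraint-fragment, dependence on $\sigma[v]$. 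Both kinds of occurrence are union-preserving in $\denpt{\sigma[v]}_{\sigma}$, and union-preservation propagates up the variable-reference DAG of the reachable state. Since \textsc{Choose} does not change $\dom(\sigma)$ --- the fresh variable created by \textsc{Choose-$\square$} appears only inside a constraint fragment and is quantified away by the final $\restrict{\cdot}{\dom(\sigma)}$ --- substituting $\denpt{\ctx[\tau]}_{\sigma}=\bigcup_{\tau'\in T}\denpt{\ctx[\tau']}_{\sigma}$ and distributing over the union gives $\denstate{\sigma}=\bigcup_{\tau'\in T}\denstate{\sigma[v\mapsto\ctx[\tau']]}=\bigcup_{\sigma'\in S}\denstate{\sigma'}$.

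I expect the main obstacle to be making this last lifting step rigorous, since the impredicative denotation entangles a p-term with the ambient $\sigma$, so altering $\sigma[v]$ ripples through the denotations of all ancestors of $v$. The clean way to control this is to first establish, for \emph{reachable} enumeration states, an invariant that the variable-reference relation (``$w'$ occurs as a p-term somewhere inside $\sigma[w]$'') is well-founded, with $v_\top$ at the root and every non-root variable referenced by a unique parent p-term, and then run the union-preservation argument by induction on that well-founded order; the same invariant justifies the claim used in the context lemma that $\ctx$ does not mention $v$. A minor point to dispatch is the reading of the statement when no \textsc{Choose} step is enabled at $\ctx[\tau]$: the equation presupposes that $v$ is solved and $\tau$ is steppable, and this is genuinely needed (\eg an unrestricted recursive u-node with nonempty denotation would violate it), though in the special case where $\tau$ is a $\square$-node with no transitions both sides collapse to $\emptyset$ and the equation still holds vacuously.
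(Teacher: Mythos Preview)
Your proposal is correct and follows the same three-step scaffold as the paper: invoke Lemma~\ref{lemma:soundness-choose-sq} for the local decomposition $\denpt{\tau}=\bigcup_{\tau'\in T}\denpt{\tau'}$, lift through the context $\ctx$, then lift to $\denstate{\cdot}$. You are in fact considerably more careful than the paper's own proof about the impredicative dependence of $\denpt{\cdot}_{\sigma}$ on $\sigma$ (the ripple through variable references to $v$), the well-foundedness invariant needed to justify that the siblings in $\ctx$ do not mention $v$, the fresh variable being quantified away by the domain restriction, and the \textsc{Choose-$\mu$} case --- all points the paper's proof simply elides.
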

\begin{proof}
As $\sigma(v) = \ctx[\tau]$ and
$\steps{\tau}{\tau'}$ by the second premise of rule \textsc{Choose}, 
we have that $\denpt{\tau} = \bigcup_{\sigma' \in S, \sigma'(v) = \ctx[\tau']} \denpt{\tau'}$ by \autoref{lemma:soundness-choose-sq}.
Under a fixed context $\ctx$, we also have $\denpt{\ctx[\tau]} = \bigcup_{\sigma' \in S} \denpt{\ctx[\tau']}$,
which is $\denpt{\sigma[v]} = \bigcup_{\sigma' \in S} \denpt{\sigma'[v]}$.
Hence, we can get that 
\[
\begin{aligned}
\denstate{\sigma} &= \left\{{\restrict{\rho}{\dom(\sigma)}} \bigmid (t_v, \rho) \in \denpt{\sigma[v]}, \rho(v) = t_v \right\} \\
&= \left\{{\restrict{\rho}{\dom(\sigma)}} \bigmid (t_v, \rho) \in \bigcup_{\sigma' \in S} \denpt{\sigma'[v]}, \rho(v) = t_v \right\} \\
 &= \bigcup_{\sigma' \in S} \left\{{\restrict{\rho}{\dom(\sigma)}} \bigmid (t_v, \rho) \in \denpt{\sigma'[v]}, \rho(v) = t_v \right\} \\
 &= \bigcup_{\sigma' \in S} \denstate{\sigma'}
\end{aligned}
\]
\end{proof}

\begin{lemma}[Soundness of \textsc{Suspend}]\label{lemma:soundness-suspend}
Let $\sigma$ be a reachable enumeration state, and suppose $\steps{\sigma}{\sigma'}$ by either the \textsc{Suspend-1}, \textsc{Suspend-2}, or \textsc{Subst} rules. Then $\denstate{\sigma} = \left\{\restrict{\rho}{\dom(\sigma)} \bigmid \rho \in \denstate{\sigma’}\right\}$.
\end{lemma}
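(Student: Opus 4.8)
The plan is to dispatch the three rules---\textsc{Suspend-1}, \textsc{Suspend-2}, and \textsc{Subst}---one at a time, following the template of the proof of \autoref{lemma:soundness-choose}: each step is a \emph{local} rewrite of $\sigma$, touching only a single sub-position $\ctx[\cdot]$ of one binding plus possibly one other binding, so I would first show that the denotations of the affected p-terms and bindings agree, and then fold that equality through the definition of $\denstate{\cdot}$ using compositionality of $\denpt{\ctx[\cdot]}_\sigma$ under a fixed context $\ctx$. The auxiliary bookkeeping rule \textsc{Subst} is handled separately from the two genuine \textsc{Suspend} rules.

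The key lemma powering the two \textsc{Suspend} rules is an observation about $\epsilon$-fragments: for a u-node $\unenum{n}{\fragment{\epsilon}{v'} \cup \Phi}$, any pair $(t,\rho)$ in its denotation must have $\at{t}{\epsilon} = \rho(v')$, i.e.\ $t = \rho(v')$, while the remaining side-conditions are exactly those of $\unenum{n}{\Phi}$. Thus, inside $\denstate{\cdot}$, the term occupying the position $\ctx[\cdot]$ is forced to coincide with $\rho(v')$, which is precisely the effect of putting the bare variable $v'$ at that position while binding $v'$ separately. For \textsc{Suspend-1} (where $v' \notin \dom(\sigma)$) I would argue that moving the u-node into the fresh binding $v' \mapsto \unenum{n}{\Phi}$ and leaving $v'$ behind yields the identical set of pairs $(t,\rho)$ at the position (now via $\denpt{v'}$, which demands $\rho(v') = t$ and $(t,\rho) \in \denpt{\unenum{n}{\Phi}}$); the only change is that $\dom(\sigma')$ gains $v'$, whose value is pinned to that subterm, which is why the statement restricts $\rho$ back to $\dom(\sigma)$. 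For \textsc{Suspend-2} (where $\sigma[v'] = \unenum{n'}{\Phi'}$), in $\denstate{\sigma}$ the term at the position must simultaneously satisfy the $n,\Phi$ conditions and equal $\rho(v')$, while $\rho(v')$ in turn lies in $\dennode{n'}$ and satisfies $\Phi'$; by Correctness of ECTA Intersection, $\dennode{n \intersect n'} = \dennode{n} \cap \dennode{n'}$, and $\pecsat{\Phi \cup \Phi'}{t}$ is just the conjunction of $\pecsat{\Phi}{t}$ and $\pecsat{\Phi'}{t}$, so the merged binding $v' \mapsto \unenum{n \intersect n'}{\Phi \cup \Phi'}$ in $\sigma'$ imposes exactly the same constraints, with $\dom(\sigma') = \dom(\sigma)$.

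For \textsc{Subst}, the premise $\sigma[v_2] = v_1$ together with the $\denpt{v_2}$ clause forces $\rho(v_2) = t$ with $(t,\rho) \in \denpt{\sigma[v_2]} = \denpt{v_1}$, and hence $\rho(v_1) = \rho(v_2)$ for every $\rho \in \denstate{\sigma}$. A routine induction on p-term structure then shows that $\subst{v_1}{v_2}{\cdot}$ preserves p-term denotations under any such $\rho$, so deleting the now-redundant binding $v_2 \mapsto v_1$ changes only the set of in-scope variables; the equality then holds up to re-introducing $v_2$ as an alias of $v_1$, which is exactly what $\restrict{\cdot}{\dom(\sigma)}$ records.

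The main obstacle I expect is making the congruence step fully rigorous given the impredicative, mutually circular definition of $\denpt{\cdot}_\sigma$: because \textsc{Suspend-1} and \textsc{Suspend-2} change a sub-position of $\sigma[v]$ \emph{and} the binding of $v'$ at the same time, one cannot simply invoke a single-hole context lemma as in \autoref{lemma:soundness-choose}; instead one must verify that the constraints introduced at the two sites agree on the value of $\rho(v')$ and therefore glue consistently when the per-variable filters constituting $\denstate{\cdot}$ are intersected. The freshness side-condition in \textsc{Suspend-1} and the mismatch between $\dom(\sigma)$ and $\dom(\sigma')$ absorbed by $\restrict{\cdot}{\dom(\sigma)}$ are the remaining fiddly details.
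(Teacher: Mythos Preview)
Your proposal is correct and follows essentially the same route as the paper: a case split on \textsc{Suspend-1}, \textsc{Suspend-2}, and \textsc{Subst}, with the key observation that an $\epsilon$-fragment $\fragment{\epsilon}{v'}$ forces the u-node's value to coincide with $\rho(v')$, plus Correctness of ECTA Intersection to justify the merged binding in \textsc{Suspend-2}. If anything, your outline is more careful than the paper's own argument, which leaves the second direction of the \textsc{Suspend-2} case unfinished and does not explicitly address the circularity of $\denpt{\cdot}_\sigma$ that you flag as the main obstacle.
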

\begin{proof}
To prove this lemma, we need to show that $\bigcup_{v \in \dom(\sigma)}\denpt{\sigma[v]} = \bigcup_{v \in \dom(\sigma)}\denpt{\sigma'[v]}$.
If $\sigma[v] = \sigma'[v]$, we immediately get $\denpt{\sigma[v]} = \denpt{\sigma'[v]}$ and the lemma holds.
If $\sigma[v] \not= \sigma'[v]$, we split it into three cases:

\mypara{Case I: \textsc{Suspend-1}}
By rule \textsc{Suspend-1}, we have
$\sigma[v] = \ctx[\unenum{n}{\fragment{\epsilon}{v'} \cup \Phi}]$,
and hence $\denpt{\sigma[v]} = \denpt{\ctx[\unenum{n}{\fragment{\epsilon}{v'} \cup \Phi}]} $.
%
Meanwhile,
we have that $\sigma'[v] = \ctx[v'], \sigma'[v'] = \unenum{n}{\Phi}$,
and hence $\denpt{\sigma'[v]} = \denpt{\ctx[v']} = \denpt{\ctx[\unenum{n}{\Phi}]} $ 
by the denotation of p-terms and nodes.
As we know that $\sigma'[v'] = \unenum{n}{\Phi}$, $\fragment{\epsilon}{v'}$ is trivially satisfied.
Hence, we conclude that $\denpt{\sigma[v]} = \denpt{\sigma'[v]}$.

\mypara{Case II: \textsc{Suspend-2}}
By rule \textsc{Suspend-2}, we have
$\denpt{\sigma[v]} = \denpt{\ctx[\unenum{n}{\fragment{\epsilon}{v'} \cup \Phi}]} $
and there exists another $v' \in \dom(\sigma)$ such that $\sigma[v'] = \unenum{n}{\Phi}$.
On the right hand side, 
$\denpt{\sigma'[v]} = \denpt{\ctx[\unenum{n \intersect n'}{\Phi \cup \Phi'}}] $.
By monotonicity, we have $\denpt{\sigma'[v]} \subseteq \denpt{\sigma[v]}, \denpt{\sigma'[v']} \subseteq \denpt{\sigma[v']}$.
So we get $\bigcup_{v \in \dom(\sigma)}\denpt{\sigma'[v]} \subseteq \bigcup_{v \in \dom(\sigma)}\denpt{\sigma[v]}$.

Next, we show $\bigcup_{v \in \dom(\sigma)}\denpt{\sigma[v]} \subseteq \bigcup_{v \in \dom(\sigma)}\denpt{\sigma'[v]}$.
Consider a pair $(\ctx[t], \rho) \in \denpt{\ctx[\unenum{n}{\fragment{\epsilon}{v'} \cup \Phi}]}$ and another pair $(t', \rho') \in \denpt{\unenum{n'}{\Phi'}}$.
By the first two premises of rule \textsc{Suspend-2},
we have $t=v'$

\mypara{Case III: \textsc{Subst}}
Trivial because this rule only propagates equality constraints and merges variables. 

\end{proof}

\begin{lemma}[Completeness of \textsc{Choose} and \textsc{Suspend}]
\label{lemma:rule-completeness}
Let $\sigma$ be a reachable enumeration state, and suppose $\sigma$ is irreducible. Then either $\sigma$ is fully enumerated, or $\denstate{\sigma}=\emptyset$.
\end{lemma}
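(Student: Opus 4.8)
The plan is to prove the contrapositive: let $\sigma$ be a reachable enumeration state that is irreducible yet \emph{not} fully enumerated, and show $\denstate{\sigma}=\emptyset$. Since $\sigma$ is not fully enumerated, it contains a restricted u-node $\unenum{n}{\Phi}$ (with $\Phi\neq\emptyset$) at some position of $\sigma[w]$ for a variable $w\in\dom(\sigma)$; recall u-nodes are leaves of p-terms, hence reachable by a one-hole context. The strategy is a dichotomy: either some rule applies to $\sigma$ (\textsc{Choose}, \textsc{Suspend-1}, \textsc{Suspend-2}, \textsc{Subst}, or \textsc{Choose}-$\mu$), contradicting irreducibility, or else the impossibility of applying any rule forces an ``occurs-check'' contradiction on every candidate substitution, so $\denstate{\sigma}$ is empty.

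First I would record a handful of invariants of reachable states, each proved by induction on the derivation $\reduces{\sigma_0}{\sigma}$: (i) $\dom(\sigma)$ is finite, since it starts as a singleton and grows by at most one per step; (ii) no rule introduces a constraint fragment mentioning a \emph{previously existing} variable --- \textsc{Choose}-$\square$ only creates fragments over the fresh variable it introduces, while the \textsc{Suspend} rules and \textsc{Subst} merely relocate, merge, or rename existing fragments --- so, counting solvedness up to the identifications induced by bare-variable bindings, a variable that becomes solved stays solved; (iii) hence in an irreducible reachable state the value $\sigma[v]$ of an \emph{unsolved} $v\in\dom(\sigma)$ is a u-node: it is not an application (that arises only through \textsc{Choose}, which demands $v$ be solved at the time, and therefore ever after by (ii)), and it is not a bare variable (else \textsc{Subst} would apply). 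Invariant (ii) is the crux, and reconciling it with \textsc{Subst} is the single most delicate step.

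Now I would analyze the restricted u-node $\unenum{n}{\Phi}$. If $n=\nodebot$, then $\dennode{\nodebot}=\emptyset$ propagates through the enclosing context and $\denstate{\sigma}=\emptyset$, as required. If $\Phi$ contains an $\epsilon$-fragment $\fragment{\epsilon}{v'}$, then $v'$ is mentioned, hence unsolved, so by (iii) either $v'\notin\dom(\sigma)$ (and \textsc{Suspend-1} applies) or $\sigma[v']$ is a u-node (and \textsc{Suspend-2} applies) --- contradicting irreducibility; thus no u-node in $\sigma$ carries an $\epsilon$-fragment. Every remaining restricted u-node is then either $\unenum{\recnode{x}{m}}{\Phi}$, to which \textsc{Choose}-$\mu$ applies as a $\tau$-step (since $\Phi\neq\emptyset$), or $\unenum{\node{\many{e}}}{\Phi}$ with $\many{e}\neq\emptyset$ and no $\epsilon$-fragment, to which \textsc{Choose}-$\square$ applies as a $\tau$-step (picking any $e\in\many{e}$). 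In either case a $\tau$-step is available at the u-node, so were its enclosing variable $w$ solved, \textsc{Choose} would fire on $\sigma[w]$; hence $w$ is unsolved.

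Finally I would build a dependency chain and close by an occurs check. Starting from $w_1:=w$, which is unsolved and in $\dom(\sigma)$: being unsolved, $w_1$ is mentioned by a fragment $\fragment{c_1}{w_1}$ --- with $c_1\neq\epsilon$, $\epsilon$-fragments having been excluded --- on a (restricted) u-node inside some $\sigma[w_2]$, $w_2\in\dom(\sigma)$; re-running the case analysis on that u-node either yields $\denstate{\sigma}=\emptyset$ (the $\nodebot$ case) or makes $w_2$ unsolved. Iterating produces an infinite sequence $w_1,w_2,w_3,\dots$ of unsolved variables drawn from the finite set $\dom(\sigma)$, so $w_j=w_k$ for some $j<k$; reading off the segment from $w_j$ back to $w_j$ gives a cycle in which, for each consecutive pair, the earlier variable is mentioned by a nonempty-PEC fragment on a u-node inside the later variable's component. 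Suppose $\rho\in\denstate{\sigma}$. Unfolding the denotation of each such component (appendix), a fragment $\fragment{c}{u}$ on a u-node at path $p$ of $\sigma[u']$ forces $\at{\rho(u')}{q}=\rho(u)$ where $q$ extends $p$ by a nonempty element of $c$; so $q\neq\epsilon$ and $\rho(u)$ is a \emph{proper} subterm of $\rho(u')$. Chaining this around the cycle makes $\rho(w_j)$ a proper subterm of itself, impossible for a finite term in $\closedtermsof{\Sigma}$; hence $\denstate{\sigma}=\emptyset$. The principal obstacle, besides the bookkeeping invariants, is the conceptual observation that a set of mutually blocking constrained components is not unreachable but simply \emph{unsatisfiable} --- the name-free occurs-check phenomenon already seen in \autoref{thm:consistency-prefix-free}.
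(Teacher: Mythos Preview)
Your proof is correct and follows essentially the same route as the paper's: a case analysis on the restricted u-node (handling $\epsilon$-fragments via \textsc{Suspend}, bare variables via \textsc{Subst}, solved variables via \textsc{Choose}), culminating in an occurs-check cycle among mutually unsolved variables that forces $\denstate{\sigma}=\emptyset$. Your version is in fact more careful than the paper's---you explicitly isolate and argue the reachability invariants (notably that an unsolved variable in $\dom(\sigma)$ must map to a u-node) which the paper merely asserts, and you also treat the cyclic case via \textsc{Choose}-$\mu$, which the paper's proof omits.
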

\begin{proof}
Suppose $\sigma$ is not fully enumerated. Then, by definition, there must be some $v$ such that $\sigma[v]$ contains a subterm of the form $\tau=\unenum{n}{\Phi}$. We proceed by cases:

\begin{enumerate}
    \item There is a $v'$ such that $\fragment{\epsilon}{v'}\in \Phi$: If $v'\notin \dom{\sigma}$, then \textsc{Suspend-1} applies. Otherwise, $v'$ is not solved, and so therefore we must have $\sigma[v']=\unenum{n'}{\Phi'}$ for some $n',\Phi'$ because $\sigma$ is reachable, and \textsc{Suspend-2} applies.
    \item There are $v_1, v_2$ such that $\sigma[v_2]=v_1$. Then the \textsc{Subst} rule applies.
    \item Otherwise, one of the following subcases applies:
    \begin{enumerate}
        \item \textbf{(Choose case)} If $v$ is solved, then either \textsc{Choose} applies, or $n=\nodebot$ and thus $\denstate{\sigma}=\emptyset$.
        \item \textbf{(Occurs-check failure case)} Suppose $v$ is unsolved, and, furthermore, there is no alternate solved $v$ containing a u-node. Then there is a sequence of variables $v_1, \dots, v_k$  and $v_{k+1}=v_1$ such that $\sigma(v_i)$ references $\sigma(v_{i+1})$ for $1 \le i \le k$. Each $v_i$ has a corresponding path $p_i\neq\epsilon$ (because none of the previous cases applied) such that $\rho \in \denstate{\sigma}$ must satisfy $\at{\rho(v_i)}{p_i}=\rho(v_{i+1})$ for $1 \le i \le k$. Since all terms are finite, this is unsatisfiable, and so $\denstate{\sigma}=\emptyset$.
    \end{enumerate}
    
\end{enumerate}
\end{proof}

\themterminationenuremation*
\begin{proof}
For paths $p_1, p_2$ let $p_1 \sqsubseteq p_2$ if $p_1=p_2$ or $\text{length}(p_1) < \text{length}(p_2)$, where $\text{length}(p)$ denotes normal list length. For ECTA nodes $n_1, n_2$, let $n_1 \sqsubseteq n_2$ if $n_1=n_2$ or $\text{depth}(n_1) < \text{depth}(d_2)$, where $\text{depth}(n)$ denotes normal DAG depth. Order integers normally. Order multisets of paths, ECTA nodes, and integers by their respective multiset orderings \cite{dershowitz1987termination}. All counts regard the ECTA as a tree, not as a graph.

We define the height of a node to be the length of the longest path from that node to a leaf transition. Now define $F(\sigma)=(A,B,C)$, where $A$ is the multiset of heights of u-nodes, $B$ is the multiset of paths contained in $\sigma$, and $C$ the number of pairs of variables $v_1, v_2$ such that $\sigma[v_2]=v_1$. We shall show that each of the rules decrease $F(\sigma)$ under the lexicographic ordering. Since this ordering is well-founded by construction, that completes the proof of termination.

\begin{enumerate}
    \item \textsc{Choose}: Decreases $A$
    \item \textsc{Suspend-1}: Leaves $A$ unchanged, decreases $B$.
    \item \textsc{Suspend-2}: Decreases $A$ (by decreasing the number of u-nodes).
    \item \textsc{Subst}: Leaves $A$ and $B$ unchanged, decreases $C$.
\end{enumerate}
\end{proof}

\thmcorrectnessenumeration*
\begin{proof}
Given an ECTA node $n$, the enumeration algorithm first constructs the initial state $\sigma_0=[v_\top \mapsto \unenumuc{n}]$. It is trivial from the definitions that $\denstate{\sigma_0}=\left\{[v_\top \mapsto t] \bigmid t \in \denotation{n}^\text{N} \right\}$, i.e.: that $\denstate{\sigma_0}$ is equivalent to $\denotation{n}^\text{N}$. The theorem then follows immediately from Lemmas \ref{lemma:soundness-choose-sq}, \ref{lemma:soundness-choose},  \ref{lemma:soundness-suspend}, and \ref{lemma:rule-completeness} and Theorem \ref{thm:enumeration-termination}
\end{proof}

\section{Optimizations and Extra Implementation Notes}
\label{app:optimization}

\mypara{Intersection}
The definition of $n_1 \intersect n_2$ in \secref{sec:acyclic} effectively encodes what is known in the databases literature as a \emph{nested-loop join},  performing $|\many{e_1}| * |\many{e_2}|$ transition intersections.
The \ectalibrary library instead performs a more efficient kind of join, known as \emph{simple hash join}.  Further, it replaces $C_1 \cup C_2$ with $\closure(C_1 \cup C_2)$ when intersecting. 

\subsection{Optimized Reduction}
\label{app:optimized-reduction}

For the example of \autoref{fig:overview-ex1}, 
instead of computing a single shared automaton $n^*$,
an alternative is to compute two separate $n^*_1 = \at{n}{\texttt{arg.type}}$, $n^*_2 = \at{n}{\texttt{fun.targ}}$, 
and return $\intersectatpath{\left(\intersectatpath{n}{\texttt{fun.targ}}{n^*_1}\right)}{\texttt{arg.type}}{n^*_2}$. 
At a first glance, this seems less efficient, since it requires computing multiple $n^*_i$.
Surprisingly, our experience shows that this alternative reduction algorithm actually performs better:
the reason is that intersecting nodes with a single shared $n^*$ creates a lot of redundant transitions---%
transitions $e_1$ and $e_2$ into the same node where $\denedge{e_1} \subseteq \denedge{e_2}$.
We now detail this alternative reduction algorithm, on which \ectalibrary's implementation is actually based. It produces a semantically-equivalent ECTA, but with fewer junk edges.

\begin{definition}[Optimized Reduction]

Let PEC $c=\{p_1=\dots=p_k\}$. Let $c_{\overline{i}}=\{p_1, \dots, p_{i-1}, p_{i+1}, \dots, p_k\}$. Let $e$ be a transition, and define

\[
  m_i = \bigintersect_{p\in c_{\overline{i}}} \at{e}{p}
\]

Then define

\begin{align*}
    \text{reduce}(e, c) = \intersectatpath{\intersectatpath{e}{p_1}{m_1}\dots}{p_k}{m_k}
\end{align*}

\end{definition}

Consider running the optimized reduction algorithm to reduce the root $w$ transition of \figref{fig:ex-junk-term}. This involves intersecting each of the two transitions under $q_a$ by $q_b\intersect q_c$, each transition under $q_b$ by $q_a \intersect q_c$, and each transition under $q_c$ by $q_a \intersect q_b$. It would be semantically equivalent to intersect all three of them instead by $q_a \intersect q_b  \intersect q_c$, as is done by the basic reduction algorithm. Counterintuitively, even though this reduces the number of distinct intersections that must be performed, doing so actually degrades performance.

\begin{wrapfigure}{r}{0.40\textwidth}
    \centering
    \includegraphics[scale=0.4]{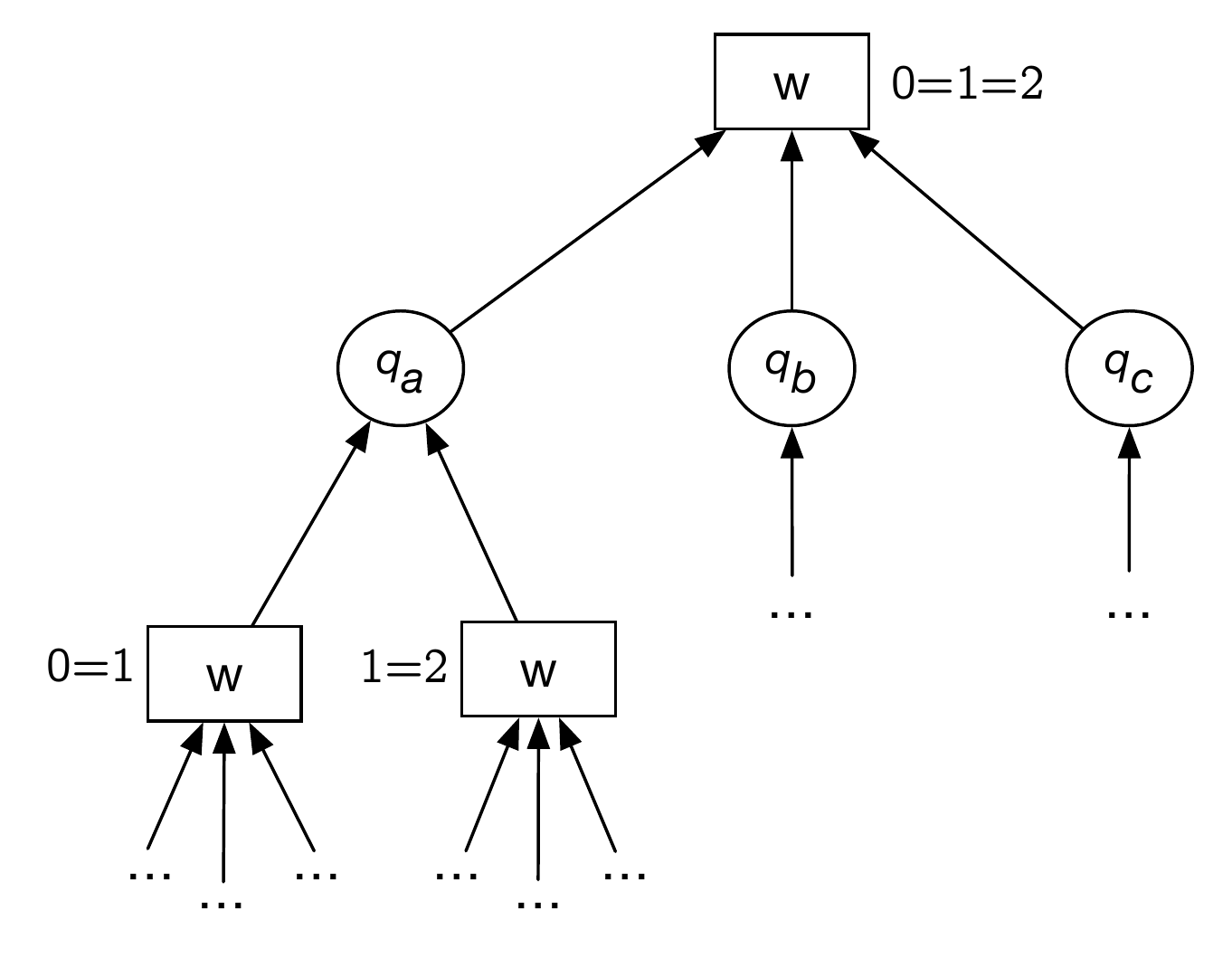}
    \caption{}
    \label{fig:ex-junk-term}
\end{wrapfigure}

The reason is that intersecting $q_a \intersect q_a$ produces three child transitions: one with constraint $\{0=1\}$, one with constraint $\{1=2\}$, and a ``junk transition" with constraint $\{0=1=2\}$. Although this last transition is redundant with the first two, detecting and eliminating these redundant transition is actually rather expensive. Particularly for the type-driven synthesis domain (\secref{sec:applications:hplus}), where a large proportion of transitions have the same symbol (the $\rightarrow$ function type symbol) with different constraints, the creation of a redundant transition can proliferate into many more redundant transitions upon reducing other constraints. Thus, \ectalibrary takes the ``slow" route of computing $q_a \intersect q_b$, $q_a \intersect q_c$, and $q_b \intersect q_c$ separately.


\subsection{Enumeration}
\label{app:optimization:enumeration}

\begin{remark}
The requirement that \textsc{Choose} only runs on solved variables is sometimes too strict. For example, it means that, in the SAT domain, a value cannot be chosen for a variable until all clauses containing that variable have been selected. But the requirement can be relaxed by replacing the intersection of ECTA nodes with a compatibility check between partially-enumerated terms.
\end{remark}

\begin{remark}
The \ectalibrary implementation does all available \textsc{Suspend} and \textsc{Subst} steps on a node simultaneously. It uses a union-find data structure to track variables that have been merged, avoiding actually performing many substitutions.
\end{remark}

\begin{remark}
\label{remark:ordering-performance}
The intersections in \textsc{Subst-$2$} may result in having $\sigma[v]=\unenum{\nodebot}{\Phi}$ for some $v$. Its denotation is hence $\emptyset$, and this entire branch of computation may be pruned. This is the reason that node ordering affects performance.
\end{remark}

\subsection{Cyclic ECTAs}

The \ectalibrary implementation uses a variant of the structured graph representation  \cite{oliveira2012functional}. Recursive nodes are created and accessed as higher-order abstract syntax, but nodes are stored in a first-order fashion. This representation is efficient and allows for compression and fast equality-checks via hash-consing.

\end{document}